\numberwithin{equation}{section}
\newtheorem{theorem}{Theorem}[section]
\newtheorem{lemma}[theorem]{Lemma}
\newtheorem{corollary}[theorem]{Corollary}
\newtheorem{proposition}[theorem]{Proposition}
\allowdisplaybreaks \numberwithin{equation}{section}
\newcommand{\vek}[1]{\boldsymbol{#1}}
\newcommand{\weglassen}[1]{}
\renewcommand{\imath}{\mathrm{i}}
\renewcommand{\d}{\mathrm{d}}
\begin{document}

\title[Hyperelliptic integrals and Ho\v{r}ava-Lifshitz black hole space-times]
{Inversion of a general hyperelliptic integral and particle motion
in Ho\v{r}ava-Lifshitz black hole space-times}

\author{Victor Enolski$^{1,2,5}$, Betti Hartmann$^3$, Valeria Kagramanova$^4$, Jutta Kunz$^4$, Claus L\"ammerzahl$^{5,4}$, Parinya Sirimachan$^3$}
\affiliation{
$^1$ Hanse-Wissenschaftskolleg (HWK), 27733 Delmenhorst, Germany \\
$^2$ Institute of Magnetism, 36-b Vernadsky Blvd, Kyiv 03142, Ukraine \\
$^3$ School of Engineering and Science, Jacobs University Bremen, 28759 Bremen, Germany \\
$^4$ Institut f\"ur Physik, Universit\"at Oldenburg,
D--26111 Oldenburg, Germany
\\
$^5$ ZARM, Universit\"at Bremen, Am Fallturm,
D--28359 Bremen, Germany}

\email{V.Z.Enolskii@ma.hw.ac.uk}

\email{b.hartmann@jacobs-university.de}

\email{va.kagramanova@uni-oldenburg.de} %

\email{jutta.kunz@uni-oldenburg.de} %

\email{laemmerzahl@zarm.uni-bremen.de}

\email{p.sirimachan@jacobs-university.de}

\date\today

\begin{abstract}
The description of many dynamical problems like the particle motion in higher dimensional spherically and axially symmetric space-times is reduced to the inversion of hyperelliptic integrals of all three kinds. The result of the inversion is defined locally, using the algebro-geometric techniques of the standard Jacobi inversion problem and the foregoing restriction to the $\theta$-divisor. For a representation of the hyperelliptic functions the Klein--Weierstra{\ss} multivariable $\sigma$-function is introduced. It is shown that all parameters needed for the calculations like period matrices and abelian images of branch points can be expressed in terms of the periods of holomorphic differentials and $\theta$-constants. The cases of genus two, three and four are considered in detail. The method is exemplified by the particle motion associated with genus one elliptic and genus three hyperelliptic curves. Applications are for instance solutions to the geodesic equations in the space-times of static, spherically symmetric Ho\v{r}ava-Lifshitz black holes.

\end{abstract}

\maketitle


\section{Introduction}\label{sec:introduction}

\subsection{The mathematical problem}

Various problems of physics are reduced to the inversion of a hyperelliptic integral. Namely, let
\begin{equation}
y^2=4x^{2g+1}+\lambda_{2g}x^{2g}+\ldots+\lambda_0 \label{curve}
\end{equation}
be a hyperelliptic curve $X_g$ of genus $g$ with one branch point at infinity realized as a two-sheeted covering over the extended complex plane. A point $P \in X_g$ has coordinates $P=(x,y)$, where the sign of the second coordinate $y$ indicates the chosen sheet on a Riemann surface. Let $\mathcal{R}(x,y)$ be a rational function of its arguments $x$ and $y$. We consider here the problem of the inversion of an abelian integral 
\begin{equation}
\int_{x_0}^x \mathcal{R}(x,y) \mathrm{d}x=t \label{inversion1} 
\end{equation}
resulting in a function $x(t)$ which is a function of the complex variable $t$.
The integral~\eqref{inversion1} can be decomposed by routine algebraic operations to \begin{equation}
\mathcal{E}(x)-\mathcal{E}(x_0)+
\sum_{k=1}^g a_k \int_{x_0}^x \mathrm{d} u_k + \sum_{k=1}^g b_k \int_{x_0}^x \mathrm{d} r_k
+\sum_{k=1}^n c_k \int_{x_0}^x \mathrm{d} \Omega_{\alpha_k,\beta_k} = t \ , \label{inversion2}
\end{equation}
where $\mathcal{E}(x)$ is an elementary function including logarithms and rational functions,
$a_k$, $b_k$, $c_k$ are certain constants and $\mathrm{d}u_k,\mathrm{d}r_k$ and  $\mathrm{d}\Omega_{\alpha_k,\beta_k}$ are differentials of the first, second and third kind, respectively. Namely, $\mathrm{d}u_k$ are holomorphic differentials,  $\mathrm{d}r_k$ are meromorphic differentials of the second kind with a unique pole of the order $2g-2k+2$, and $\mathrm{d}\Omega_{\alpha_k,\beta_k}$ are meromorphic differentials of the third kind with first order poles in the points $\alpha_k$ and $\beta_k$ and residues $\pm1$ in the poles. We suppose that in the case considered there are $n\geq 0$ differentials of the third kind.

It is well known that only in the case of elliptic curves, i.e. for $g=1$, the correspondence $x \leftrightarrow t$ is one-to-one and the aforementioned inversion problem can be
solved in terms of single-valued elliptic functions. In the case of
higher genera, $g>1$, a one-to-one correspondence is achieved between the symmetrized products
of curves $X_g\times\ldots\times X_g$ and a multi-dimensional complex space, the Jacobi variety. Single-valued functions in this case appear to be multi-periodic functions of many complex variables, called abelian functions. These ideas already developed by Jacobi had led Riemann
to the concept of the Riemann surfaces, to the introduction of multi-dimensional $\theta$-functions, and the formulation of his celebrated theorems.

Moreover it is also known that the function $x(t)$ becomes single-valued on the infinitely-sheeted Riemann surface. In particular, when abelian integrals reduce to elliptic integrals, this Riemann surface becomes finitely-sheeted~\cite{GS07}. For the special case of a genus two hyperelliptic curve a detailed construction of such an infitely-sheeted Riemann surface surface was given by Fedorov and G\'omes-Ulate~\cite{fg07}. It also follows from~\cite{fg07} that $x(t)$ is well defined on the complex plane from which an infinite lattice of polygons with $(4g-4)$-edges called ``windows'' is extracted. In our investigation we are considering a function $x(t)$ defined on the complement to these windows and suppose that the integration paths in~\eqref{inversion2} never intersect these windows.

Our approach to the inversion of hyperelliptic integrals is based on the well developed theory of hyperelliptic abelian functions and on various relations between the $\theta$-functions and $\theta$-constants. We are describing the function $x(t)$ as the restriction of an abelian function, which can be expressed in terms of symmetric functions of the divisor in the associated Jacobi inversion problem, to the one-dimensional stratum of the $\theta$-divisor. We are implementing the Klein-Weierstra{\ss} realization of hyperelliptic functions in terms of multi-variative $\sigma$-functions that represent a natural generalization of the standard Weierstra{\ss} $\sigma$-function to hyperelliptic curves of higher genera.

This paper continues our recent work~\cite{ehkkl11} where the inversion of hyperelliptic holomorphic integrals was considered. The novelty of our approach is the simultaneous consideration of all three kinds of abelian integrals from a unified viewpoint. At the heart of our method lie algebraic expressions of the symmetric bi-differential of the second kind involving the explicitly given Kleinian 2-polar. The inversion procedure involves the theta-constant relations (the Thomae and Bolza formulae) that manifest the link between branch points and Riemann period matrices. We are also developing a computer algebra procedure that allows to use Maple/algcurves software without explicit knowledge of a homology basis encrypted in the Tretkoff-Tretkoff algorithm. This will enable us to find all needed quantities like the vector of Riemann constants, the period matrices of the first and second kind, as well as the correspondence between branch points and $\theta$-characteristics. We emphasize that in this investigation we concentrate on the algebraic side of the derivation The function $x(t)$ obtained in that way is defined only locally while its analytic continuation never intersects the infinite set of cuts introduced in~\cite{fg07}. 

Another approach to the problem of inversion of integrals of the second and third kind that is based on the generalized $\theta$-function goes back to Clebsch and Gordan~\cite{cg866} and was developed in~\cite{fed99, bf08}. Here we do not discuss generalized Jacobians, and we plan to make a comparison between these two methods in another publication.

\subsection{Physical motivation}

The mathematical results described in this paper have direct applications to the solution
of the geodesic equation in certain Ho\v{r}ava-Lifshitz black hole space-times. The Ho\v{r}ava-Lifshitz theory~\cite{horava1,horava2} is an alternative
gravity theory that is powercountable renormalizable. The basic idea is that only higher spatial derivative terms are added, while higher temporal derivatives which would lead to ghosts are not considered. This leads unavoidably to the breaking of Lorentz invariance at short distances. Static and spherically symmetric black hole solutions have been studied in this theory~\cite{ks,lu_mei_pope,park}. Considering the Ho\v{r}ava--Lifshitz theory as a modification of General Relativity, one can study the solutions of the geodesic equation in the Ho\v{r}ava-Lifshitz black hole space-times. In this paper, we are mainly interested in one of
the black hole space--times given in \cite{lu_mei_pope}.

The mathematical techniques described in this paper can not only be used to solve analytically the geodesic equation in the Ho\v{r}ava--Lifshitz space--time considered here. The differentials of the first and third kind with underlying polynomial curves of arbitrary genus appear in the geodesic equations in many general relativistic space--times. The holomorphic differentials appear in the equations for the $r$- and $\vartheta$-coordinates, while the differentials of the third kind appear in the equations for the $\varphi$- and $t$-coordinates. This is also the case e.g.~in the geodesic equations for neutral particles in Taub-NUT~\cite{KKHL10} space-times and in the space-times of Schwarzschild and Kerr black holes pierced by a cosmic string~\cite{Hackmannetal10}, as well as for charged particles in the Reissner-Nordstr\"om~\cite{GK10} space-time, where elliptic integrals of the first and third kind appear. They also appear in the Schwarzschild--de Sitter \cite{HackmannLaemmerzahl08a,HackmannLaemmerzahl08} and Kerr--de Sitter  space-times~\cite{Hackmannetal010} as well as in generalized black hole Pleba\'nski--Demia\'nski space-times in 4 dimensions~\cite{Hackmannetal09} with underlying hyperelliptic curves of genus two in the geodesic equations. Also in the higher dimensional space--times of Schwarzschild, Schwarzschild--de Sitter, Reissner-Nordstr\"om and Reissner-Nordstr\"om-de Sitter~\cite{Hackmannetal08,ehkkl11} this powerful mathematics of the theory of hyperelliptic functions of higher genera is successfully applicable. Geodesics in  higher dimensional axially symmetric space-times, the Myers-Perry space-times, are integrated by the hyperelliptic functions of arbitrary genus as well~\cite{ehkkl11}. In~\cite{ehkkl11} the integration of holomorphic integrals for any genus of the underlying hyperelliptic polynomial curve has been presented. Here we expand our considerations and present the solution for the integrals of the third kind for arbitrary genus as well.

\subsection{Outline of the paper}

The paper is organized as follows. Section~\ref{sec:hyfu} represents a short introduction to the
theory of hyperelliptic functions that is adjusted to the aim of the paper. In this section we develop the $\sigma$-functional realizations of hyperelliptic functions. The key-formula in Lemma~\ref{lemma:zeta-formula} relates the integral of the second kind to the $\zeta$-function and the $\mathfrak{Z}$-vector. In Section~\ref{sec:invhige} we consider the inversion of the holomorphic integral, integrals of the second kind and third kind, and also their arbitrary combination. Section~\ref{sec:coalsume} shows that already developed means of computer algebra, like Maple/algcurves, are sufficient to compute all period matrices relevant to the $\sigma$-functional approach. As a particular feature the explicit knowledge of the homology basis ciphered in the software is not necessary.  Sections~\ref{sec:curvesgenus2}-\ref{sec:curvesgenus4} exemplify the developed method in the case of genus two, three and four, correspondingly. Section~\ref{sec:applHL} is devoted to the application of the developed method to the problem of geodesic motion in Ho\v{r}ava-Lifshitz black hole space-times. We conclude in Section~\ref{sec:conclusions}.

\section{Hyperelliptic abelian functions}\label{sec:hyfu}

\subsection{Abelian differentials and their periods}

We first introduce a canonical homology basis of cycles $(\mathfrak{a}_1,\ldots, \mathfrak{a}_g;\mathfrak{b}_1,\ldots, \mathfrak{b}_g)$, $\mathfrak{a}_i\cap \mathfrak{a}_j=\mathfrak{b}_i\cap \mathfrak{b}_j =\emptyset $, $\mathfrak{a}_i\cap \mathfrak{b}_j=-\mathfrak{b}_i\cap \mathfrak{a}_j =\delta_{ij}$, where $\delta_{ij}$ is the Kronecker symbol and $\cap$ denotes the intersection of cycles. We denote by $\mathrm{d}\boldsymbol{u}(P)=(\mathrm{d}u_1(P),\ldots,\mathrm{d}u_g(P))^T$ the basis set of holomorphic differentials (of the first kind)
\begin{equation} \mathrm{d}u_i=\frac{x^{i-1}}{y}\mathrm{d}x,\quad j=1,\ldots,g \, ,  \label{holomorphicdiff}
\end{equation}
and by $\mathrm{d}\boldsymbol{r}(P)=(\mathrm{d}r_1(P),\ldots,\mathrm{d}r_g(P))^T$ the associated meromorphic differentials (of the second kind) with a unique pole at infinity,
\begin{equation} \mathrm{d}r_i=\sum_{k=i}^{2g+1-i}(k+1-i)\lambda_{k+1+i}\frac{x^{k}}{4y}\mathrm{d}x,\quad i=1,\ldots,g \, ,\label{meromorphicdiff}
\end{equation}
where the coefficients $\lambda_i$ are as in Eq.~\eqref{curve}. These canonical holomorphic differentials $\mathrm{d} \boldsymbol{u}$ and associated meromorphic differentials $\mathrm{d} \boldsymbol{r}$ are chosen in such a way that their $g \times g$ period matrices in the fixed homology basis
\begin{align} \label{periods}\begin{split}
 2\omega  &= \Bigl(\oint_{{\mathfrak a}_k} \mathrm{d} u_i\Bigr)_{i,k=1,\ldots,g}, \qquad
 2\omega' = \Bigl(\oint_{{\mathfrak b}_k} \mathrm{d} u_i\Bigr)_{i,k=1,\ldots,g} \\
 2\eta    &= \Bigl(-\oint_{{\mathfrak a}_k}\mathrm{d} r_i\Bigr)_{i,k=1,\ldots,g} ,\qquad
 2\eta'   = \Bigl(-\oint_{{\mathfrak b}_k}\mathrm{d} r_i\Bigr)_{i,k=1,\ldots,g}\end{split}
\end{align}
satisfy the generalized Legendre relation
\begin{equation}
MJM^T=-\frac{\imath\pi}{2} J
 \label{Legendre}
\end{equation}
with
\begin{equation} \label{defM}
M = \begin{pmatrix}
\omega & \omega' \\ \eta & \eta' \end{pmatrix} \, , \qquad
J = \begin{pmatrix} 0_g & -1_g \\ 1_g  & 0_g \end{pmatrix} \, ,
\end{equation}
where $0_g$ and $1_g$ are the zero and unit $g \times g$--matrices. 

The differential of the third kind  $\Omega_{P_1,P_2}(P)$ with poles at {\em finite} points $P_1=(a_1,y_1)$ and $P_2=(a_2,y_2)$ and residues $+1$ and $-1$, respectively, can be given in the form\footnote{One can add an arbitrary combination of holomorphic differentials. However, we take this form as the most simple one which is sufficient for the following derivations.}
\begin{equation}
\Omega_{P_1,P_2}(P)=\frac{y+y_1}{2(x-a_1)}\frac{\mathrm{d}x}{y}
-\frac{y+y_2}{2(x-a_2)}\frac{\mathrm{d}x}{y}.\label{thirdkind}
\end{equation}
If the poles $P_1,P_2$ have the same $x$--coordinate and lie on different sheets, i.e.
$P_{1}=(a,y(a))$ and $P_{2}=(a,-y(a))$, then~\eqref{thirdkind} takes the form
\begin{equation}
\Omega_{P_1,P_2}(P)=\frac{y(a)}{x-a}\frac{\mathrm{d}x}{y}.\label{thirdkind1}
\end{equation}
The differentials $\mathrm{d}u_k$, $\mathrm{d}r_k$  and $\Omega_{P_1,P_2}(P)$ given above describe the entries in the relation (\ref{inversion2}).

We introduce the fundamental bi-differential $\Omega(Q,S)$ on $X_g\times X_g$ which is uniquely defined by the following conditions:
\begin{enumerate}
\item  It is symmetric, $\Omega(Q,S)=\Omega(S,Q)$. 
\item It has the poles along the diagonal $Q=S$, namely, if $\xi(Q)$ and $\xi(S)$ are local coordinates of the points $Q$ and $S$ in the vicinity of the point $P$ ($\xi(P)=0$) then the following expansion is valid
\begin{equation}
\Omega(Q,S) = \frac{\mathrm{d}\xi(Q)\mathrm{d}\xi(S)}{(\xi(Q)-\xi(S))^2} + \sum_{m,n\geq 1} \Omega_{mn}(P) \xi(Q)^{m-1}\xi(S)^{n-1}\mathrm{d}\xi(Q) \mathrm{d}\xi(S) \, , \label{omegapole}
\end{equation}
where $\Omega_{mn}(P)$ are holomorphic in $P$.
\item It is normalized such that
\begin{equation}
\oint_{\mathfrak{a}_j}\Omega(Q,S)=0,\qquad j =1, \ldots, g.\label{omeganorm}
\end{equation}
From \eqref{omeganorm} and the bilinear Riemann relation (see, e.g., \cite{bbeim94, ghmt08} for details) it follows that the $\mathfrak{b}$--periods of
$\Omega(Q,S)$ are
\begin{equation}
\oint_{\mathfrak{b}_j}\Omega(Q,S)=2\imath\pi \mathrm{d} v_j(S),\qquad j =1, \ldots, g \, , \label{bperiods}
\end{equation}
where $\mathrm{d} \boldsymbol{v} (S)= (\mathrm{d} v_1(S), \ldots, (\mathrm{d} v_g(S)  )^T  = (2\omega)^{-1} \mathrm{d} \boldsymbol{u}(S)$ is the vector of normalized holomorphic differentials. 
\end{enumerate}

\begin{figure}[t]
\begin{center}
\unitlength 0.6mm \linethickness{0.4pt}
\begin{picture}(150.00,80.00)
\put(-11.,33.){\psline{*-*}(0.5,0)(0,0)}
\put(-10.,29.){\makebox(0,0)[cc]{$e_1$}}
\put(0,29.){\makebox(0,0)[cc]{$e_2$}}
\put(-6.7,33.){\black\oval(20,30.)}
\put(-12.5,16){\makebox(0,0)[cc]{$\mathfrak{ a}_1$}}
\psline[arrowsize=2pt 3, linecolor=black]{->}(-0.45, 2.88)(-0.35, 2.88)
\put(12.,33.){\psline{*-*}(0.55,0)(0,0)}
\put(13.,29.){\makebox(0,0)[cc]{$e_3$}}
\put(22.,29.){\makebox(0,0)[cc]{$e_4$}}
\put(17.,33.){\black\oval(18.,26.)}
\put(10.,18.){\makebox(0,0)[cc]{$\mathfrak{ a}_2$}}
\psline[arrowsize=2pt 3, linecolor=black]{->}(0.873, 2.755)(0.98, 2.763)
\put(35.,33.){\circle*{1}} \put(40.,33.){\circle*{1}}
\put(45.,33.){\circle*{1}}
\put(60.,33.){\psline{*-*}(0.6,0)(0,0)} \put(60.,33.){\circle*{1}}
\put(69.,33.){\circle*{1}}
\put(60.,29.){\makebox(0,0)[cc]{$e_{2g-1}$}}
\put(72.,29.){\makebox(0,0)[cc]{$e_{2g}$}}
\put(65.,33.){\black\oval(30,15.0)}
\put(59.,20.){\makebox(0,0)[cc]{$\mathfrak{ a}_g$}}
\psline[arrowsize=2pt 3, linecolor=black]{->}(3.5, 2.43)(3.6, 2.43)
\put(114.,33.00){\psline{-*}(2.5,0)(0,0)} \put(114.,33.){\circle*{1}}
\put(115.,29.){\makebox(0,0)[cc]{$e_{2g+1}$}}
\put(146.,29.){\makebox(0,0)[cc]{$e_{2g+2}=\infty$}}
\put(35, 68.){\makebox(0,0)[cc]{$\mathfrak{ b}_1$}}
\bezier{484}(-7.,33.)(-7, 76.)(63.5, 76.)
\bezier{816}(63.5, 76.)(134., 76)(134., 33.)
\bezier{35}(-7.,33.)(-7., -7)(63.5, -7)
\bezier{35}(63.5, -7)(134., -7)(134.,33.)
\bezier{484}(17.,33.)(17.,60.)(72.,60.)
\bezier{816}(72.,60.)(127., 60)(127., 33.)
\bezier{35}(17.,33.)(17.,6)(72.,6)
\bezier{35}(72.,6)(127., 6)(127.,33.)
\put(90.,40.){\makebox(0,0)[cc]{$\mathfrak{b}_g$}}
\bezier{384}(65.,33.)(65.,45.)(92.5,45.)
\bezier{516}(92.5,45.)(120.,45.)(120.00,33.00)
\bezier{30}(65,33.00)(65,21.)(92.5, 21.)
\bezier{30}(92.5, 21.)(120,21.00)(120.00,33.00)
\put(63.,55.){\makebox(0,0)[cc]{$\mathfrak{ b}_2$}}
\psline[arrowsize=2pt 3, linecolor=black]{->}(2.33, 4.46)(2.46, 4.48)
\psline[arrowsize=2pt 3, linecolor=black]{->}(4.1, 3.6)(4.23, 3.6)
\psline[arrowsize=2pt 3, linecolor=black]{->}(5.6, 2.7)(5.73, 2.7)
%
\end{picture}
\end{center}
\caption{A homology basis on a Riemann surface of the hyperelliptic curve of genus $g$ with real branch points $e_1,\ldots,e_{2g+2}=\infty$ (upper sheet).  The cuts are drawn from $e_{2i-1}$ to $e_{2i}$ for $i=1,\dots,g+1$. The $\mathfrak b$-cycles are completed on the lower sheet.} \label{figure-1}
\end{figure}
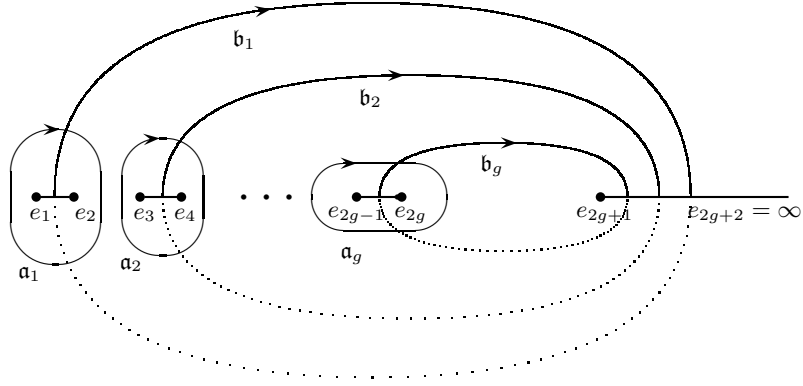

We present here the algebraic construction of the fundamental bi-differential $\Omega(Q,S)$. To do that we will construct at first a non-normalized bi-differential $\Gamma(Q,S)$ subject to the first two items in the definition of $\Omega(Q,S)$ above.

\begin{lemma} A symmetric bi-differential with the only second order pole along the diagonal defined up to a bilinear symmetric form in holomorphic differentials is given by
\begin{align}
\Gamma(P,Q)=\frac{\partial}{\partial z} \frac{y+w}{2(x-z)}\frac{\mathrm{d}x\mathrm{d}z}{y} +\mathrm{d}\boldsymbol{r}(z,w)^T \mathrm{d}\boldsymbol{u}(x,y) \, , \label{gammadiff}
\end{align}
where $P = (x, y)$ and $Q = (z, w)$, or, equivalently,
\begin{align}
\Gamma(P,Q)=\frac{F(x,y)+2yw}{4(x-z)^2} \frac{\mathrm{d}x}{y}\frac{\mathrm{d}z}{w} \ .\label{polara}
\end{align}
Here $\mathrm{d}\boldsymbol{r}$ is the vector of meromorphic differentials~\eqref{meromorphicdiff}
and $F(x,z)$ is a so-called Kleinian $2$-polar given by
\begin{equation}
F(x,z)=\sum^g_{k=0} x^k z^k \left( 2\lambda_{2k} + \lambda_{2k+1}(z+x) \right) \  \label{2polar}
\end{equation}
such that $F(x,x)=2y^2$ and $F(x,z)=F(z,x)$.
\end{lemma}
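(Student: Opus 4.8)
The claim has three parts: (i) the two displayed formulas \eqref{gammadiff} and \eqref{polara} for $\Gamma(P,Q)$ agree; (ii) $\Gamma(P,Q)$ is symmetric under $P\leftrightarrow Q$; and (iii) it has the prescribed second-order pole along the diagonal $P=Q$ and is otherwise holomorphic, so that it differs from the fundamental bi-differential only by a bilinear symmetric form in holomorphic differentials. The plan is to establish these in the order (i), (iii), (ii), since the equivalence of the two forms makes symmetry almost manifest and localizes the pole analysis.

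For part (i), I would start from the first expression in \eqref{gammadiff}. Differentiating $\tfrac{y+w}{2(x-z)}$ with respect to $z$ at fixed $w$ produces two terms: one from $\partial_z(x-z)^{-1}=(x-z)^{-2}$, giving $\tfrac{y+w}{2(x-z)^2}$, and one from $\partial_z w$ along the curve, i.e. $\tfrac{\mathrm{d}w}{\mathrm{d}z}=\tfrac{P'(z)}{2w}$ where $P(x)=4x^{2g+1}+\lambda_{2g}x^{2g}+\dots+\lambda_0$, giving $\tfrac{1}{2(x-z)}\cdot\tfrac{P'(z)}{2w}$. Multiplying the whole thing by $\tfrac{\mathrm{d}x\,\mathrm{d}z}{y}$ and adding the correction term $\mathrm{d}\boldsymbol r(z,w)^T\mathrm{d}\boldsymbol u(x,y)$, which by \eqref{holomorphicdiff}–\eqref{meromorphicdiff} is an explicit polynomial expression divided by $4yw$, I expect the sum to collapse to $\tfrac{F(x,z)+2yw}{4(x-z)^2}\tfrac{\mathrm{d}x}{y}\tfrac{\mathrm{d}z}{w}$. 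Concretely, one clears denominators: the target numerator is $F(x,z)+2yw$, and one checks the polynomial identity
\begin{equation*}
w(y+w)(x-z) + \tfrac12 (x-z)^2 P'(z) + (\text{numerator of }\ \mathrm dr^T\mathrm du) = F(x,z)+2yw
\end{equation*}
after using $w^2=P(z)$ and $y^2=P(x)$; this is the routine but central algebraic verification, and the definition \eqref{2polar} of the $2$-polar with $F(z,z)=2w^2$ is exactly what is needed to absorb the $w^2$ terms. The bookkeeping of which power of $x$ in $\mathrm du_i$ pairs with which monomial in $\mathrm dr_i$ is the step most likely to be fiddly.

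With the symmetric form \eqref{polara} in hand, parts (ii) and (iii) are quick. Symmetry: $F(x,z)=F(z,x)$ is immediate from \eqref{2polar}, $(x-z)^2=(z-x)^2$, and $\tfrac{\mathrm dx}{y}\tfrac{\mathrm dz}{w}$ is visibly symmetric, so $\Gamma(P,Q)=\Gamma(Q,P)$. For the pole structure, I would take a local coordinate near a generic point $P_0=(x_0,y_0)$ with $y_0\ne 0$ — say $\xi=x-x_0$ for $P$ and $\zeta=z-x_0$ for $Q$ — and expand. Since $F(x,x)=2y^2$, the numerator $F(x,z)+2yw$ vanishes to first order on the diagonal but $\tfrac{F(x,z)+2yw}{(x-z)^2}$ must be examined: writing $F(x,z)+2yw = (F(x,z)-F(x,x)) + (2y^2+2yw) = (F(x,z)-2y^2) + 2y(y+w)$ and noting each bracket vanishes like $(z-x)$, one gets a simple pole's worth of cancellation, leaving after division by $(x-z)^2$ a leading term $\tfrac{1}{(\xi-\zeta)^2}\,\mathrm d\xi\,\mathrm d\zeta$ plus a holomorphic remainder, matching \eqref{omegapole}; at a branch point ($y=0$) one uses $\xi^2\sim x-e_j$ instead and checks the would-be poles at $x=z$ on opposite sheets cancel because of the $+2yw$ term (this is the only place one must be slightly careful). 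Finally, since any two bi-differentials with the same principal part on the diagonal differ by a global holomorphic bi-differential, and on a hyperelliptic curve every holomorphic differential is a combination of the $\mathrm du_i$, the difference $\Gamma-\Omega$ is a bilinear symmetric form $\sum c_{ij}\,\mathrm du_i\,\mathrm du_j$, which is the asserted ambiguity. The main obstacle is purely the polynomial identity in part (i); everything else is structural.
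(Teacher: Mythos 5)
Your route is genuinely different from the paper's: the paper never verifies the closed formula \eqref{polara} inside the proof at all, but constructs $\Gamma$ by differentiating the third-kind differential \eqref{thirdkind} with respect to $z$, noting that this already has the required double pole along the diagonal as a form in $P$, and adding $\mathrm{d}\boldsymbol{r}(z,w)^T\mathrm{d}\boldsymbol{u}(x,y)$ for the sole purpose of cancelling the unwanted poles at $z=\infty$ and symmetrizing. Your plan --- check the explicit form \eqref{polara} directly (algebraic identity with \eqref{gammadiff}, manifest symmetry, local expansion of the pole) --- can be made to work, and the central identity does hold: after clearing denominators and using only $w^2=P(z)$ (with $P(x)=4x^{2g+1}+\lambda_{2g}x^{2g}+\dots+\lambda_0$; the relation $y^2=P(x)$ is not needed), it reads
\begin{equation*}
2P(z)+(x-z)\,P'(z)+(x-z)^2\sum_{i=1}^{g}\sum_{k=i}^{2g+1-i}(k+1-i)\,\lambda_{k+1+i}\,x^{i-1}z^{k}\;=\;F(x,z)\,,
\end{equation*}
whereas the identity you display attaches the powers of $(x-z)$ to the wrong terms, so as written it is not the identity you would be proving.

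Two further steps would fail as stated. First, the diagonal analysis: on the diagonal (same sheet) the numerator $F(x,z)+2yw$ tends to $F(x,x)+2y^2=4y^2\neq 0$, and this \emph{non}-vanishing is exactly what produces the leading term $\mathrm{d}\xi\,\mathrm{d}\zeta/(\xi-\zeta)^2$ with coefficient $1$ (since $y/w\to 1$); your claim that the numerator ``vanishes to first order on the diagonal'' with ``a simple pole's worth of cancellation'' is false there and, if true, would leave only a first-order pole, contradicting \eqref{omegapole}. The decomposition $(F(x,z)-2y^2)+2y(y+w)$ is instead the correct tool at the \emph{anti}-diagonal points $z=x$, $w=-y$, where the first-order contributions cancel because $\partial_zF(x,z)\vert_{z=x}=P'(x)=2yy'$, so the numerator vanishes to second order and $\Gamma$ is regular; the absence of a residue term on the diagonal then follows from the symmetry of $\Gamma$. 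Second, you never examine the points at infinity: for \eqref{polara} regularity at $x=\infty$ or $z=\infty$ is a genuine check (the numerator has degree $g+1$ in each variable), and for \eqref{gammadiff} it is precisely the role of the correction term, since the differentials \eqref{meromorphicdiff} have their only pole at $\infty$ and are calibrated to annihilate the poles of $\partial_z\bigl[(y+w)/(2(x-z))\bigr]\mathrm{d}z$ there. Since the lemma asserts the diagonal is the \emph{only} pole, this check cannot be omitted --- it is, in substance, the whole content of the paper's proof.
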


\begin{proof}
Consider the differential of the third kind~\eqref{thirdkind}
\begin{equation}
\Omega_{Q,Q'}(P)=\frac{y+w}{2(x-z)}\frac{\mathrm{d}x}{y}
-\frac{y+w'}{2(x-z')}\frac{\mathrm{d}x}{y}\label{thirdkind-2}
\end{equation}
depending on the variable $P=(x,y)$ and possessing poles in the points $Q=(z,w)$ and $Q'=(z',w')$. The bi-differential
\begin{equation}
\frac{\partial}{\partial z} \Omega_{Q,Q'}(P)\mathrm{d}z= \frac{\partial}{\partial z} \frac{y+w}{2(x-z)}\frac{\mathrm{d}x\mathrm{d}z}{y}
\end{equation}
as a form in $P$ has a second order pole along the diagonal $P=Q$, but as a form in $Q$ it has unwanted poles at $z=\infty$. This 2-form can be symmetrized ($\Gamma(P,Q)=\Gamma(Q,P)$) by adding an additional term $\mathrm{d}\boldsymbol{r}(z,w)^T \mathrm{d}\boldsymbol{u}(x,y)$ 
that annihilates the aforementioned poles. 
\end{proof}

The differential~\eqref{gammadiff} is defined up to a holomorphic 2-form  $2\mathrm{d}\boldsymbol{u}^T(z,w) \varkappa \mathrm{d}\boldsymbol{u}(x,y)$, where $\varkappa$ is a symmetric $g\times g$-matrix, $\varkappa^T=\varkappa$. This fact will be used for the symmetrization and normalization of the bi-differential. Thus, the bi-differential $\Gamma(P,Q)$ turns into
\begin{equation}
\Omega(P,Q)=\frac{\partial}{\partial z} \frac{y+w}{2(x-z)}\frac{\mathrm{d}x\mathrm{d}z}{y} +\mathrm{d}\boldsymbol{r}(z,w)^T \mathrm{d}\boldsymbol{u}(x,y)+2\mathrm{d}\boldsymbol{u}^T(z,w)\varkappa \mathrm{d}\boldsymbol{u}(x,y)\label{gammadiff1}
\end{equation}
or equivalently
\begin{equation}
\Omega(P,Q)=\frac{F(x,y)+2yw}{4(x-z)^2} \frac{\mathrm{d}x}{y}\frac{\mathrm{d}z}{w}+2\mathrm{d}\boldsymbol{u}^T(z,w)\varkappa \mathrm{d}\boldsymbol{u}(x,y) \ . \label{polara1}
\end{equation}
The matrix $\varkappa$ is chosen so that it normalizes $\Omega(P,Q)$ according to~\eqref{omeganorm} and~\eqref{bperiods} and the factor 2 in the second term of \eqref{polara1} is chosen to get precisely the Weierstra{\ss} definitions in the case $g=1$. This defines the matrix $\varkappa$ in terms of the $2\eta$-- and $2\omega$--periods as
\begin{equation}
\varkappa=\eta(2\omega)^{-1} \ . \label{varkappa} 
\end{equation}

We denote by $\mathrm{Jac}(X_g)$ the Jacobian of the curve $X_g$, i.e., the factor $\mathbb{C}^g/ \Gamma$, where $\Gamma=2 \omega \oplus 2 \omega'$ is the lattice generated by the periods of the canonical holomorphic differentials. Any point $\boldsymbol{u} \in \mathrm{Jac}(X_g)$ can be represented in the form
\begin{equation}
\boldsymbol{u}=2\omega \boldsymbol{\varepsilon}+2\omega' \boldsymbol{\varepsilon}' \, ,
\end{equation}
where $ \boldsymbol{\varepsilon},\boldsymbol{\varepsilon}'\in \mathbb{R}^g $. 
The vectors $\varepsilon$ and $\varepsilon^\prime$ combine to a $2 \times g$ matrix and form the characteristic $\varepsilon$ of the point $\boldsymbol{u}$,
\begin{equation}
[\boldsymbol{u}] := \begin{pmatrix} \boldsymbol{\varepsilon'}^T \\
\boldsymbol{\varepsilon}^T \end{pmatrix} = \begin{pmatrix} \varepsilon_1' & \ldots & \varepsilon_g' \\
\varepsilon_1 & \ldots & \varepsilon_g \end{pmatrix} =: \varepsilon \, .
\end{equation}
If $\boldsymbol{u}$ is a half-period, then all entries of the characteristic $\varepsilon$ are equal to $\frac12$ or $0$.

Beside the canonic holomorphic differentials $\mathrm{d}\boldsymbol{u}$ we will also consider the normalized holomorphic differentials defined by
\begin{equation}
 \label{eq:normdiff}
 \d \vek{v} = (2\omega)^{-1}\d \vek{u} \, .
\end{equation}
Their corresponding holomorphic periods are $1_g$ and $\tau$, where the Riemann period matrix $\tau := \omega^{-1} \omega'$ is in the Siegel upper half space $\mathfrak{S}_g$ of $g \times g$--matrices (or half space of degree $g$),
\begin{equation}
\mathfrak{S}_g = \left\{\tau \;\; g \times g \;\; \text{matrix} \big| \tau^T=\tau, \, \mathrm{Im}(\tau) \,\, \text{positive definite} \right\}\,.
\end{equation}
The corresponding Jacobian is introduced as
\begin{equation}
\widetilde{\mathrm{ Jac}}(X_g) := (2\omega)^{-1}\mathrm{ Jac} (X_g)=\mathbb{C}^g/ 1_g\oplus \tau \,.
\end{equation}
We will use both versions: the first one $(2\omega,2\omega')$ in the context of the $\sigma$--functions, and the second one $(1_g,\tau)$ in the case of the $\theta$--functions.

The Abel map $\boldsymbol{\mathfrak{A}}: (X_g)^n \rightarrow \mathbb{C}^g$ with the base point $P_0$ relates the set of points $(P_1,\ldots,P_N)$ which are called the divisor $\mathcal{D}$, with a point in the Jacobian $\mathrm{ Jac}(X_g)$
\begin{equation}
\boldsymbol{\mathfrak{A}}(P_1,\ldots,P_N) := \sum_{k=1}^N  \int_{P_0}^{P_k} \d \boldsymbol{u} \,. \label{jip}
\end{equation}
The divisor $\mathcal{D}$ in~\eqref{jip} can also be denoted as $P_1+\ldots+P_N - N P_0$.

More generally the divisor $\mathcal{D}$ is the formal sum $\mathcal{D}= n_1P_1+\ldots+n_NP_N$ with integers $n_j$, $j=1,\ldots,N$ and $N\in \mathbb{N}$. The degree of the divisor,
 $\mathrm{deg}(\mathcal{D})$, is the sum $\mathrm{deg}(\mathcal{D})=n_1+\ldots+n_N$. The divisor of a meromorphic function is of degree zero. Two divisors  $\mathcal{D}$ and $\mathcal{D}'$ are linearly equivalent if $\mathcal{D}-\mathcal{D}'$ is the divisor of a meromorphic function. Linearly equivalent divisors constitute a class. In particular, the canonical class $\mathcal{K}_{X_g}$ is the divisor class of abelian differentials of degree $2g-2$.

The divisor is positive if all $n_j\geq0$. Let $l(\mathcal{D})$ be the dimension of the space of meromorphic functions that have poles in the points of $\mathcal{D}$ of multiplicities not higher than the multiplicity of these points in $\mathcal{D}$. If $\mathrm{deg}(\mathcal{D})\geq g$ for a divisor in general position the dimension $l(\mathcal{D})$ is given by
\begin{equation} 
l(\mathcal{D})= \mathrm{deg}(\mathcal{D})-g +1 \ . 
\end{equation}
Such divisors are called non-special. All remaining divisors with $\mathrm{deg}(\mathcal{D})\geq2$ are called special. For a detailed explanation see e.g. \cite{fk80}.

Analogously we define
\begin{equation}
\widetilde{\boldsymbol{\mathfrak{A}}}(P_1,\ldots,P_n)= \sum_{k=1}^n  \int_{P_0}^{P_k} \d \boldsymbol{v}= (2\omega)^{-1}\boldsymbol{\mathfrak{A}}(P_1,\ldots,P_n)\,.
 \label{jiptilde}
\end{equation}
In the context of our consideration we choose $P_0$ at infinity, $P_0=(\infty,\infty)$.

\subsection{$\theta$-- and $\sigma$--functions}

The hyperelliptic $\theta$--function with characteristic $\varepsilon$ is a mapping $\theta:\ \widetilde{\mathrm{Jac}}(X_g) \times \mathfrak{S}_g \rightarrow \mathbb{C}$ defined through the Fourier series
\begin{equation}
\theta[\varepsilon] (\vek{ v}| \tau) := \sum_{\vek{m} \in \mathbb{Z}^g} e^{\pi \imath \left\{(\vek{m} + \vek{\varepsilon}')^T \tau (\vek{m} + \vek{\varepsilon}') + 2 (\vek{ v}+ \vek{\varepsilon})^T (\vek{m} + \vek{\varepsilon}')\right\}} \, .
\end{equation}
It possesses the periodicity property
\begin{equation}
\theta[\varepsilon] (\vek{v}+\vek{n} + \tau\vek{n}'|\tau) = e^{-2\imath\pi{\vek{n}'}^T(\vek{v} + \frac12\tau \vek{n}')} e^{2\imath\pi(\vek{n}^T\vek{\varepsilon}' - {\vek{n}'}^T\vek{\varepsilon})} \theta[\varepsilon](\vek{v}|\tau) \, . \label{thetachar2a}
\end{equation}
For vanishing characteristic we abbreviate $\theta(\vek{v}) := \theta[0] (\vek{v}| \tau)$.

In the following, the values $\varepsilon_k$, $\varepsilon_k'$ are either $0$ or $\frac{1}{2}$. The property~\eqref{thetachar2a} implies
\begin{equation}
\theta[\varepsilon]({-\vek v}|\tau)= \mathrm{e}^{-4\pi \imath
\vek{\varepsilon}^T\vek{\varepsilon}'} \theta[\varepsilon](\vek{v}|\tau)\label{-v},
\end{equation}
so that the function $\theta[\varepsilon](\vek{v}|\tau)$ with characteristic $\varepsilon$ of only half-integers is even if $4\vek{\varepsilon}^T\vek{\varepsilon}'$ is an even integer, and odd otherwise. Correspondingly, $\varepsilon$ is called even or odd, and among the $4^g$ half-integer characteristics there are $\frac{1}{2} (4^g+2^g) $ even and $\frac{1}{2}(4^g-2^g) $ odd characteristics.

The nonvanishing values of the $\theta$-functions with half--integer characteristics and their derivatives are called $\theta$-constants and are denoted as
\begin{align*}
\theta[\varepsilon] & : =\theta[\varepsilon](\vek{0};\tau), & \quad  \theta_{ij}[\varepsilon] & := \left.\frac{\partial^2}{\partial z_i\partial z_j}\theta[\varepsilon](\boldsymbol{z};\tau)\right|_{\boldsymbol{z}=0},\quad & \text{etc.} \qquad \text{for even $[\varepsilon]$};\\
\theta_{i}[\varepsilon] & :=\left.\frac{\partial}{\partial z_i}\theta[\varepsilon](\boldsymbol{z};\tau)\right|_{\boldsymbol{z}=0}, & \quad
\theta_{ijk}[\varepsilon] & := \left.\frac{\partial^3}{\partial z_i\partial z_j \partial z_k}\theta[\varepsilon](\boldsymbol{z};\tau)\right|_{\boldsymbol{z}=0},\quad
  & \text{etc.} \qquad \text{for odd $[\varepsilon]$} \,.
\end{align*}
Even characteristics $\varepsilon$ are called nonsingular if $\theta[\varepsilon]\neq 0$, and odd characteristics $\varepsilon$ are called nonsingular if $\theta_i[\varepsilon]\neq 0$ for at least one index $i$.

We identify each branch point $e_j$ of the curve $X_g$ with a vector
\begin{equation}
{\boldsymbol{\mathfrak{A}}}_j := \int_{\infty}^{(e_j,0)} \d \vek{u} =: 2 \omega \boldsymbol{\varepsilon}_j + 2 \omega' \boldsymbol{\varepsilon}'_j \in \mathrm{Jac}(X_g),\quad j=1,\ldots, 2g+2 \, ,
\end{equation}
which defines the two vectors $\boldsymbol{\varepsilon}_j$ and $\boldsymbol{\varepsilon}'_j$. Evidently, $[{\boldsymbol{\mathfrak A}}_{2g+2}] = [0] = 0$.

In terms of the $2g+2$ characteristics $[{\mathfrak A}_i]$ all $4^g$ half integer characteristics $\varepsilon$  can be constructed as follows. There is a one-to-one correspondence between these $\varepsilon$ and the partitions of the set $\bar{{\mathcal G}} = \{1, \ldots, 2g+2\}$ of indices of the branch points~(\cite{fa73}, p.~13, \cite{ba97} p.~271). The partitions of interest are
\begin{equation}
 \label{partitionsm}
{\mathcal I}_m \cup {\mathcal J}_m = \{ i_1, \ldots, i_{g+1-2m}\}\cup \{ j_1, \ldots, j_{g+1+2m}\},
\end{equation}
where $m$ is any integer between $0$ and $\left[\frac{g+1}{2}\right]$. The corresponding characteristic $\boldsymbol{\varepsilon}_{m}$ is defined by the vector
\begin{equation}
\boldsymbol{\Delta}_m = \sum_{k=1}^{g+1-2m}\widetilde{\boldsymbol{\mathfrak{A}}}_{i_k}
 +\vek{ K}_{\infty} =: \boldsymbol{\varepsilon}_m +\tau \boldsymbol{\varepsilon}'_m \, ,
\label{sumchar}
\end{equation}
where $\boldsymbol{K}_{\infty}\in \widetilde{\mathrm{Jac}}(X_g)$ is the vector of Riemann constants with base point $\infty$, which will always be used in the argument of the $\theta$-functions, and which is given as a vector in $\widetilde{\mathrm{Jac}}(X_g)$ by
\begin{equation}
\boldsymbol{K}_\infty := \sum_{\text{all odd}\; [\mathfrak{A}_j]} \widetilde{\boldsymbol{\mathfrak A}}_j
\label{rvectorgen}
\end{equation}
(see e.g. \cite{fk80}, p. 305, for a proof).

It can be seen that characteristics with even $m$ are even, and with odd $m$ are odd. There are $\frac{1}{2}{2g+2 \choose g+1}$ different partitions with $m=0$, ${2g+2 \choose g-1}$ different partitions with $m=1$, and, in general, ${2g+2 \choose g+1-2m}$ down to ${2g+2 \choose 1}=2g+2$ partitions if $g$ is even and $m=g/2$, or ${2g+2 \choose 0}=1$ partitions if $g$ is odd and $m=(g+1)/2$.  One may check that the total number of even (odd) characteristics is indeed $2^{2g-1}\pm 2^{g-1}$. According to the Riemann theorem for the zeros of $\theta$-functions~\cite{fa73}, $\theta(\boldsymbol{\Delta}_m+\vek{v})$ vanishes to order $m$ at $\vek{v}=0$ and in particular, the function $\theta(\boldsymbol{K}_{\infty}+\vek{v})$ vanishes to order $\left[\frac{g+1}{2}\right]$ at $\vek{v}=0$.

Let us demonstrate, following \cite{fk80}, p.~303, how the set of characteristics $[\boldsymbol{\mathfrak{A}}_k] \equiv [\widetilde{\boldsymbol{\mathfrak{A}}}_k]$, $k=1,\ldots, 2g+2$ looks like in the homology basis shown in Figure~\ref{figure-1}.
 Using the notation $\vek{f}_k = \frac{1}{2}(\delta_{1k}, \ldots, \delta_{gk})^t$ and $\boldsymbol{\tau}_k$
for the $k$-th column vector of the matrix $\tau$, we find
\begin{align}
\widetilde{\boldsymbol{\mathfrak A}}_{2g+1} & 
= \widetilde{\boldsymbol{\mathfrak A}}_{2g+2} - \sum_{k=1}^g\int\limits_{(e_{2k-1},0)}^{(e_{2k},0)}\d \vek{v}
 = \sum_{k=1}^g\vek{f}_k, &  \to  & & [\widetilde{{\boldsymbol{\mathfrak A}}}_{2g+1}] & = \frac12 \begin{pmatrix} 0 & 0 & \ldots & 0 & 0 \\ 1 & 1 & \ldots & 1 & 1 \end{pmatrix}, \nonumber \\
\widetilde{\boldsymbol{\mathfrak A}}_{2g} & = \widetilde{\boldsymbol{\mathfrak A}}_{2g+1} - \int\limits_{(e_{2g+1},0)}^{(e_{2g},0)}\d \vek{v} = \sum_{k=1}^g\vek{f}_k + \vek{\tau}_g, &  \to  & & [\widetilde{{\boldsymbol{\mathfrak A}}}_{2g}] & = 
\frac12 \begin{pmatrix} 0 & 0 & \ldots & 0 & 1 \\ 1 & 1 & \ldots & 1 & 1 \end{pmatrix}, \\
\widetilde{\boldsymbol{\mathfrak A}}_{2g-1} & = \widetilde{\boldsymbol{\mathfrak A}}_{2g} - \int\limits_{(e_{2k-1},0)}^{(e_{2k},0)}\d \vek{v} = \sum_{k=1}^{g-1}\vek{f}_k + \vek{\tau}_g, &  \to  & & [\widetilde{{\boldsymbol{\mathfrak A}}}_{2g-1}] & = \frac12 \begin{pmatrix} 0 & 0 & \ldots & 0 & 1 \\ 1 & 1 & \ldots & 1 & 0 \end{pmatrix} \, . \nonumber
\end{align}
Continuing in the same manner, we get for arbitrary $1 \leq k < g$
\begin{align}
\begin{split}
[\widetilde{{\boldsymbol{\mathfrak A}}}_{2k+2}] & = \frac12 \Biggl(\overbrace{\begin{matrix} 0 & 0 & \ldots & 0 \\ 1 & 1 & \ldots & 1 \end{matrix}}^{k} \;\;
\begin{matrix} 1 & 0 & \ldots & 0 \\ 1 & 0 & \ldots & 0 \end{matrix}\Biggr), \\
[\widetilde{{\boldsymbol{\mathfrak A}}}_{2k+1}] & = \frac12 \Biggl(\overbrace{\begin{matrix} 0 & 0 & \ldots & 0 \\ 1 & 1 & \ldots & 1 \end{matrix}}^{k} \;\;
\begin{matrix} 1 & 0 & \ldots & 0 \\ 0 & 0 & \ldots & 0 \end{matrix}\Biggr)
\end{split}
\end{align}
and finally
\begin{equation}
[\widetilde{\boldsymbol{{\mathfrak A}}}_2] = \frac12 \begin{pmatrix} 1 & 0 & \ldots & 0 \\ 1 & 0 & \ldots & 0 \end{pmatrix} \, , \qquad
[\widetilde{\boldsymbol{{\mathfrak A}}}_1] = \frac12 \begin{pmatrix} 1 & 0 & \ldots & 0 \\ 0 & 0 & \ldots & 0
\end{pmatrix}\, .
\end{equation}
The characteristics with even indices, corresponding to the branch points $e_{2n}$, $n=1,\ldots, g$, are odd (except for $[{\mathfrak A}_{2g+2}]$ which is zero); the others are even.
Therefore in the basis drawn in Figure \ref{figure-1} we get
\begin{equation}
\boldsymbol{K}_\infty =  \sum_{k=1 }^g \widetilde{\boldsymbol{\mathfrak A}}_{2k} \, .
\label{rvector}
\end{equation}
The formula (\ref{rvector}) is in accordance with the classical theory where the vector of Riemann constants is defined as (see Fay \cite{fa73}, Eq. (14))
\begin{equation}
\mathrm{Divisor} \, \boldsymbol{K}_{P_0} = \Delta-(g-1)P_0 \, ,
\end{equation}
where $\Delta$ is the divisor of degree $g-1$ that is the {\em{Riemann divisor}}. In the case considered $P_0=\infty$ and $\Delta = e_{2} + e_4 + \ldots + e_{2g} -\infty$. The calculation of the divisor of the differential $\prod_{k=1}^g(x-e_{2k}) \mathrm{d}x/y$ leads to the required conclusion $2 \Delta = \mathcal{K}_{X_g}$ where  $\mathcal{K}_{X_g}$ is the canonical class.

The Kleinian $\sigma$--function of the hyperelliptic curve $X_g$ is defined over the Jacobian $\mathrm{Jac}(X_g)$ as
\begin{equation}
\sigma(\boldsymbol{u};M) := C \theta[\boldsymbol{K}_{\infty}]((2\omega)^{-1} \boldsymbol{u} ;\tau) \, e^{\boldsymbol{u}^T \varkappa \boldsymbol{u}} \label{sigma} \ ,
\end{equation}
where the symmetric $g\times g$ matrix $\varkappa$ is defined in~\eqref{varkappa}.
Here $\boldsymbol{K}_{\infty}\in \mathrm{Jac}(X_g)$ and
\begin{equation}
\boldsymbol{u}= \int_{g\infty}^{\mathcal{D}} \mathrm{d}\boldsymbol{u} \equiv
\sum_{k=1}^g \int_{g\infty}^{P_k}\mathrm{d}\boldsymbol{u},
\end{equation}
where $\mathcal{D}=P_1+\ldots+P_g$ is a divisor in the general position. The constant
\begin{equation}
C = \sqrt{ \frac{\pi^g}{\mathrm{det}(2\omega)} }\left(\prod_{1\leq i <j \leq 2g+1} (e_i-e_j)\right)^{-1/4} \,, \label{sigmac}
\end{equation}
and $M$ defined in \eqref{defM} contains the set of all moduli $2\omega,2\omega'$ and $2\eta, 2\eta'$. In the following we will use the shorter notation $\sigma(\boldsymbol{u};M)=\sigma(\boldsymbol{u})$. Sometimes the $\sigma$-function~\eqref{sigma} is called fundamental $\sigma$-function.

The multi-variable $\sigma$-function~\eqref{sigma} represents a natural generalization of the Weierstra{\ss}
$\sigma$-function given by
\begin{equation}
\sigma(u)= \sqrt{\frac{\pi}{2\omega}} \frac{\epsilon}{\sqrt[4]{(e_1-e_2)(e_1-e_3)(e_2-e_3)}} \vartheta_1\left( \frac{u}{2\omega}\right) \mathrm{exp} \left\{ \frac{\eta u^2}{2\omega} \right\},\quad \epsilon^8=1\, , \label{fundamental}
\end{equation}
where $\vartheta_1$ is the standard $\theta$-function. We note that~\eqref{sigma} differs in the case of genus one from the Weierstra{\ss} $\sigma$--function by an exponential factor that appears when the shift on a half period in the $\theta$-argument is taken into account in the $\theta$-characteristics.

The fundamental $\sigma$--function \eqref{sigma} possesses the properties
\begin{itemize}
\item It is an entire function on  $\mathrm{Jac}(X_g)$,
\item It satisfies the two sets of functional equations
\begin{equation}
\begin{split}
\sigma(\boldsymbol{u}+2\omega \boldsymbol{k}+2\omega'\boldsymbol{k}^\prime; M) & = e^{2 (\eta\boldsymbol{k}+\eta'\boldsymbol{k}')^T (\boldsymbol{u}+\omega\boldsymbol{k}+\omega'\boldsymbol{k}')} \sigma(\boldsymbol{u};M)\\
\sigma(\boldsymbol{u};(\gamma M^T)^T) & = \sigma(\boldsymbol{u};M) \, , \end{split}
\end{equation}
where $\gamma \in \mathrm{Sp}(2g,\mathbb Z)$, that is, $\gamma J \gamma^{T} = J$, and $M^T$ is the matrix $M$ with interchanged submatrices $\omega^\prime$ and $\eta$.
The first of these equations displays the {\it periodicity property}, and the second one the {\it modular property}.


\item In the vicinity of the origin the power series of $\sigma(\boldsymbol{u})$ is of the form
\begin{equation}
\sigma(\boldsymbol{u}) = S_{\boldsymbol{\pi}}(\boldsymbol{u})+\text{higher order terms} \, ,  \label{sigmaseries}
\end{equation}
where $S_{\boldsymbol{\pi}}(\boldsymbol{u})$ are the Schur--Weierstra{\ss} functions associated to the curve $X_g$ and defined on $\mathbb{C}^g\ni(u_1,\ldots,u_g)$ by the Weierstra{\ss} gap sequence at the infinite branch point. For $g>1$ it is always a Weierstra{\ss} point. The partition ${\boldsymbol{\pi}}=(\pi_g,\ldots,\pi_1)$ is defined by the Weierstra{\ss} gap sequence ${\boldsymbol{w}}=(w_1,\ldots,w_g)$ as follows: $\pi_i=w_{g-i+1}+i-g$.
Details of the definition are given in \cite{BEL99}, see also \cite{ehkkl11}. As an example we will present here the first few functions $S_{\boldsymbol{\pi}}(\boldsymbol{u})$
\begin{align}
g & = 1: & \quad S_{1}(u_1) & = u_1,\label{schur1} \\
g & = 2: & \quad S_{2,1}(u_1,u_2) & = \frac13 u_2^3-u_1 , \label{schur2} \\
g & = 3: & \quad  S_{3,2,1}(u_1,u_2,u_3) & = \frac{1}{45}u_3^6 - \frac13 u_2 u_3^3 - u_2^2+u_1u_3,\label{schur3}\\
g & = 4: & \quad S_{4,3,2,1}(u_1,u_2,u_3,u_4) & = \frac{1}{4725} u_4^{10} - \frac{1}{105} u_4^7 u_3 + \frac{1}{15} u_2u_4^5 - u_4 u_3^3 - \frac{1}{3} u_4^3 u_1 \label{schur4}\\
& & & \qquad + u_2 u_3 u_4^2 - u_2^2 + u_1 u_3 \, . \nonumber
\end{align}
\end{itemize}
The partitions constructed by the Weierstra{\ss} gap sequences are denoted in the subscripts. In particular, in the case of genus $g=4$ the partition $\boldsymbol{\pi}=(1,2,3,4)$
corresponds to the gap sequence $\overline{0},\;1,\;\overline{2},\;3,\;\overline{4},\;5,\;\overline{6},\;7,\;\overline{8,\;9,\;10,\ldots}$, where orders of existing functions are overlined. These are so-called non-gap numbers, in contrast to the gap-numbers. The genus is defined by the number of gaps or, equivalently, the first number starting from which no gap appears equals $2g$ (in this example $8=2g$).

The Kleinian $\zeta$ and $\wp$-functions are a natural generalization of the Weierstra{\ss} $\zeta$ and $\wp$-functions and are given by the logarithmic derivatives of $\sigma$,
\begin{align}
\begin{split}
\zeta_{i}(\boldsymbol{u}) & = \frac{\partial}{ \partial u_i} \;
\mathrm{ln}\,\sigma(\boldsymbol{u}),\\
\wp_{ij}(\boldsymbol{u}) & = -\frac{\partial^2}{ \partial u_i\partial u_j} \;\mathrm{ln}\,\sigma(\boldsymbol{u}), \\
\wp_{ijk}(\boldsymbol{u}) & = -\frac{\partial^3}{ \partial u_i\partial u_j\partial u_k} \;\mathrm{ln}\,\sigma(\boldsymbol{u}) \,, \quad \text{etc.,}
\end{split}
\end{align}
where $i,j,k \in \{1,\ldots,g\}$. In this notation the Weierstra{\ss} $\wp$-function is $\wp_{11}(u)$. For convenience, we introduce the vector of $\zeta$-functions  $\boldsymbol{\zeta}(\boldsymbol{u})=(\zeta_1(\boldsymbol{u}),\ldots,\zeta_g(\boldsymbol{u}))^T$ and also denote the derivatives of the $\sigma$--function by
\begin{equation}
\sigma_i(\boldsymbol{u})=\frac{\partial}{\partial u_i} \sigma(\boldsymbol{u}),\quad \sigma_{ij}(\boldsymbol{u})=\frac{\partial^2}{\partial u_i\partial u_j} \sigma(\boldsymbol{u}), \qquad \text{etc.}
\end{equation}

\subsection{Main formula}\label{sec:mainformula}

We consider now the integration of the differentials of the second and third kind.

\begin{proposition} Let $X_g$ be a hyperelliptic curve of genus $g$ with branch point
at infinity. Let $\mathcal{D}=( P_1,\ldots P_g  )$, $\mathcal{D}'=( P_1',\ldots P_g'  )$, $P_k=(Z_k,W_k)$, $P_k'=(Z_k',W_k')$ be non-special divisors of degree $g$. Let $P=(x,y)$ and
$P'=(x',y')$ be two arbitrary points of $X$. Then
\begin{equation}
\int_{P'}^P \sum_{k=1}^g \int_{P_k'}^{P_k} \frac{F(x,z)+2yw}{4(x-z)^2}  \frac{\mathrm{d}x}{y} \frac{\mathrm{d}z}{w}
=\mathrm{ln} \frac{\sigma\left(\int_{P_0}^P \mathrm{d}\boldsymbol{u}- \int_{g\infty}^{\mathcal{D}} \mathrm{d}\boldsymbol{u}  \right)}{\sigma\left(\int_{P_0}^P \mathrm{d}\boldsymbol{u}- \int_{g\infty}^{\mathcal{D}'} \mathrm{d}\boldsymbol{u}  \right)}
-\mathrm{ln} \frac{\sigma\left(\int_{P_0}^{P'} \mathrm{d}\boldsymbol{u}- \int_{g\infty}^{\mathcal{D}} \mathrm{d}\boldsymbol{u}  \right)}{\sigma\left(\int_{P_0}^{P'} \mathrm{d}\boldsymbol{u}- \int_{g\infty}^{\mathcal{D}'} \mathrm{d}\boldsymbol{u}  \right)} \label{main}  \ .
\end{equation}
\end{proposition}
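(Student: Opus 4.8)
The plan is to recognize the left-hand side as a double integral of the normalized fundamental bi-differential $\Omega(P,Q)$ and to evaluate it using the classical "Fay-type" relation between $\Omega$ and the prime form (equivalently, the $\sigma$-function). First I would note that, by the equivalent form \eqref{polara1} of $\Omega(P,Q)$, the integrand $\frac{F(x,z)+2yw}{4(x-z)^2}\frac{\mathrm{d}x}{y}\frac{\mathrm{d}z}{w}$ equals $\Omega(P,Q)$ minus the holomorphic correction $2\,\mathrm{d}\boldsymbol{u}^T(z,w)\varkappa\,\mathrm{d}\boldsymbol{u}(x,y)$. Integrating the latter twice over $P\in[P',P]$ and $Q\in[P_k',P_k]$ contributes precisely the quadratic exponential terms $\boldsymbol{u}^T\varkappa\boldsymbol{u}$ hidden inside $\sigma$; so it suffices to establish the analogue of \eqref{main} for $\Omega(P,Q)$ with $\sigma$ replaced by the Riemann $\theta$-function in the combination $\theta[\boldsymbol K_\infty]\big((2\omega)^{-1}(\cdot)\big)$, and then reassemble using the definition \eqref{sigma} of $\sigma$.

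The key analytic input is the standard identity
\[
\int_{P'}^{P}\Omega(\,\cdot\,,Q)
=\mathrm{d}_Q\log\frac{E(P,Q)}{E(P',Q)}
\]
where $E$ is the prime form; iterating it once more and summing over the $g$ points of the divisors gives a ratio of products $\prod_k E(P,P_k)/\prod_k E(P,P_k')$, and by Riemann's vanishing theorem this product of prime forms is, up to nonvanishing exponential and $P$-independent factors, $\theta\big(\widetilde{\boldsymbol{\mathfrak A}}(P)-\widetilde{\boldsymbol{\mathfrak A}}(\mathcal D)+\boldsymbol K_\infty\big)$ over the same expression with $\mathcal D'$. Taking the logarithmic difference at $P$ and $P'$ cancels all the spurious factors (the exponential prefactors from the prime forms and the vector of Riemann constants enter identically in numerator and denominator), leaving exactly the right-hand side of \eqref{main} once the $\varkappa$-terms are folded back in via \eqref{sigma}. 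Alternatively, and perhaps more in the spirit of the paper, one can bypass the prime form entirely and argue directly: fix the divisors and view both sides of \eqref{main} as functions of $P$ with $P'$ a base point; show that the difference of the two sides is a holomorphic, single-valued function on $X_g$ (checking that the periods match, using \eqref{omeganorm}, \eqref{bperiods}, the periodicity \eqref{thetachar2a} of $\theta$, and the Legendre relation \eqref{Legendre} to cancel the monodromy), hence a constant, and that this constant vanishes by setting $P=P'$.

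The main obstacle is controlling the monodromy/period contributions: when $P$ is transported around an $\mathfrak a$- or $\mathfrak b$-cycle, the left-hand side picks up period integrals of $\Omega$ (namely $0$ along $\mathfrak a_j$ by \eqref{omeganorm} and $2\imath\pi\,\mathrm{d}v_j$ along $\mathfrak b_j$ by \eqref{bperiods}), while the right-hand side picks up the quasi-periodicity factors of $\sigma$ from the functional equations; one must verify these cancel after accounting for the $e^{\boldsymbol u^T\varkappa\boldsymbol u}$ factor and the shift by $\boldsymbol K_\infty$, and this is exactly where \eqref{Legendre} and the precise normalization $\varkappa=\eta(2\omega)^{-1}$ are needed. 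A secondary technical point is the passage from the prime-form/$\theta$ identity to $\sigma$: one must make sure the divisor-dependent and $P$-independent factors (including the constant $C$ and the branch-point product in \eqref{sigmac}) genuinely drop out of the ratio of ratios on the right of \eqref{main}; they do, precisely because \eqref{main} is a difference of logarithms of $\sigma$-ratios evaluated at two points, so every factor not depending on the Abel image of $P$ cancels. Once monodromy is under control the argument closes by the Liouville-type "holomorphic + doubly quasi-periodic with matching factors $\Rightarrow$ constant" principle together with the normalization at $P=P'$.
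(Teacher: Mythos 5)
Your proposal follows essentially the same route as the paper: the paper also integrates the normalized bi-differential $\Omega(P,Q)$ over $P$ and the divisor points, invokes the Riemann vanishing theorem to obtain the ratio of $\theta$-functions shifted by $\boldsymbol{K}_{P_0}$, and then passes to $\sigma$ via \eqref{sigma}, with the exponential factor $e^{\boldsymbol{u}^T\varkappa\boldsymbol{u}}$ absorbing exactly the $2\,\mathrm{d}\boldsymbol{u}^T\varkappa\,\mathrm{d}\boldsymbol{u}$ discrepancy between $\Omega$ and the algebraic kernel, which is the bookkeeping you spell out. Your extra detail (prime forms, cancellation of $P$-independent factors, and the alternative Liouville-type argument) is a faithful expansion of the paper's terse proof rather than a different method.
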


\begin{proof} We introduce non-special divisors
\[ \mathcal{D}= ( (Z_1,W_1),\ldots, (Z_g,W_g) ),\quad \mathcal{D}'= ( (Z_1',W_1'),\ldots, (Z_g',W_g') ) \]
as well as two arbitrary points $P=(x,y)$, $P'=(x',y')$. The integration of $\Omega(P,Q)$ given by~\eqref{gammadiff1}
\begin{equation}
\int_{P'}^P \sum_{k=1}^g  \int_{(Z'_k,W'_k)}^{(Z_k,W_k)} \Omega(P,Q) \,
\end{equation}
yields, according to the Riemann vanishing theorem,
\begin{equation}
\mathrm{ln} \frac{
\theta\left( \int_{P_0}^{P} \mathrm{d}\boldsymbol{v}-\sum_{k=1}^g \int_{P_0}^{(Z_k,W_k)} \mathrm{d}\boldsymbol{v} +\boldsymbol{K}_{P_0}   \right)}
{\theta\left( \int_{P_0}^{P} \mathrm{d}\boldsymbol{v}-\sum_{k=1}^g \int_{P_0}^{(Z_k',W_k')} \mathrm{d}\boldsymbol{v} +\boldsymbol{K}_{P_0}   \right)}-
\mathrm{ln} \frac{
\theta\left( \int_{P_0}^{P'} \mathrm{d}\boldsymbol{v}-\sum_{k=1}^g \int_{P_0}^{(Z_k,W_k)} \mathrm{d}\boldsymbol{v} +\boldsymbol{K}_{P_0}   \right)}
{
\theta\left( \int_{P_0}^{P'} \mathrm{d}\boldsymbol{v}-\sum_{k=1}^g \int_{P_0}^{(Z_k',W_k')} \mathrm{d}\boldsymbol{v} +\boldsymbol{K}_{P_0}   \right)} \ , \label{mainformula}
\end{equation}
where $P_0$ is the base point of the Abel map (that we suppose to be infinity) and $\boldsymbol{K}_{P_0}$ is the vector of Riemann constants with the base point $P_0$. Using the definition of the fundamental $\sigma$-function~\eqref{sigma} we get~\eqref{main}.
\end{proof}

The following corollaries follow from the main formula~\eqref{main}.

\begin{corollary}
Let $P=(x,y)$, $P_k=(x_k,y_k), k=1,\ldots,g$. Then
\begin{equation}
\sum_{ i, j=1 }^g \wp_{ij} \left(\int_{P_0}^{P}\mathrm{d}\boldsymbol{u} - \sum_{r=1}^{g}\int_{P_0}^{P_k}\mathrm{d}\boldsymbol{u} \right)  x_k^{i-1} x^{j-1}=  \frac{F(x,x_k) + 2yy_k}{4(x-x_k)^2},\quad k=1,\ldots,g \label{kleinformula}
\end{equation}
\end{corollary}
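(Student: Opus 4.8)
The plan is to derive the corollary directly from the main formula~\eqref{main} by specializing the divisors and then applying partial derivatives with respect to the endpoint coordinates. First I would set $\mathcal{D}' = g\infty$ (i.e.\ take all $P_k' $ to coincide with the base point $P_0 = \infty$), so that the two $\sigma$-quotients involving $\mathcal{D}'$ collapse: $\int_{g\infty}^{\mathcal{D}'}\mathrm{d}\boldsymbol{u} = 0$, and the right-hand side of~\eqref{main} reduces to $\ln \sigma\bigl(\int_{P_0}^P \mathrm{d}\boldsymbol{u} - \boldsymbol{u}_{\mathcal D}\bigr) - \ln \sigma\bigl(\int_{P_0}^{P'}\mathrm{d}\boldsymbol{u} - \boldsymbol{u}_{\mathcal D}\bigr)$, where I abbreviate $\boldsymbol{u}_{\mathcal D} = \int_{g\infty}^{\mathcal D}\mathrm{d}\boldsymbol{u}$. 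Correspondingly the left-hand side loses the $z'$-integrations and becomes $\int_{P'}^P \sum_{k=1}^g \int_{\infty}^{P_k} \Gamma(P,Q)$ with $\Gamma$ the kernel $\tfrac{F(x,z)+2yw}{4(x-z)^2}\tfrac{\mathrm dx}{y}\tfrac{\mathrm dz}{w}$ of~\eqref{polara}.

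Next I would differentiate both sides with respect to the upper endpoint $P = (x,y)$, regarding $P'$ as a fixed but arbitrary point; this kills the $P'$-dependent terms entirely. On the left, differentiation in the parameter $x$ of the outer integral simply strips off the outer $\int_{P'}^P$ and replaces $\mathrm dx/y$ by the evaluation of the integrand: I get $\sum_{k=1}^g \int_{\infty}^{P_k} \tfrac{F(x,z)+2yw}{4(x-z)^2}\tfrac{\mathrm dz}{w}$, now a genuine (meromorphic) differential in each $z_k$-variable only after one further step. On the right, using the chain rule and $\mathrm du_j = x^{j-1}\mathrm dx/y$ from~\eqref{holomorphicdiff}, the derivative of $\ln\sigma$ along the curve picks out $\sum_{j=1}^g \zeta_j\bigl(\int_{P_0}^P\mathrm d\boldsymbol u - \boldsymbol u_{\mathcal D}\bigr)\, x^{j-1}/y$. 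Cancelling the common factor $\mathrm dx/y$, I obtain an identity between $\sum_{k}\int_\infty^{P_k}\tfrac{F(x,z)+2yw}{4(x-z)^2}\tfrac{\mathrm dz}{w}$ and $\sum_j \zeta_j\bigl(\cdots\bigr) x^{j-1}$.

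Now I would differentiate a second time, this time with respect to one of the divisor points, say $P_k = (x_k,y_k)$, exploiting the fact that $\boldsymbol u_{\mathcal D}$ depends on $P_k$ through $\int_\infty^{P_k}\mathrm d\boldsymbol u$ with $\partial(\boldsymbol u_{\mathcal D})_i/\partial(\text{parameter }x_k) = x_k^{i-1}/y_k$. On the left, this removes the $\int_\infty^{P_k}$ and evaluates the inner differential, giving $\tfrac{F(x,x_k)+2yy_k}{4(x-x_k)^2}\cdot \tfrac{1}{y_k}$ (all other $k'\neq k$ terms are constant in $x_k$ and drop out). On the right, $\partial/\partial(\text{param }x_k)$ acting through $-\boldsymbol u_{\mathcal D}$ brings down $-\sum_i \partial_{u_i}$, i.e.\ by the definition of $\wp_{ij} = -\partial_{u_i}\partial_{u_j}\ln\sigma$ it produces $\sum_{i,j=1}^g \wp_{ij}\bigl(\int_{P_0}^P\mathrm d\boldsymbol u - \sum_r \int_\infty^{P_r}\mathrm d\boldsymbol u\bigr)\, x_k^{i-1} x^{j-1}\cdot\tfrac{1}{y_k}$. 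Cancelling the overall $1/y_k$ yields exactly~\eqref{kleinformula}.

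The main technical point to handle carefully is the bookkeeping of the differentiation in a \emph{parameter} versus differentiation along the curve: the abelian integrals $\int_{P_0}^P\mathrm d\boldsymbol u$ are multivalued, but $\sigma$ has the quasi-periodicity stated in the functional equations, so one must check that the $\zeta$- and $\wp$-expressions are genuinely well-defined functions of the argument modulo the lattice — the $\wp_{ij}$ are, being second logarithmic derivatives, but the intermediate $\zeta_j$ step is only quasi-periodic, which is why the clean invariant statement emerges only after the \emph{second} differentiation. A secondary subtlety is ensuring the divisors stay non-special as $P_k\to$ the relevant configuration so that the Riemann vanishing theorem continues to apply and $\sigma$ does not vanish; this is guaranteed for divisors in general position, which is the hypothesis of the corollary. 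I do not expect the local-coordinate computation near branch points (switching to $\xi = \sqrt{x-e_j}$ when some $P_k$ is a branch point) to be needed here, since the statement is for generic $P_k = (x_k,y_k)$ with $y_k\neq 0$.
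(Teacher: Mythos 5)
Your core strategy --- differentiate the main formula \eqref{main} once along the curve at $P$ and once in the divisor point $P_k$, so that the second logarithmic derivative of $\sigma$ produces $\wp_{ij}$ while the kernel gets evaluated at $(x,x_k)$ --- is exactly the paper's proof, which consists of taking $\partial^2/\partial x\,\partial x_k$ of both sides of \eqref{main}; your bookkeeping of the two differentiations (the factors $1/y$ and $1/y_k$, the sign giving $\wp_{ij}$ rather than $-\wp_{ij}$) is correct.

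However, your preliminary specialization $\mathcal{D}'=g\infty$ is not legitimate as stated. First, \eqref{main} is formulated for non-special divisors of degree $g$, and $g\infty$ is special for $g\geq 2$ (the functions $1,x,\dots$ have even pole order at the Weierstrass point at infinity, so $l(g\infty)>1$). Second, and more concretely, the two denominators you discard are $\sigma\bigl(\int_{P_0}^{P}\mathrm{d}\boldsymbol{u}\bigr)$ and $\sigma\bigl(\int_{P_0}^{P'}\mathrm{d}\boldsymbol{u}\bigr)$: since $P_0=\infty$ and the Abel image of a single point lies in $\Theta_1\subset\Theta$, these vanish identically for $g\geq 2$, so your ``reduced'' right-hand side is not obtained from \eqref{main} by a valid manipulation (for $g=1$ they are nonzero but genuinely $P$- respectively $P'$-dependent, so they cannot be dropped either). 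As a consequence, the intermediate identity you assert after the first differentiation, equating $\sum_k\int_\infty^{P_k}\frac{F(x,z)+2yw}{4(x-z)^2}\frac{\mathrm{d}z}{w}$ with $\sum_j\zeta_j(\cdots)\,x^{j-1}$, is not correct: it is missing the $x$-derivative of the terms you discarded. You are rescued only because every discarded or missing term is independent of $x_k$, so it is annihilated by the second differentiation and the final formula \eqref{kleinformula} still emerges. The clean repair is simply to omit the specialization altogether: apply $\partial^2/\partial x\,\partial x_k$ directly to \eqref{main}; every term involving $P'$ or $\mathcal{D}'$ fails to depend on both $x$ and $x_k$ and drops out automatically, with no degenerate limit of divisors and no ill-defined $\sigma$-values --- which is precisely the paper's one-line argument.
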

\begin{proof}
Take the partial derivative $\partial^2/\partial{x}\partial{x_k}$ on both sides of~\eqref{main}.
\end{proof}

\begin{corollary}
Let $\mathcal{D}=P_1+\ldots+P_g$ be a positive divisor of degree $g$. Then the standard Jacobi inversion problem given by the equations
\begin{equation}
\sum_{k=1}^g  \int_{\infty}^{P_k} \mathrm{d}\boldsymbol{u}=\boldsymbol{u}  \label{JIPjolomorphic} \end{equation}
is solved in terms of Kleinian $\wp$-functions as\footnote{In the previous work~\cite{ehkkl11} in this formula numbered as (3.44) as well in its particular cases (5.5) and (6.5) the sign ``-"
was misplaced.}
\begin{eqnarray}
x^g-\wp_{gg}(\boldsymbol{u})x^{g-1}-\wp_{g,g-1}(\boldsymbol{u})x^{g-2}-\ldots - \wp_{g,1}(\boldsymbol{u})=0 \ , && \label{JIP1} \\
y_k=\wp_{ggg}(\boldsymbol{u})x_k^{g-1}+\wp_{gg,g-1}(\boldsymbol{u})x_k^{g-2}+\ldots + \wp_{gg,1}(\boldsymbol{u}),\qquad k=1,\ldots,g \ . && \label{JIP2}
\end{eqnarray}
\end{corollary}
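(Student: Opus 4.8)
\emph{Proof strategy.} The plan is to read off both relations from the two leading terms of the Puiseux expansion of the Klein formula~\eqref{kleinformula} as the free point $P=(x,y)$ tends to the branch point $\infty$. Write $\boldsymbol{u}=\sum_{k=1}^g\int_{\infty}^{P_k}\mathrm{d}\boldsymbol{u}$ for the target vector of~\eqref{JIPjolomorphic}; by the hypothesis of the Proposition we may assume $\mathcal{D}$ is non-special and in general position, so that $x_1,\ldots,x_g$ are pairwise distinct, the special strata being recovered afterwards by analytic continuation. In~\eqref{kleinformula} (with base point $P_0=\infty$) let $P\to\infty$. Then $\int_{P_0}^{P}\mathrm{d}\boldsymbol{u}\to 0$, so the argument of every $\wp_{ij}$ tends to $-\boldsymbol{u}$; more precisely, since $y\sim 2x^{g+1/2}$ on the curve one has $\int_{\infty}^{P}\mathrm{d}u_i=\tfrac{1}{2i-2g-1}\,x^{\,i-g-1/2}+\cdots$, so only the $g$-th component of the argument picks up a first correction of order $x^{-1/2}$. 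Both sides of~\eqref{kleinformula} are meromorphic in $P$, so it suffices to match their Puiseux expansions in $x^{-1/2}$ near $\infty$ term by term.

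First I prove~\eqref{JIP1} by comparing the coefficients of $x^{g-1}$. On the left the top power of $x$ is $x^{g-1}$, with coefficient $\sum_{i=1}^g\wp_{ig}(-\boldsymbol{u})x_k^{i-1}=\sum_{i=1}^g\wp_{gi}(\boldsymbol{u})x_k^{i-1}$ (using $\wp_{ig}=\wp_{gi}$ and the evenness of $\wp_{ij}$). On the right, in $F(x,x_k)+2yy_k$ the Kleinian $2$-polar~\eqref{2polar} contributes the leading term $\lambda_{2g+1}x_k^{g}x^{g+1}=4x_k^{g}x^{g+1}$ (recall $\lambda_{2g+1}=4$ from~\eqref{curve}), while $2yy_k=O(x^{g+1/2})$ is of lower order; dividing by $4(x-x_k)^2=4x^2(1-x_k/x)^2$ leaves leading term $x_k^{g}x^{g-1}$. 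Matching gives
\begin{equation*}
x_k^{g}=\wp_{gg}(\boldsymbol{u})x_k^{g-1}+\wp_{g,g-1}(\boldsymbol{u})x_k^{g-2}+\cdots+\wp_{g,1}(\boldsymbol{u}),\qquad k=1,\ldots,g,
\end{equation*}
so each $x_k$ is a zero of the monic degree-$g$ polynomial standing on the left of~\eqref{JIP1}; since it has the $g$ distinct roots $x_1,\ldots,x_g$, it equals $\prod_{k=1}^g(x-x_k)$, which is~\eqref{JIP1}.

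Next I prove~\eqref{JIP2} by comparing the coefficients of $x^{g-3/2}$, the leading half-integer power. On the right this comes only from $2yy_k/[4(x-x_k)^2]\sim y_k\,x^{g-3/2}$. On the left, writing the argument as $-\boldsymbol{u}+\boldsymbol{\delta}$ with $\delta_g=-x^{-1/2}+\cdots$ and the other $\delta_i$ of higher order, a first-order Taylor expansion together with the evenness of $\wp_{ij}$ and the identity $\partial_{u_g}\wp_{ig}=\wp_{ggi}$ gives $\wp_{ig}(-\boldsymbol{u}+\boldsymbol{\delta})=\wp_{gi}(\boldsymbol{u})+x^{-1/2}\wp_{ggi}(\boldsymbol{u})+\cdots$, so the coefficient of $x^{g-3/2}$ on the left is $\sum_{i=1}^g\wp_{ggi}(\boldsymbol{u})x_k^{i-1}$; equating the two gives~\eqref{JIP2}. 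Equivalently, one may differentiate the factorisation~\eqref{JIP1} along the Abel map: from $\mathcal{P}(x_k(\boldsymbol{u});\boldsymbol{u})=0$ with $\mathcal{P}(x;\boldsymbol{u})=\prod_l(x-x_l)$ and $\partial_{u_g}\wp_{gj}=\wp_{ggj}$ one finds $\mathcal{P}_x(x_k)\,\partial x_k/\partial u_g=\sum_{j=1}^g\wp_{ggj}(\boldsymbol{u})x_k^{j-1}$, while $\mathcal{P}_x(x_k)=\prod_{l\neq k}(x_k-x_l)$ and inversion of the Vandermonde-type Abel Jacobian $\partial u_i/\partial x_k=x_k^{i-1}/y_k$ gives $\partial x_k/\partial u_g=y_k/\prod_{l\neq k}(x_k-x_l)$, so the product is $y_k$.

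The only genuinely delicate point is the asymptotic bookkeeping at $\infty$: one must carry the normalisation $\lambda_{2g+1}=4$, the Puiseux behaviour $y\sim 2x^{g+1/2}$ and the subleading term $\int_{\infty}^{P}\mathrm{d}u_g\sim -x^{-1/2}$ consistently, and watch all signs (cf.\ the footnote to~\eqref{JIP1}); in the alternative route for~\eqref{JIP2} the same care is needed in the Vandermonde inversion. No further input is needed: everything rests on the Proposition~\eqref{main} and its Corollary~\eqref{kleinformula} established above. As a sanity check, for $g=1$ the procedure reproduces $x_1=\wp(u)$ and $y_1=\wp'(u)$ in the Weierstra{\ss} normalisation $y^2=4x^3+\cdots$.
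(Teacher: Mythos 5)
Your proposal is correct and follows essentially the same route as the paper: the paper also proves \eqref{JIP1} by letting $x\to\infty$ in \eqref{kleinformula} (via $x=1/\xi^2$) and comparing the coefficients of $x^{g-1}$, and obtains \eqref{JIP2} from \eqref{JIP1} and \eqref{kleinformula}, deferring the details to \cite{BEL97}. Your matching of the half-integer power $x^{g-3/2}$ (and the alternative $\partial_{u_g}$/Vandermonde argument) simply supplies those delegated details, consistently with the stated sign conventions.
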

\begin{proof}
To prove~\eqref{JIP1} consider $x\rightarrow\infty$. Substitute $x=1/\xi^2$ into~\eqref{kleinformula}. Comparison of the coefficients of $1/\xi^{2(g-1)}$ on the RHS and LHS of~\eqref{kleinformula} gives relation~\eqref{JIP1}. Formula~\eqref{JIP2} follows from~\eqref{JIP1} and~\eqref{kleinformula}. For details see~\cite{BEL97}.
\end{proof}

For the meromorphic differentials of the second kind the following relation was proved by Buchstaber and Leykin~\cite{BL05}:
\begin{lemma}[\bf $\zeta$-formula]\label{lemma:zeta-formula}
Let $\mathcal{D}_0$ be a divisor supported by $g$ branch points $e_{i_1},\ldots, e_{i_g}$ such that
\begin{equation}
\sum_{k=1}^g \int_{\infty}^{(e_{i_k},0)} \mathrm{d} \boldsymbol{u}= \boldsymbol{K}_{\infty} \ , \,\, [\boldsymbol{K}_{\infty}] := \begin{pmatrix} \boldsymbol{\varepsilon'}^T \\
\boldsymbol{\varepsilon}^T \end{pmatrix}
\end{equation}
and $\mathcal{D}$ is a non-special divisor of degree $g$, $\mathcal{D}=P_1+\ldots+P_g$.
Then for any vector
\begin{equation}
\boldsymbol{u}=\int_{g\infty}^{\mathcal{D}} \mathrm{d} \boldsymbol{u} \in \mathrm{Jac}(X)
\end{equation}
the following relation is valid
\begin{equation}
\int_{\mathcal{D}_0}^{\mathcal{D}} \mathrm{d}\boldsymbol{r}=-\boldsymbol{\zeta}(\boldsymbol{u}) + 2( \boldsymbol{\eta}^{\prime}\boldsymbol{\varepsilon}^\prime + \boldsymbol{\eta}\boldsymbol{\varepsilon} )
+\frac12\boldsymbol{\mathfrak{Z}} (\boldsymbol{u}),\label{JIPmeromorphic}  \end{equation}
where the components $\mathfrak{Z}_j(\boldsymbol{u})$ of the vector $\boldsymbol{\mathfrak{Z}}(\boldsymbol{u})$ are $$\mathfrak{Z}_g(\boldsymbol{u})=0,\qquad \mathfrak{Z}_{g-1}(\boldsymbol{u})=\wp_{ggg}(\boldsymbol{u})$$
and the other components at $1\leq j<g-1$ are given by the $j\times j$ determinants
\begin{equation}
\mathfrak{Z}_j(\boldsymbol{u})=\left|  \begin{array}{cccccc}  \wp_{gg}(\boldsymbol{u})&-1&0&0&\ldots&0\\
2\wp_{g-1,g}(\boldsymbol{u})&\wp_{gg}(\boldsymbol{u})&-1&0&\ldots&0\\
\ldots&\ldots&\ldots&\ldots&\ldots&\ldots\\
(g-k)\wp_{k+1,g}(\boldsymbol{u})&\wp_{k+2,g}(\boldsymbol{u})&\ldots&\ldots&\ldots&\ldots\\
\ldots&\ldots&\ldots&\ldots&\ldots&\ldots\\
(g-j-1)\wp_{j+2,g}(\boldsymbol{u})&\wp_{j+3,g}(\boldsymbol{u})&\ldots&\ldots&\wp_{gg}(\boldsymbol{u})&-1\\
(g-j)\wp_{j+1,g,g}(\boldsymbol{u})&\wp_{j+2,g,g}(\boldsymbol{u})&
\ldots&\ldots&\wp_{g-1,g,g}(\boldsymbol{u})&\wp_{ggg}(\boldsymbol{u})
\end{array}\right|\label{zwp}
\end{equation}
\end{lemma}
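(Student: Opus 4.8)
\emph{Reduction to an identity of $1$-forms.} The plan is to deduce~\eqref{JIPmeromorphic} from the main formula~\eqref{main} by differentiating in the free endpoint and reading off a local expansion at $\infty$. Put $P'=P_0=\infty$ and $\mathcal{D}'=\mathcal{D}_0$ in~\eqref{main}; by hypothesis $\int_{g\infty}^{\mathcal{D}_0}\mathrm{d}\boldsymbol{u}=\boldsymbol{K}_\infty$, so the right-hand side becomes $\ln\sigma(\boldsymbol{u}_P-\boldsymbol{u})-\ln\sigma(\boldsymbol{u}_P-\boldsymbol{K}_\infty)$ up to a constant, where $\boldsymbol{u}_P:=\int_{\infty}^{P}\mathrm{d}\boldsymbol{u}$ and $\boldsymbol{u}=\int_{g\infty}^{\mathcal{D}}\mathrm{d}\boldsymbol{u}$. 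Differentiating with respect to $P$ and inserting the decomposition~\eqref{gammadiff} of $\Gamma$ — whose second summand integrates over the divisor difference precisely to $\big(\int_{\mathcal{D}_0}^{\mathcal{D}}\mathrm{d}\boldsymbol{r}\big)^{\!T}\mathrm{d}\boldsymbol{u}(P)$, while the first produces the third-kind differentials $\Omega_{P_k,(e_{i_k},0)}(P)$ of~\eqref{thirdkind} — one obtains the identity of meromorphic $1$-forms in $P$
\[
\Big(\int_{\mathcal{D}_0}^{\mathcal{D}}\mathrm{d}\boldsymbol{r}\Big)^{\!T}\mathrm{d}\boldsymbol{u}(P)+\sum_{k=1}^{g}\Omega_{P_k,(e_{i_k},0)}(P)=\mathrm{d}\ln\frac{\sigma(\boldsymbol{u}_P-\boldsymbol{u})}{\sigma(\boldsymbol{u}_P-\boldsymbol{K}_\infty)} .
\]
Thus the components of $\int_{\mathcal{D}_0}^{\mathcal{D}}\mathrm{d}\boldsymbol{r}$ occur as the coefficients of $\mathrm{d}\boldsymbol{u}(P)$, and I would isolate them by expanding both sides in the local parameter $\xi$ at $\infty$, $x=\xi^{-2}$.

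\emph{The expansion, the triangular system, and the constant.} Since $\mathrm{d}u_j=-\xi^{2(g-j)}\big(1+O(\xi^2)\big)\mathrm{d}\xi$ by the Weierstra{\ss} gap sequence and $\boldsymbol{u}_P\to0$ as $P\to\infty$, the third-kind term contributes only powers $\xi^{2m+1}$ ($m\ge0$) and $\xi^{2g+2m}$, hence is invisible in the coefficients of $\xi^{0},\xi^{2},\dots,\xi^{2g-2}$; matching these yields a triangular linear system for $\big[\int_{\mathcal{D}_0}^{\mathcal{D}}\mathrm{d}\boldsymbol{r}\big]_g,\dots,\big[\,\cdot\,\big]_1$ whose entries are the Taylor coefficients of $\ln\sigma$ at $-\boldsymbol{u}$ and at $-\boldsymbol{K}_\infty$, i.e.\ the $\zeta_i,\wp_{ij},\wp_{ijk},\dots$ there (both arguments being regular points of $\ln\sigma$, as $\mathcal{D}$ and $\mathcal{D}_0$ are non-special). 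The $\boldsymbol{K}_\infty$-evaluations collapse: because only even powers of $\xi$ enter and the entries of $\boldsymbol{u}_P$ all have odd order, every $\boldsymbol{K}_\infty$-contribution is an odd-order logarithmic derivative of $\sigma$ at $-\boldsymbol{K}_\infty$ apart from $\boldsymbol{\zeta}(-\boldsymbol{K}_\infty)$, and since $2\boldsymbol{K}_\infty$ lies in the period lattice (so $\boldsymbol{K}_\infty=2\omega\boldsymbol{\varepsilon}+2\omega'\boldsymbol{\varepsilon}'$ with $2\boldsymbol{\varepsilon},2\boldsymbol{\varepsilon}'\in\mathbb{Z}^g$), the parity $\sigma(-\boldsymbol{v})=\pm\sigma(\boldsymbol{v})$ together with the periodicity property of $\sigma$ gives $\wp_{i_1\cdots i_{2\ell+1}}(\boldsymbol{K}_\infty)=0$ for $\ell\ge1$ and $\boldsymbol{\zeta}(\boldsymbol{K}_\infty)=2(\boldsymbol{\eta}\boldsymbol{\varepsilon}+\boldsymbol{\eta}'\boldsymbol{\varepsilon}')$. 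This is exactly the additive constant in~\eqref{JIPmeromorphic}, and the leading ($\xi^0$) equation already gives $\big[\int_{\mathcal{D}_0}^{\mathcal{D}}\mathrm{d}\boldsymbol{r}\big]_g=-\zeta_g(\boldsymbol{u})+2(\boldsymbol{\eta}\boldsymbol{\varepsilon}+\boldsymbol{\eta}'\boldsymbol{\varepsilon}')_g$, i.e.\ $\mathfrak{Z}_g=0$.

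\emph{Back-substitution and the main obstacle.} It then remains to solve the triangular system downwards and to recognise the corrections as $\tfrac{1}{2}\boldsymbol{\mathfrak{Z}}(\boldsymbol{u})$. The $\xi^2$ equation gives $\big[\,\cdot\,\big]_{g-1}=-\zeta_{g-1}(\boldsymbol{u})+2(\dots)_{g-1}+\tfrac{1}{2}\wp_{ggg}(\boldsymbol{u})$, the curve-coefficient terms from the $\xi$-expansion of~\eqref{holomorphicdiff} cancelling against the already-found $g$-th component, in agreement with $\mathfrak{Z}_{g-1}=\wp_{ggg}$. In general, at step $j$ one must re-express the elementary symmetric functions of the divisor and the residual curve coefficients — introduced through the $\xi$-expansions of~\eqref{holomorphicdiff} and~\eqref{meromorphicdiff} — in terms of the surviving quantities $\wp_{kg}(\boldsymbol{u})$ and $\wp_{k,g,g}(\boldsymbol{u})$, using the Jacobi inversion~\eqref{JIP1}--\eqref{JIP2}; the recursion of the back-substitution (one additional superdiagonal $-1$ at each step, and the prefactors $g-k$ coming from the exponents in~\eqref{meromorphicdiff}) then assembles the answer into the $j\times j$ determinant~\eqref{zwp}. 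The remaining, odd-order coefficient equations are automatically consistent, reproducing only the third-kind part (equivalently, Klein's formula~\eqref{kleinformula}). The hard part is precisely this last identification of the back-substitution output with the determinant~\eqref{zwp}: a purely algebraic identity, provable by induction on $g-j$ or by direct column operations on~\eqref{zwp}, and it is exactly the combinatorial content of the Buchstaber--Leykin result~\cite{BL05}; everything preceding it is bookkeeping on the $\sigma$-function.
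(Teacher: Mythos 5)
There is nothing in the paper to check your details against: the paper does not prove this lemma at all, it quotes it from Buchstaber--Leykin~\cite{BL05} (and only warns that for $1\le j<g-1$ the formula differs from Baker's classical one). So your proposal has to stand on its own, and as it stands it has a genuine gap at exactly the decisive step. Your reduction — set $P'=\infty$, $\mathcal{D}'=\mathcal{D}_0$ in~\eqref{main}, differentiate in $P$ to get the $1$-form identity, expand at $\infty$ in $\xi$ with $x=\xi^{-2}$, use that the third-kind part has no even powers below $\xi^{2g}$, and evaluate the $\boldsymbol{K}_\infty$-terms by parity and quasi-periodicity ($\boldsymbol{\zeta}(\boldsymbol{K}_\infty)=2(\boldsymbol{\eta}\boldsymbol{\varepsilon}+\boldsymbol{\eta}'\boldsymbol{\varepsilon}')$, odd-order $\wp$'s vanishing at the half-period) — is sound, and it does deliver $\mathfrak{Z}_g=0$ and, after the $\lambda_{2g}$-cancellation you assert (which indeed works), $\mathfrak{Z}_{g-1}=\wp_{ggg}$. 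But these are precisely the two components on which everyone agrees; the content of the lemma is the determinant~\eqref{zwp} for $1\le j<g-1$, and that you do not derive.

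Concretely, at order $\xi^{2(g-j)}$ with $g-j\ge 2$ the even Taylor orders of $\boldsymbol{\zeta}(\boldsymbol{u}_P-\boldsymbol{u})$ contribute odd logarithmic derivatives of $\sigma$ of order $5,7,\dots$ (already at $\xi^4$ one gets $\wp_{ggggg}(\boldsymbol{u})$ from the quartic Taylor term in $u_{P,g}$), together with curve coefficients $\lambda_k$ from the subleading expansions of $\mathrm{d}u_i$ and from the already-determined higher components. The lemma's right-hand side contains neither: turning the raw output of your back-substitution into the $j\times j$ determinant in $\wp_{kg}$, $\wp_{kgg}$ alone requires the differential relations of the hyperelliptic $\wp$-hierarchy (e.g. $\wp_{gggg}=6\wp_{gg}^2+4\wp_{g-1,g}+\lambda_{2g}\wp_{gg}+\tfrac12\lambda_{2g-1}$ and its derivatives, or an equivalent symmetric-function computation via~\eqref{JIP1}--\eqref{JIP2}), and these must be invoked and the cancellations of all $\lambda$'s verified — this is not ``bookkeeping on the $\sigma$-function.'' Your proposal instead asserts that the recursion ``assembles the answer into the determinant'' and calls this ``exactly the combinatorial content of the Buchstaber--Leykin result,'' i.e.\ it defers to the very statement being proven, in the very range $1\le j<g-1$ where the paper flags that the classical literature got a different answer. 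Until that identification is actually carried out (say by induction on $g-j$ with the $\wp$-identities in hand), the proof is incomplete.
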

The corresponding formula in Buchstaber and Leykin~\cite{BL05} coincides with those given by Baker~\cite{ba97}, p.321, only for $j=g$ and $j=g-1$ and differs for $1\leq j<g-1$. Our further consideration is based on~\eqref{zwp}.

For our purposes it is necessary to rewrite the expressions for $\mathfrak{Z}_j$ in terms of symmetric functions of the divisor points $P_1=(x_1,y_1),\ldots,P_g=(x_g,y_g)$.
With the solution of the Jacobi inversion problem~\eqref{JIP1}-\eqref{JIP2} the components $\mathfrak{Z}_j(\boldsymbol{u})$ yield:
\begin{align*}
\mathfrak{Z}_g&=0\\
\mathfrak{Z}_{g-1}&=\frac{y_1}{(x_1-x_2)\cdots (x_1-x_g)}+\text{permutations},\\
\mathfrak{Z}_{g-2}&=y_1\frac{x_1-(x_2+\ldots+x_g)}{(x_1-x_2)\cdots (x_1-x_g)}+\text{permutations},\\
\mathfrak{Z}_{g-3}&=y_1\frac{x_1^2-x_1(x_2+\ldots+x_g)+x_2x_3+\ldots+x_{g-1}x_g }{(x_1-x_2)\cdots (x_1-x_g)}+\text{permutations} \ . \\
&\vdots
\end{align*}

\noindent
Note that the characteristics in the formula~\eqref{JIPmeromorphic} are not reduced.

From~\eqref{JIPmeromorphic} one obtains the relation between the periods of holomorphic and meromorphic integrals
\begin{equation}
{\eta}_{ik}= \left(   \zeta_i( \boldsymbol{\omega}_k + \boldsymbol{K}_{\infty}  )   \right)_{i,k=1,\ldots,g},\quad
{\eta}'_{ik}= \left(   \zeta_i( \boldsymbol{\omega}_k' + \boldsymbol{K}_{\infty}  )   \right)_{i,k=1,\ldots,g}
\end{equation}
with $2\boldsymbol{\omega}_k$ being the $k$-th column of the matrix $2\omega$.

\begin{proposition}
Using~\eqref{polara1} the formula~\eqref{main} can be rewritten in the form suitable for the inversion of the integral of the third kind
\begin{align}\begin{split}
&\int_{P'}^P \sum_{k=1}^g \left[ \frac{y+W_k}{x-Z_k} -\frac{y+W_k'}{x-Z_k'}    \right]\frac{\mathrm{d}x}{2y}\\&=-
\int_{P'}^P \mathrm{d}\boldsymbol{u}^T(x,y) \int_{\mathcal{D}'}^{\mathcal{D}} \mathrm{d}\boldsymbol{r}(z,w)+\mathrm{ln} \frac{\sigma\left(\int_{\infty}^P \mathrm{d}\boldsymbol{u}- \int_{g\infty}^{\mathcal{D}} \mathrm{d}\boldsymbol{u}  \right)}{\sigma\left(\int_{\infty}^P \mathrm{d}\boldsymbol{u}- \int_{g\infty}^{\mathcal{D}'} \mathrm{d}\boldsymbol{u}  \right)}
-\mathrm{ln} \frac{\sigma\left(\int_{\infty}^{P'} \mathrm{d}\boldsymbol{u}- \int_{g\infty}^{\mathcal{D}} \mathrm{d}\boldsymbol{u}  \right)}{\sigma\left(\int_{\infty}^{P'} \mathrm{d}\boldsymbol{u}- \int_{g\infty}^{\mathcal{D}'} \mathrm{d}\boldsymbol{u}  \right)} \ , \end{split}  \label{main1}
\end{align}
where the expression $\int_{\mathcal{D}'}^{\mathcal{D}} \mathrm{d}\boldsymbol{r}(z,w)$ can be calculated by~\eqref{JIPmeromorphic}.
\end{proposition}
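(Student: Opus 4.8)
The plan is to derive~\eqref{main1} directly from the already proven identity~\eqref{main}, by rewriting each third-kind differential $\Omega_{P_k,P_k'}(P)$ that occurs on the left of~\eqref{main1} as a line integral over $Q$ of the algebraic bi-differential $\Gamma(P,Q)$. The two ingredients are the two equivalent expressions for $\Gamma$ established above: the ``$F$-form'' $\frac{F(x,z)+2yw}{4(x-z)^2}\frac{\mathrm{d}x}{y}\frac{\mathrm{d}z}{w}$ of~\eqref{polara}, which is exactly the integrand under the double integral in~\eqref{main}, and the form~\eqref{gammadiff}, whose leading summand $\frac{\partial}{\partial z}\frac{y+w}{2(x-z)}\frac{\mathrm{d}x\,\mathrm{d}z}{y}$ is, up to the factor $\frac{\mathrm{d}x}{y}$ in $P$, the total differential in $Q=(z,w)$ of the rational function $\frac{y+w}{2(x-z)}$ on $X_g$.

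First I would integrate the equality of the two forms~\eqref{gammadiff} and~\eqref{polara} for $\Gamma(P,Q)$ in the variable $Q$ along a path from $P_k'=(Z_k',W_k')$ to $P_k=(Z_k,W_k)$ that avoids the poles. The $\partial/\partial z$-term telescopes, and by~\eqref{thirdkind} it integrates exactly to $\Omega_{P_k,P_k'}(P)=\bigl[\tfrac{y+W_k}{x-Z_k}-\tfrac{y+W_k'}{x-Z_k'}\bigr]\tfrac{\mathrm{d}x}{2y}$; the $\mathrm{d}\boldsymbol{r}$-term integrates to $\bigl(\int_{P_k'}^{P_k}\mathrm{d}\boldsymbol{r}\bigr)^{T}\mathrm{d}\boldsymbol{u}(x,y)$, a holomorphic form in $P$. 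Summing over $k=1,\dots,g$, writing $\sum_k\int_{P_k'}^{P_k}\mathrm{d}\boldsymbol{r}=\int_{\mathcal{D}'}^{\mathcal{D}}\mathrm{d}\boldsymbol{r}$ in the usual divisor notation, and integrating in $P$ from $P'$ to $P$ (the constant vector $\int_{\mathcal{D}'}^{\mathcal{D}}\mathrm{d}\boldsymbol{r}$ comes out of the $P$-integral), the left-hand side turns into the left-hand side of~\eqref{main}, while the right-hand side becomes the left-hand side of~\eqref{main1} plus the term $\int_{P'}^P\mathrm{d}\boldsymbol{u}^T(x,y)\int_{\mathcal{D}'}^{\mathcal{D}}\mathrm{d}\boldsymbol{r}(z,w)$. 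Substituting the $\sigma$-quotient right-hand side of~\eqref{main} for its left-hand side and solving for the remaining term reproduces~\eqref{main1} verbatim. The last sentence of the statement then needs only the splitting $\int_{\mathcal{D}'}^{\mathcal{D}}\mathrm{d}\boldsymbol{r}=\int_{\mathcal{D}_0}^{\mathcal{D}}\mathrm{d}\boldsymbol{r}-\int_{\mathcal{D}_0}^{\mathcal{D}'}\mathrm{d}\boldsymbol{r}$, with $\mathcal{D}_0$ the branch-point divisor of Lemma~\ref{lemma:zeta-formula}, followed by an application of~\eqref{JIPmeromorphic} to each piece.

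The one delicate point --- and the main, if modest, obstacle --- is to justify that $\frac{\partial}{\partial z}\frac{y+w}{2(x-z)}\,\mathrm{d}z$ is to be read as the exact $1$-form $\mathrm{d}_Q\bigl(\frac{y+w}{2(x-z)}\bigr)$ on $X_g$, so that its $Q$-integral genuinely telescopes to the third-kind differential~\eqref{thirdkind}. This is precisely what the construction behind~\eqref{gammadiff}--\eqref{polara} supplies: the term $\mathrm{d}\boldsymbol{r}(z,w)^T\mathrm{d}\boldsymbol{u}(x,y)$ was inserted exactly to annihilate the spurious pole of that exact form at $z=\infty$ and to symmetrise $\Gamma$, so nothing new has to be computed here. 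I would also record that all integration contours are assumed to avoid the poles $P_k,P_k'$ and the ``windows'' of~\cite{fg07}, in accordance with the standing conventions, which legitimises the telescoping and the interchange of the $P$- and $Q$-integrations, and that the $\varkappa$-contribution of~\eqref{polara1} has already been absorbed into the $\sigma$-functions on the right of~\eqref{main}, so the whole argument goes through entirely in terms of the $F$-form and the $\sigma$-quotients of~\eqref{main}.
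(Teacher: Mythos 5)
Your derivation is correct and is essentially the route the paper intends: \eqref{main} is stated for the $F$-form integrand, and the proposition follows by rewriting that integrand via \eqref{gammadiff}/\eqref{polara1}, letting the exact part telescope in $Q$ to the third-kind differentials \eqref{thirdkind} and pulling the constant vector $\int_{\mathcal{D}'}^{\mathcal{D}}\mathrm{d}\boldsymbol{r}$ out of the $P$-integration, exactly as you do. Your handling of the delicate point (reading $\frac{\partial}{\partial z}\frac{y+w}{2(x-z)}\,\mathrm{d}z$ as the total differential on the curve, forced by the equivalence of \eqref{gammadiff} and \eqref{polara}) and the final appeal to \eqref{JIPmeromorphic} match the paper's implicit argument.
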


\subsection{Stratification of the $\theta$-divisor and inversion}

The $\theta$-divisor $\widetilde{\Theta}$ is defined as the subset of $\widetilde{\mathrm{Jac}}(X_g)$ that nullifies the $\theta$-function and, therefore, the $\sigma$-function, i.e.
\begin{equation}
\widetilde{\Theta}=\left\{ \boldsymbol{v}\in \widetilde{\mathrm{Jac}}(X_g) \, \vert \,\theta(\boldsymbol{v})\equiv 0 \right\} \,.
\label{Theta1}
\end{equation}
The subset $\widetilde{\Theta}_k\subset \widetilde{\Theta}$, $0 \leq k < g$, is called $k$-th stratum if each point $\boldsymbol{v}\in \widetilde{\Theta}$ admits a parametrization
\begin{equation}
\widetilde{\Theta}_k := \Biggl\{\boldsymbol{v} \in \widetilde{\Theta} \Big|
\boldsymbol{v} = \sum_{j=1}^{k}\int_{\infty}^{P_j} \mathrm{d}\boldsymbol{v} + \boldsymbol{K}_{\infty}\Biggr\} \, , \label{strata}
\end{equation}
where $\widetilde{\Theta}_0=\{\boldsymbol{K}_{\infty}\}$ and $\widetilde{\Theta}_{g-1} = \widetilde{\Theta}$. We furthermore denote $\widetilde{\Theta}_g= \widetilde{\mathrm{Jac}}(X_g)$ and we have the natural embedding
\begin{equation}
\label{stratification} \widetilde{\Theta}_0\;\subset\; \widetilde{\Theta}_1\; \subset \ldots \subset\; \widetilde{\Theta}_{g-1}\; \subset \; \widetilde{\Theta}_g = \widetilde{\mathrm{Jac}}(X_g) \,.
\end{equation}

We define the $\theta$-function to be {\it vanishing to the order} $m(\widetilde{\Theta}_k)$ along the stratum $\widetilde{\Theta}_k$ if for all sets $\alpha_j$, $j = 1, \ldots, g$ with $0 \leq \alpha_1 + \ldots + \alpha_g < m$ holds
\begin{equation}
\frac{\partial^{\alpha_1+\ldots+\alpha_g}}{\partial u_1^{\alpha_1} \ldots \partial u_g^{\alpha_g} }\theta(\boldsymbol{v\vert\tau})\equiv 0,\quad \forall \boldsymbol{v}\in \widetilde{\Theta}_k \, , \label{vanish}
\end{equation}
and there is a certain set of $\alpha_j$, with $\alpha_1 + \ldots + \alpha_g = m$ such that (\ref{vanish}) does not hold. The orders $m(\widetilde{\Theta}_k)$ of the vanishing of $\theta(\widetilde{\Theta}_k + \boldsymbol{v})$ along the stratum $\widetilde{\Theta}_k$ for some genera are given in Table~\ref{table1}.

In the following we focus on the stratum $\widetilde{\Theta}_1$ corresponding to the variety $\Theta_1 \subset \mathrm{Jac}(X_g)$, which is the image of the curve inside the Jacobian,
\begin{equation}
\widetilde{\Theta}_1 := \Biggl\{\boldsymbol{v} \in \widetilde{\Theta} \Big|
\boldsymbol{v} = \int_{\infty}^{P} \mathrm{d}\boldsymbol{v} + \boldsymbol{K}_{\infty}\Biggr\} \, .
\end{equation}

We remark that another stratification was introduced in \cite{vanha95} for hyperelliptic curves of even order with two infinite points $\infty_+$ and $\infty_-$ that was implemented for studying the poles of functions on Jacobians of these curves. The same problem relevant to strata of the $\theta$-divisor was studied in~\cite{abendfed00}.

\begin{table}[t]
\begin{tabular}{ |l|l|l|l|l|l|l|l|l| }
$g$&$m(\widetilde{\Theta}_0)$&$m(\widetilde{\Theta}_1)$&$m(\widetilde{\Theta}_2)$&
$m(\widetilde{\Theta}_3)$&$m(\widetilde{\Theta}_4)$&$m(\widetilde{\Theta}_5)$&$m(\widetilde{\Theta}_6)$\\
\hline
1&1&0&-&-&-&-&-\\
2&1&1&0&-&-&-&-\\
3&2&1&1&0&-&-&-\\
4&2&2&1&1&0&-&-\\
5&3&2&2&1&1&0&-\\
6&3&3&2&2&1&1&0
\end{tabular}
\vskip0.3cm
\caption{Orders $m(\widetilde{\Theta}_k)$ of zeros $\theta(\widetilde{\Theta}_k+\boldsymbol{v})$ at $\boldsymbol{v}=0$ on the strata $\widetilde{\Theta}_k$.} \label{table1}
\end{table}

\section{Inversion of hyperelliptic integrals of higher genera}\label{sec:invhige}

\subsection{Inversion of holomorphic integrals}

For the case of genus two the inversion of a holomorphic hyperelliptic integral by the method of restriction to the $\theta$-divisor was obtained independently by Grant \cite{gr90} and Jorgenson \cite{jo92} in the form
\begin{equation}
x=-\left.\frac{\sigma_1(\boldsymbol{u})}{\sigma_2(\boldsymbol{u})}\right|_{\sigma(\boldsymbol{u})=0}, \qquad \boldsymbol{u}=(u_1,u_2)^T\label{gr} \,.
\end{equation}
This result was implemented in \cite{EPR03}, and explicitly worked out in the series of publications \cite{HackmannLaemmerzahl08,HackmannLaemmerzahl08a,Hackmannetal08,Hackmannetal09,Hackmannetal010}, and others.

The case of genus three was studied by \^Onishi \cite{on98}, where the inversion formula is given in the form
\begin{equation}
x=-\left.\frac{\sigma_{13}(\boldsymbol{u})}{\sigma_{23}(\boldsymbol{u})}\right|_{\sigma(\boldsymbol{u})=\sigma_3(\boldsymbol{u})=0}, \qquad \boldsymbol{u}=(u_1,u_2,u_3)^T\label{onishi} \,.
\end{equation}
Formula (\ref{onishi}) is based on the detailed analysis of the genus three KdV hierarchy and its restriction to the $\theta$-divisor. Below we will present the generalization of (\ref{gr}) and (\ref{onishi}) to higher genera. For doing this we first analyze the Schur--Weierstra{\ss} polynomials that represent the first term of the expansion of $\sigma(\boldsymbol{u})$ in the vicinity of the origin $\boldsymbol{u}\sim 0$. The $\theta$-divisor $\Theta$ and its strata $\Theta_k$ in the vicinity of the origin $\boldsymbol{u}\sim 0$ are given as polynomials in $\boldsymbol{u}$.

An analysis of the Schur--Weierstra{\ss} polynomials leads to 
\begin{proposition}\label{SWPOL}
The following statements are valid for the Schur--Weierstra{\ss} polynomials $S_{\boldsymbol{\pi}}(\boldsymbol{u})$
associated with a partition $\boldsymbol{\pi}$:
\begin{enumerate}

\item 
In the vicinity of the origin, an element $\boldsymbol{u}$ of the first stratum $\Theta_1 \subset \Theta$ is singled out by
\begin{equation}
S_{\boldsymbol{\pi}}(\boldsymbol{u})=0,\quad \frac{\partial^j }{\partial u_g^j}  S_{\boldsymbol{\pi}}(\boldsymbol{u})=0 \quad \forall\,j=1,\ldots,g-2 \,.
\end{equation}

\item The derivatives fulfil
\begin{equation}
\frac{\partial^j }{\partial u_g^j}  S_{\boldsymbol{\pi}}(\boldsymbol{u})
\begin{cases}
\equiv 0 \quad \text{if} \quad   1\leq j < \frac{g(g-1)}{2}\\
\not\equiv 0 \quad \text{if} \quad   j \geq \frac{g(g-1)}{2}
\end{cases} \text{with}\quad \boldsymbol{u}\in {\Theta}_1 \, .
\end{equation}

\item The following equalities are valid for $\boldsymbol{u}\in {\Theta}_1$
\begin{equation}
x\cong-\frac{1}{u_g^2}=- \frac{
 \dfrac{\partial^{M}}{\partial u_1 \partial u_{g}^{M-1} }  S_{\boldsymbol{\pi}}(\boldsymbol{u})}
{\dfrac{\partial^{M}}{\partial u_2\partial u_{g}^{M-1} } S_{\boldsymbol{\pi}}(\boldsymbol{u})}=-
\frac{\dfrac{\partial^{M+1}}{\partial u_1 \partial u_{g}^M }  S_{\boldsymbol{\pi}}(\boldsymbol{u})}
{\dfrac{\partial^{M+1}}{\partial u_2\partial u_{g}^M } S_{\boldsymbol{\pi}}(\boldsymbol{u})}  \, ,
\end{equation}
where $M= \frac12 (g-2)(g-3) + 1$.
\item The order of vanishing of $S_{\boldsymbol{\pi}}$ restricted to $\Theta_1$ is the rank of the partition $\pi$.
\end{enumerate}
\end{proposition}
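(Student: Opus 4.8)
The plan is to run the whole computation on the maximally degenerate curve $y^{2}=4x^{2g+1}$, whose fundamental $\sigma$--function \emph{equals} the staircase Schur--Weierstra{\ss} polynomial $S_{\boldsymbol{\pi}}$ with $\boldsymbol{\pi}=(g,g-1,\ldots,1)$ (see \cite{BEL99}). Two structural inputs make everything explicit. First, $S_{\boldsymbol{\pi}}$ is the Schur function $s_{\boldsymbol{\pi}}$ written in the KdV times $t_{2k-1}=u_{g-k+1}$, all even times set to zero, so that Jacobi--Trudi, Giambelli and the addition (Cauchy) formula are available; in particular $S_{\boldsymbol{\pi}}$ is a polynomial in $u_{1},\ldots,u_{g}$, weighted--homogeneous of weight $\binom{g+1}{2}$ with $\operatorname{wt}(u_{j})=2(g-j)+1$. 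Second, on the degenerate curve $\Theta_{1}$ is the rational image of the curve, parametrised by integrating the holomorphic differentials from $\infty$: $u_{j}(\xi)=-\xi^{\,2(g-j)+1}/(2(g-j)+1)$, with $\xi$ the local parameter at the branch point at infinity. Setting $\boldsymbol{a}^{\ast}:=\boldsymbol{u}(1)=\bigl(-\tfrac1{2g-1},\ldots,-\tfrac13,-1\bigr)$, and noting that $\boldsymbol{u}(\xi)$ is the weighted $\xi$--rescaling of $\boldsymbol{a}^{\ast}$, every weighted--homogeneous $F$ obeys $F(\boldsymbol{u}(\xi))=\xi^{\operatorname{wt}F}F(\boldsymbol{a}^{\ast})$ on $\Theta_{1}$; in particular $\partial_{u_{g}}^{\,j}S_{\boldsymbol{\pi}}$ restricted to $\Theta_{1}$ is the single monomial $c_{j}\,\xi^{\binom{g+1}{2}-j}$ with $c_{j}=(\partial_{u_{g}}^{\,j}S_{\boldsymbol{\pi}})(\boldsymbol{a}^{\ast})$, so the whole statement reduces to assertions about the constants $c_{j}$ and about the order of vanishing of $S_{\boldsymbol{\pi}}$ at the origin.

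I would prove item~(4) first, since it is independent and purely combinatorial. By Giambelli's hook identity, $S_{\boldsymbol{\pi}}=\det\bigl(s_{(\alpha_{i}|\alpha_{j})}\bigr)_{i,j=1}^{r}$ over the $r=\operatorname{rank}\boldsymbol{\pi}=\lfloor(g+1)/2\rfloor$ principal hooks (all diagonal, $\boldsymbol{\pi}$ being self--conjugate), with arms and legs $\alpha_{i}=g+1-2i$. In the $\boldsymbol{u}$--variables each hook Schur function has minimal degree $1$, its linear part a nonzero multiple of the power sum $p_{\alpha_{i}+\alpha_{j}+1}$ (read off from the Jacobi--Trudi expansion of a hook). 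Hence the $r\times r$ determinant has order $\geq r$ at the origin, and its degree--$r$ part is the Hankel determinant $\det\bigl(p_{2g+3-2(i+j)}/(2g+3-2(i+j))\bigr)_{i,j=1}^{r}$ in the $r$ distinct odd power sums $p_{2g-1},p_{2g-3},\dots$, all of which survive the even--times--zero reduction and equal scalar multiples of $u_{g},u_{g-1},\dots$; since the diagonal monomial $\prod_{i}p_{2g+3-4i}$ is produced by the identity permutation only, this Hankel determinant is a nonzero polynomial. Therefore $S_{\boldsymbol{\pi}}$ vanishes at the origin to order exactly $r$, which — the origin lying on $\Theta_{1}$ — is item~(4).

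For item~(2) the reduction above turns the problem into a one--variable one: with $t:=u_{g}$ and the remaining times frozen at $t_{2k-1}=-1/(2k-1)$, $k\geq2$, one has $c_{j}=P^{(j)}(-1)$ for the polynomial $P(t):=S_{\boldsymbol{\pi}}(a^{\ast}_{1},\dots,a^{\ast}_{g-1},t)$ of degree $\binom{g+1}{2}$, so (2) is precisely the statement that $(t+1)^{\binom{g}{2}}$ divides $P(t)$ while $(t+1)^{\binom{g}{2}+1}$ does not. Writing $s=t+1$, the generating function of the complete symmetric functions entering $P$ collapses to $G(z)=e^{sz}\sqrt{(1-z)/(1+z)}$, whence $P=S_{\boldsymbol{\pi}}(\mathcal{S}\cup\mathcal{C})$, with $\mathcal{S}$ the one--parameter alphabet ($p_{1}=s$, $p_{\geq2}=0$) and $\mathcal{C}$ the fixed alphabet with $h_{n}(\mathcal{C})=[z^{n}]\sqrt{(1-z)/(1+z)}$; the addition formula then gives $P=\sum_{\mu\subseteq\boldsymbol{\pi}}\frac{f^{\boldsymbol{\pi}/\mu}}{(\binom{g+1}{2}-|\mu|)!}\,s^{\binom{g+1}{2}-|\mu|}\,s_{\mu}(\mathcal{C})$. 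Divisibility by $s^{\binom{g}{2}}$ is thus the family of identities $\sum_{\mu\subseteq\boldsymbol{\pi},\,|\mu|=N}f^{\boldsymbol{\pi}/\mu}\,s_{\mu}(\mathcal{C})=0$ for all $N>g$. This is the heart of the matter and the main obstacle: the individual $s_{\mu}(\mathcal{C})$ do not vanish, so a termwise argument fails, and one must instead exploit the defining symmetry $G(z)G(-z)=1$ — equivalently, the vanishing of \emph{all} even power sums of $\mathcal{S}\cup\mathcal{C}$ — to force the factor $s^{\binom{g}{2}}$ out of the determinant $\det\bigl(h_{g+1-2i+j}(\mathcal{S}\cup\mathcal{C})\bigr)$; the complementary non--divisibility ($c_{g(g-1)/2}\neq0$, and the residual degree--$g$ cofactor $Q$ of $P$ together with its derivatives of order below $g$ does not vanish at $t=-1$, so $c_{j}\neq0$ for $\tfrac{g(g-1)}{2}\leq j\leq\binom{g+1}{2}$) is a finite check.

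Items~(1) and~(3) then follow. For~(1): by~(2) the $g-1$ functions $S_{\boldsymbol{\pi}},\partial_{u_{g}}S_{\boldsymbol{\pi}},\dots,\partial_{u_{g}}^{\,g-2}S_{\boldsymbol{\pi}}$ all vanish on $\Theta_{1}$ (indices $\leq g-2<\binom{g}{2}$), and a Jacobian--rank computation at a generic point of $\Theta_{1}$ near the origin shows that they are transverse, so their common zero set near the origin is exactly the one--dimensional $\Theta_{1}$. For~(3): on $\Theta_{1}$ both $\partial^{M}_{u_{1}u_{g}^{M-1}}S_{\boldsymbol{\pi}}$ and $\partial^{M}_{u_{2}u_{g}^{M-1}}S_{\boldsymbol{\pi}}$ are, by the weighted--homogeneity reduction, single monomials in $\xi$ of one and the same degree, so their quotient equals the quotient of the constants $(\partial^{M}_{u_{1}u_{g}^{M-1}}S_{\boldsymbol{\pi}})(\boldsymbol{a}^{\ast})$ and $(\partial^{M}_{u_{2}u_{g}^{M-1}}S_{\boldsymbol{\pi}})(\boldsymbol{a}^{\ast})$; the exponent $M=\tfrac12(g-2)(g-3)+1$ is exactly the value at which these are the first non--vanishing constants — the orders of vanishing in the $u_{g}$--direction on $\Theta_{1}$ of $\partial_{u_{1}}S_{\boldsymbol{\pi}}$ and $\partial_{u_{2}}S_{\boldsymbol{\pi}}$ are computed by the same mechanism as in~(2) — and expanding $S_{\boldsymbol{\pi}}$ to that order reproduces the equalities of~(3), which are the leading--order forms of the Grant~\eqref{gr} and \^Onishi~\eqref{onishi} inversion formulas; the identical computation with one more $u_{g}$--derivative gives the second expression.
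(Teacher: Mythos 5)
First, note that the paper itself offers no proof of Proposition~\ref{SWPOL}: it is stated as the outcome of ``an analysis of the Schur--Weierstra{\ss} polynomials'', with the background delegated to \cite{BEL99} and \cite{ehkkl11}. So your attempt has to stand on its own. Your framework is sound and well chosen: on the degenerate curve $y^2=4x^{2g+1}$ the $\sigma$-function is the staircase polynomial, $\Theta_1$ is parametrized by $u_j=-\xi^{2(g-j)+1}/(2(g-j)+1)$, weighted homogeneity turns every restriction to $\Theta_1$ into a single monomial in $\xi$, and item (2) correctly reduces to the order of the zero of the one-variable polynomial $P(t)=S_{\boldsymbol{\pi}}(a_1^\ast,\dots,a_{g-1}^\ast,t)$ at $t=-1$; the generating-function collapse to $e^{sz}\sqrt{(1-z)/(1+z)}$ is legitimate because Jacobi--Trudi only involves $h_n$ with $n\le 2g-1$. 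Your argument for item (4) (read as the vanishing order of $S_{\boldsymbol{\pi}}$ at the origin), via Giambelli and the nonvanishing Hankel determinant of the linear parts, is essentially complete; be aware, though, that under the paper's own definition \eqref{vanish} of vanishing order along a stratum the order of $S_{\boldsymbol{\pi}}$ along $\Theta_1$ is $\lfloor g/2\rfloor$ (e.g.\ $\partial_{u_1}S_{3,2,1}=u_3\not\equiv0$ on $\Theta_1$), which differs from the rank for odd $g$, so your reading is the one that makes the claim true and should be flagged. The decisive gap is that the crux of item (2) --- the divisibility $(t+1)^{g(g-1)/2}\mid P(t)$, equivalently the vanishing of $\sum_{|\mu|=N}f^{\boldsymbol{\pi}/\mu}s_\mu(\mathcal{C})$ for all $N>g$ --- is exactly the step you yourself call ``the heart of the matter and the main obstacle'' and leave unproved, and the complementary nonvanishing of all $c_j$ with $j\ge g(g-1)/2$ is deferred to ``a finite check'', which is finite only for each fixed $g$ and therefore does not establish the statement for all genera. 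Since items (1) and (3) are built on (2), only item (4) is actually proved.

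Two further concrete defects. In item (1), independence of the differentials of $S_{\boldsymbol{\pi}},\partial_{u_g}S_{\boldsymbol{\pi}},\dots,\partial_{u_g}^{g-2}S_{\boldsymbol{\pi}}$ at \emph{generic} points of $\Theta_1$ (which you only assert, without the claimed Jacobian computation) would show that $\Theta_1$ is a component of their common zero locus, not that it is the whole locus near the origin; ruling out extra branches through the origin needs a separate argument (for $g=3$ one can eliminate $u_1$ via $\partial_{u_3}S=0$ and find $S=-(u_2-u_3^3/3)^2$ on that locus, so the zero set is exactly the parametrized curve, but you give nothing of this kind for general $g$). In item (3) there is an internal inconsistency: since $\mathrm{wt}(u_1)-\mathrm{wt}(u_2)=2$, the restrictions to $\Theta_1$ of $\partial_{u_1}\partial_{u_g}^{M-1}S_{\boldsymbol{\pi}}$ and $\partial_{u_2}\partial_{u_g}^{M-1}S_{\boldsymbol{\pi}}$ are monomials in $\xi$ of degrees differing by $2$, not ``of one and the same degree''; a constant quotient could never reproduce $\pm 1/u_g^2\sim\xi^{-2}$ --- the degree mismatch is precisely what generates the factor $x$. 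Moreover you never show that $M-1$ is the exact order at which both numerator and denominator first fail to vanish on $\Theta_1$, nor that the ratio of the two leading constants is exactly the value making the quotient equal to $1/u_g^2$ on the nose; both require the same kind of explicit evaluation at $\boldsymbol{a}^\ast$ that is missing from (2). Until the divisibility identity and these evaluations are supplied, the proposal is a programme rather than a proof.
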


It was noted in \cite{BEL99} that the Schur--Weierstra{\ss} polynomials respect all statements of the Riemann singularity theorem. In particular, if
\begin{equation}
\boldsymbol{Z}=\left(\frac{z^{2g-1}}{2g-1},\ldots, \frac{z^{2k-1}}{2k-1}\ldots,\frac{z^3}{3},z \right)
\end{equation}
and if $\boldsymbol{\pi}$ is the partition at the infinite Weierstra{\ss} point of the hyperelliptic curve $X_g$ of genus $g$, then the function
\begin{equation}
G(z) := S_{\boldsymbol{\pi}}(\boldsymbol{Z}-\boldsymbol{u})
\end{equation}
either has $g$ zeros or vanishes identically.
Moreover we will conjecture here that the properties of the Schur--Weierstra{\ss} polynomials given in Proposition \ref{SWPOL} can be ``lifted" to the fundamental $\sigma$-function~\eqref{sigma}.

The above analysis permits to conjecture the following inversion formula for the general case of hyperelliptic curves of genus $g>2$~\cite{ehkkl11}
\begin{equation}
x=-\left. \frac{
 \dfrac{\partial^{M+1}}{\partial u_1 \partial u_{g}^M }  \sigma(\boldsymbol{u})}
{\dfrac{\partial^{M+1}}{\partial u_2\partial u_{g}^M } \sigma(\boldsymbol{u})} \right|_{\boldsymbol{u}\in \Theta_1}, \qquad M=\frac{(g-2)(g-3)}{2}+1
\label{matprev}
\end{equation}
and
\begin{equation}
{\Theta}_1 = \left\{\boldsymbol{u} \in {\rm Jac}(X_g)\; \Big|\; \sigma(\boldsymbol{u})=0,\; \frac{\partial^j }{\partial u_g^j} \sigma(\boldsymbol{u})=0 \quad \forall\,j=1,\ldots, g-2\right\} \,. \label{restriction}
\end{equation}

The analog of this formula for strata $\Theta_{k}$, $ 1<k<g $ and $(n,s)$-curves in the terminology of \cite{BEL99} was recently considered by Matsutani and Previato   \cite{matprev08}, \cite{matprev10}.

\subsection{Inversion of the integrals of the second kind }

Also the meromorphic integrals can be expressed in terms of the $\sigma$-functions restricted to the stratum $\Theta_1$ of the $\theta$-divisor. To do that we consider the $\zeta$-formula~\eqref{JIPmeromorphic} and move to infinity the $g-1$ points of the divisor $\mathcal{D}$ on both sides of the equality. The poles on both sides cancel and the non-vanishing terms yield the required formula. We demonstrate this procedure here for the example of the genus three case given in \cite{eg03} (the analysis of the genus two case can be found in \cite{EPR03}). For that we use the following proposition

\begin{proposition} Let $X_3$ be the genus three curve $y^2=4x^{7}+\lambda_6 x^6+\ldots+\lambda_0$.
Let $\boldsymbol{u}\in \Theta_1=\{\boldsymbol{u}\vert \sigma(\boldsymbol{u})=\sigma_3(\boldsymbol{u})=0\}$. Then the following formulae are valid
\begin{align}\begin{split}
&\int_{P_0}^P\mathrm{d}r_3=-\frac{\sigma_{23}(\boldsymbol{u})}{\sigma_2(\boldsymbol{u})}+c_3 \ ,\\
&\int_{P_0}^P\mathrm{d}r_2=-\frac12\frac{\sigma_{22}(\boldsymbol{u})}{\sigma_2(\boldsymbol{u})}+c_2 \ , \\
&\int_{P_0}^P\mathrm{d}r_1=\frac12\frac{\sigma_1(\boldsymbol{u})\sigma_{22}(\boldsymbol{u})}{\sigma_2(\boldsymbol{u})^2}-\frac{\sigma_{12}(\boldsymbol{u})}{\sigma_2(\boldsymbol{u})}+c_1 \ , \end{split}  \label{gibbons}
\end{align}
where $P=(x,y)$ is an arbitrary point of the curve, and $P_0=(x_0,y_0)\neq(\infty,\infty)$ is any fixed point. It is convenient to choose in particular $P_0=(e_{2g},0)$. The constants $c_i$ are fixed by the requirement that the right hand side vanishes at $P=P_0$
\end{proposition}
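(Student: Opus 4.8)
The plan is to specialize the main formula~\eqref{main1} (equivalently, the $\zeta$-formula~\eqref{JIPmeromorphic}) to genus three and then perform the limit in which $g-1=2$ of the divisor points run off to the branch point at infinity, keeping one generic point $P=(x,y)$. First I would take $\mathcal{D}'=\mathcal{D}_0$ to be the divisor of $g$ branch points singled out in Lemma~\ref{lemma:zeta-formula}, so that the constant shift $2(\boldsymbol{\eta}'\boldsymbol{\varepsilon}'+\boldsymbol{\eta}\boldsymbol{\varepsilon})$ is absorbed once and for all, and I would write, for a divisor $\mathcal{D}=P_1+P_2+P_3$,
\begin{equation}
\int_{\mathcal{D}_0}^{\mathcal{D}}\mathrm{d}\boldsymbol{r}=-\boldsymbol{\zeta}(\boldsymbol{u})+2(\boldsymbol{\eta}'\boldsymbol{\varepsilon}'+\boldsymbol{\eta}\boldsymbol{\varepsilon})+\tfrac12\boldsymbol{\mathfrak{Z}}(\boldsymbol{u}),\qquad \boldsymbol{u}=\int_{3\infty}^{\mathcal{D}}\mathrm{d}\boldsymbol{u}.
\end{equation}
Now set $P_2,P_3\to\infty$. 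On the left-hand side the $r_i$ have their only pole at infinity, of order $2g-2i+2$, so a priori the limit is singular; the key point is that the singular parts are exactly cancelled by the divergence of $\boldsymbol{\zeta}(\boldsymbol{u})$ as $\boldsymbol{u}$ approaches the stratum $\Theta_1$ (where $\sigma$ and $\sigma_3$ vanish). So the strategy is to expand both sides in the local parameter $\xi=x^{-1/2}$ at each of the two points going to infinity, identify and match the polar parts, and read off the finite remainder.

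Concretely, I would use the solution of the Jacobi inversion problem~\eqref{JIP1}--\eqref{JIP2} to express $\boldsymbol{\zeta}(\boldsymbol{u})$ and the determinantal entries $\mathfrak{Z}_j(\boldsymbol{u})$ through the symmetric functions of $x_1,x_2,x_3$, and then, when $x_2,x_3\to\infty$, re-express everything through $\sigma$-quotients. The mechanism is the same as in the holomorphic inversion~\eqref{matprev}: as $\boldsymbol{u}$ tends to $\Theta_1$, $\sigma(\boldsymbol{u})\to0$ and $\sigma_3(\boldsymbol{u})\to0$, and ratios such as $\sigma_{23}/\sigma_2$, $\sigma_{22}/\sigma_2$, $\sigma_1\sigma_{22}/\sigma_2^2-\sigma_{12}/\sigma_2$ have finite limits that encode precisely the finite parts of $\int_{P_0}^{P}\mathrm{d}r_i$. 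The ordering $dr_3,dr_2,dr_1$ in~\eqref{gibbons} is dictated by the order of the pole: $dr_3$ has the mildest pole at infinity and appears with a single $\sigma$-derivative, $dr_2$ with a second derivative (and the factor $\tfrac12$ coming from the $\tfrac12\boldsymbol{\mathfrak{Z}}$ term, whose leading entry is $\mathfrak{Z}_{g-1}=\wp_{ggg}$), and $dr_1$ requires the next determinant $\mathfrak{Z}_{g-2}$, producing the two-term combination. I would carry out the Laurent expansion in $\xi$ up to the order needed to isolate the finite coefficient, using $y=2x^{g+1/2}(1+O(1/x))=2\xi^{-(2g+1)}(1+\ldots)$ and the expansions of $\zeta_i$, $\wp_{ij}$, $\wp_{ijk}$ near $\Theta_1$.

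The integration constant is then handled exactly as stated: since both sides of~\eqref{JIPmeromorphic} are, after the limit, functions of the single point $P$ (with $\boldsymbol{u}=\int_{\infty}^{P}\mathrm{d}\boldsymbol{u}+\boldsymbol{K}_\infty$ on $\Theta_1$), and since the left side of~\eqref{gibbons} is a definite integral starting at $P_0$, one simply subtracts the value at $P=P_0$; choosing $P_0=(e_{2g},0)$ is convenient because then $\boldsymbol{u}$ lands on a half-period and the $\sigma$-quotients there are computable in closed form, but any fixed $P_0\neq(\infty,\infty)$ works and defines $c_1,c_2,c_3$. The main obstacle I anticipate is the bookkeeping of the coincident limits $P_2,P_3\to\infty$: one must be careful that the two points approach infinity independently (not along the diagonal, to keep $\mathcal{D}$ non-special as long as possible), track the cancellation of the $\xi^{-k}$ terms between $-\boldsymbol{\zeta}(\boldsymbol{u})$ and $\int\mathrm{d}\boldsymbol{r}$ order by order, and verify that the surviving combinations are precisely the logarithmic derivatives of $\sigma$ appearing on $\Theta_1$. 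This is where the genus-three-specific structure of~\eqref{zwp}, namely $\mathfrak{Z}_3=0$, $\mathfrak{Z}_2=\wp_{333}$, $\mathfrak{Z}_1=\wp_{33}\wp_{233}-\wp_{133}$ after substitution, does the real work, and matching it against~\eqref{gibbons} is the crux of the computation; everything else is the routine expansion already indicated in~\cite{eg03}.
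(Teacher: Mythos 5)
Your plan is essentially the paper's own proof: both start from the $\zeta$-formula \eqref{JIPmeromorphic} specialized to $g=3$, send two of the three divisor points to the infinite branch point while expanding the holomorphic and meromorphic integrals in the local parameter $\xi$ (with $x=1/\xi^2$), let the polar terms cancel against the divergence of $\boldsymbol{\zeta}(\boldsymbol{u})$, and read off the finite remainder as $\sigma$-quotients restricted to $\Theta_1$, fixing the constants $c_i$ by evaluation at $P_0$. The paper merely organizes the limit in two stages (first restricting to $\Theta_2$ using $\sigma(\boldsymbol{u})=0$, then to $\Theta_1$ using $\sigma_3(\boldsymbol{u})=0$) and closes the computation with the \^Onishi relations on $\Theta_1$, namely $\sigma_{333}=2\sigma_2$, $\sigma_{233}=\sigma_{23}^2/\sigma_2-\sigma_1$, $\sigma_{133}=\sigma_1\sigma_{23}^2/\sigma_2^2+\sigma_1^2/\sigma_2$, which is precisely the ingredient your ``match the finite parts'' step would have to reproduce. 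One correction to your bookkeeping: for $g=3$ the vector \eqref{zwp} gives $\mathfrak{Z}_3=0$, $\mathfrak{Z}_2=\wp_{333}$ and $\mathfrak{Z}_1=\wp_{33}\wp_{333}+2\wp_{233}$, not $\wp_{33}\wp_{233}-\wp_{133}$ as you wrote; with that value the rest of your outline goes through as in the paper.
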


\begin{proof}
In the case of $g=3$ the vector $\boldsymbol{\mathfrak{Z}}(\boldsymbol{u})$~\eqref{zwp} yields
\begin{equation}  \boldsymbol{\mathfrak{Z}}(\boldsymbol{u}) \left( \begin{array}{c} \mathfrak{Z}_1(\boldsymbol{u})\\
 \mathfrak{Z}_2(\boldsymbol{u})\\
 \mathfrak{Z}_3(\boldsymbol{u})
 \end{array}  \right)  = \left( \begin{array}{c}  \wp_{33}(\boldsymbol{u}) \wp_{333}(\boldsymbol{u})+2\wp_{233}(\boldsymbol{u})\\
 \wp_{333}(\boldsymbol{u})\\0
 \end{array} \right) \ . \end{equation}
First we restrict relation~\eqref{JIPmeromorphic} to the stratum $\Theta_2$. To do that we use the following expansions
\begin{align}
\left.\int_{\infty}^{P_3}\mathrm{d}\boldsymbol{u}\right|_{x_3=1/\xi^2}=\left(\begin{array}{c}-\frac15\xi^5+{\mathscr{O}}(\xi^7)\\
-\frac13\xi^3+{\mathscr{O}}(\xi^5)\\-\xi+{\mathscr{O}}(\xi^3)
 \end{array}\right), \quad \left.\int_{(e_6,0)}^{P_3}\mathrm{d}\boldsymbol{r}\right|_{x_3=1/\xi^2}=\left(\begin{array}{c}\xi^{-5}+{\mathscr{O}}(\xi^{-3})\\
\xi^{-3}+{\mathscr{O}}(\xi^{-1})\\
\xi^{-1}+{\mathscr{O}}(\xi)
 \end{array}\right)
\end{align}
taking into account the condition $\sigma(\boldsymbol{u})=0$ in the expansions. Then we restrict the resulting formulae to the stratum $\Theta_1$. To do that we expand the holomorphic and meromorphic integrals  $\int_{\infty}^{P_2}\mathrm{d}\boldsymbol{u}$ and    $\int_{(e_4,0)}^{P_2}\mathrm{d}\boldsymbol{r}$ in the vicinity of $P_2=(\infty,\infty)$ taking into account the condition $\sigma_3(\boldsymbol{u})=0$. From that the relations~\eqref{gibbons} follow. For the simplification of the obtained relations we used the formulae from \cite{on98} below, which result from the restriction to $\Theta_1$ of the KdV hierarchy associated with the genus three curve
\begin{eqnarray}
\left.\left\{\sigma_{333}(\boldsymbol{u})-2\sigma_2(\boldsymbol{u})\right\}\right|_{\Theta_1} &=& 0 \nonumber \\
\left.\left\{\sigma_{233}(\boldsymbol{u})-\frac{\sigma_{23}(\boldsymbol{u})^2}{\sigma_2(\boldsymbol{u})}+\sigma_1(\boldsymbol{u})\right\}\right|_{\Theta_1} &=& 0\\
\left.\left\{\sigma_{133}(\boldsymbol{u})-\frac{\sigma_1(\boldsymbol{u})\sigma_{23}(\boldsymbol{u})^2}{\sigma_2(\boldsymbol{u})^2}-\frac{\sigma_1(\boldsymbol{u})^2}{\sigma_2(\boldsymbol{u})}\right\}\right|_{\Theta_1}&=&0 \nonumber \ .
\end{eqnarray}
\end{proof}
Cases of higher genera cases can be considered in an analogously way though it is not possible to present a general formula.

\subsection{Inversion of the integral of the third kind}\label{subsec:main1}

We consider~\eqref{main1} when all points $P_k$, $P_k'$ have the same coordinate $Z_k'=Z_k$ but $W_k'=-W_k$ and choose the divisors
\begin{equation}
\mathcal{D}= \{(Z,W), (e_4,0),\ldots, (e_{2g},0)   \},\quad \mathcal{D}'= \{ (Z,W'), (e_4,0),\ldots, (e_{2g},0)\} \ . 
\end{equation}
It is evident that
\begin{align}
\int_{ \mathcal{D}' }^{\mathcal{D}} \mathrm{d}\boldsymbol{r}(z,w)=\int_{ (e_2,0) }^{(Z,W)} \mathrm{d}\boldsymbol{r}(z,w)-\int_{ (e_2,0) }^{(Z,W')} \mathrm{d}\boldsymbol{r}(z,w) \ .
\end{align}
From the $\zeta$--formula (\ref{JIPmeromorphic}) we get
\begin{align*}
\int_{ (e_2,0) }^{(Z,W)} \mathrm{d}\boldsymbol{r}(z,w)=-\boldsymbol{\zeta}\left( \int_{(e_2,0)}^{(Z,W)} \mathrm{d}\boldsymbol{u} + \boldsymbol{K}_\infty \right) + 2( \boldsymbol{\eta}^{\prime}\boldsymbol{\varepsilon}^\prime + \boldsymbol{\eta}\boldsymbol{\varepsilon} )+\frac12\boldsymbol{\mathfrak{Z}}(Z,W) ,\\
 \int_{ (e_2,0) }^{(Z,-W)} \mathrm{d}\boldsymbol{r}(z,w)=\boldsymbol{\zeta}\left( \int_{(e_2,0)}^{(Z,W)} \mathrm{d}\boldsymbol{u} + \boldsymbol{K}_\infty \right) - 2( \boldsymbol{\eta}^{\prime}\boldsymbol{\varepsilon}^\prime + \boldsymbol{\eta}\boldsymbol{\varepsilon} )
 - \frac12\boldsymbol{\mathfrak{Z}}(Z,W),
\end{align*}
where $\mathfrak{Z}_g(Z,W)=0$ and for $1\leq j<g$ we have
\begin{equation}
\mathfrak{Z}_j(Z,W)=\frac{W}{\prod_{k=2}^{g} (Z-e_{2k})}\sum_{k=0}^{g-j-1}(-1)^{g-k+j+1}Z^kS_{g-k-j-1}(\boldsymbol{e}) \, .
 \end{equation}
The $S_k(\boldsymbol{e})$ are elementary symmetric functions of order $k$ built on  $g-1$ branch points $e_4,\ldots, e_{2g}$: $S_0=1$, $S_1=e_4+\ldots+e_{2g}$ etc. Then the solution of the inversion problem for the integral of the third kind~\eqref{main1} takes the form
\begin{align}\begin{split}
W\int_{P'}^P\frac{1}{x-Z}\frac{\mathrm{d}x}{y} = & 2\int_{P'}^P \mathrm{d}\boldsymbol{u}^T(x,y) \left[ \boldsymbol{\zeta} \left( \int_{(e_2,0)}^{(Z,W)} \mathrm{d} \boldsymbol{u} + \boldsymbol{K}_\infty  \right) - 2( \boldsymbol{\eta}^{\prime}\boldsymbol{\varepsilon}^\prime + \boldsymbol{\eta}\boldsymbol{\varepsilon} )  - \frac12 \boldsymbol{\mathfrak{Z}}(Z,W)    \right]\\
& +\ln \frac{\sigma\left(\int_{\infty}^P \mathrm{d}\boldsymbol{u}- \int_{(e_2,0)}^{(Z,W)} \mathrm{d}\boldsymbol{u} - \boldsymbol{K}_\infty  \right)}{\sigma\left(\int_{\infty}^P \mathrm{d}\boldsymbol{u}+ \int_{(e_2,0)}^{(Z,W)} \mathrm{d}\boldsymbol{u} - \boldsymbol{K}_\infty \right)}
- \mathrm{ln} \frac{\sigma\left(\int_{\infty}^{P'} \mathrm{d}\boldsymbol{u} - \int_{(e_2,0)}^{(Z,W)} \mathrm{d}\boldsymbol{u} - \boldsymbol{K}_\infty  \right)}{\sigma\left(\int_{\infty}^{P'} \mathrm{d}\boldsymbol{u} + \int_{(e_2,0)}^{(Z,W)} \mathrm{d}\boldsymbol{u} - \boldsymbol{K}_\infty  \right)}.\end{split} \label{main1-2}
\end{align}

\noindent Note, that~\eqref{main1-2} represents a natural generalization of the known antiderivative for elliptic functions
\begin{equation}
\wp'(v)\int\frac{\mathrm{d}u}{\wp(u)- \wp(v)}=2u\zeta(v)+\ln\sigma(u-v)-\ln\sigma(u+v) \ . \label{maingen1}
\end{equation}

The inversion procedure for the integral of the third kind 
\begin{equation}
 W\int_{(e_2,0)}^P\frac{1}{x-Z}\frac{\mathrm{d}z}{w}=t, \label{thirdkindinversion}
\end{equation}
is described as follows. Here for simplicity the lower bound is fixed at a branch point, say at $P'=(e_2,0)$.

\noindent In this case the formula~\eqref{main1-2} takes the form
\begin{eqnarray}
&& W\int_{(e_2,0)}^P\frac{1}{x-Z}\frac{\mathrm{d}x}{y}= 2(\boldsymbol{\mathfrak{A}}_2-\boldsymbol{u}) \left[ \boldsymbol{\zeta} (\boldsymbol{v} + \boldsymbol{K}_\infty ) - 2( \boldsymbol{\eta}^{\prime}\boldsymbol{\varepsilon}^\prime + \boldsymbol{\eta}\boldsymbol{\varepsilon} )  - \frac12 \boldsymbol{\mathfrak{Z}}(Z,W) \right] \nonumber \\
&&
\qquad\qquad+\mathrm{ln} \frac{\sigma\left(\boldsymbol{u}- \boldsymbol{v} - \boldsymbol{K}_\infty \right)}
{\sigma\left(\boldsymbol{u}+ \boldsymbol{v} - \boldsymbol{K}_\infty \right)}
-\mathrm{ln} \frac{\sigma\left(\boldsymbol{\mathfrak{A}}_2- \boldsymbol{v} - \boldsymbol{K}_\infty  \right)}{\sigma\left(\boldsymbol{\mathfrak{A}}_2+ \boldsymbol{v} - \boldsymbol{K}_\infty \right)} \ ,  \label{main2}
\end{eqnarray}
where 
\begin{equation}
 \boldsymbol{v}=\int_{{(e_2,0)}}^{(Z,W)}\mathrm{d}\boldsymbol{u} \ , \quad   \boldsymbol{\mathfrak{A}}_2=\int_{\infty}^{(e_2,0)}\mathrm{d}\boldsymbol{u} \, \quad\text{and} \, \quad \boldsymbol{u}\in \Theta_1, \quad \boldsymbol{u}=\int_{\infty}^{(x,y)}\mathrm{d}\boldsymbol{u} \nonumber \ .
\end{equation}

\subsection{Solving the inversion problem}

Now we are in a position to describe the inversion procedure for the integral~\eqref{inversion1}. The formulae given above permit to represent equation~\eqref{inversion1} in the form
\begin{equation}
\mathcal{F}( x, u_1,\ldots,u_g, \mathrm{constants}  )=t \ .
\end{equation}
The function $\mathcal{F}$ depends on $x$ via the $g$ variables $u_1,\ldots, u_g$ that are abelian images of $(x,y)$ and various constants given by the poles of the integrals of the third kind and the coefficients $a_k,b_k,c_k$, and possibly via the elementary function $\mathcal{E}(x)$ in the equation~\eqref{inversion2}.  This relation is complemented by the $g-1$ conditions $\boldsymbol{u}\in\Theta_1$
\begin{equation}
\sigma(\boldsymbol{u})=0,\quad \frac{\partial^j }{\partial u_g^j} \sigma(\boldsymbol{u})=0 \quad \forall\,j=1,\ldots, g-2 \ . \label{thirdcond1-2}
\end{equation}
From these relations one can find (numerically) the functions $u_1=u_1(t),\ldots,u_g=u_g(t)$ and plug these into~\eqref{matprev} to find $x=x(t)$.

\section{ Computer algebra supporting the method}~\label{sec:coalsume}
Presently effective means of computer algebra are developed to execute the above claimed program of integral inversion. We will consider for this, e.g., the Maple/algcurves code.

\subsection{Riemann period matrix and winding vectors}
For a given curve of genus $g$ we compute first the period matrices $(2\omega,2\omega')$ and $\tau=\omega^{-1}\omega'$ by means of the Maple/algcurves code. From that we determine the winding vectors, i.e. the columns of the inverse matrix,
\begin{equation}
(2\omega)^{-1}=(\boldsymbol{U}_1,\ldots, \boldsymbol{U}_g) \ . \label{windvec}
\end{equation}

\subsection{Homology basis} In our analysis we used a specific homology basis for a hyperelliptic curve (see Fig.\ref{figure-1}). That was done just to clarify the approach.
But the result should be independent of the choice of the homology basis. It is possible to perform all the calculations without making an explicit plot of the homology basis and to use that one given in the Tretkoff-Tretkoff construction that is programmed in the Maple/algcurves. Nevertheless, in certain cases we should know the correspondence between the branch points and half-periods in the homology basis used by Maple/algcurves. In particular the proposed method of inversion supposes the knowledge of this correspondence. One can find this correspondence using the generalized Bolza formulae. That can be done as follows.

We first find all nonsingular odd characteristics by direct computation of all odd $\theta$-constants. According to Table \ref{table1} we have two sets $B_1 \subset \widetilde{\Theta}_{g-1}$ and $B_2 \subset \widetilde{\Theta}_{g-2}$ of nonsingular odd half--periods. For each element of $b_1 \in B_1$ there are $e_{i_1}, \ldots, e_{i_{g-1}} \neq \infty$ such that
\begin{equation}
b_1 = \int_{\infty}^{(e_{i_1},0)} \mathrm{d}\boldsymbol{v}+\ldots+\int_{\infty}^{(e_{i_{g-1}},0)} \mathrm{d}\boldsymbol{v}+\boldsymbol{K}_{\infty} \in \widetilde{\Theta}_{g-1}
\end{equation}
and for each element of $b_2 \in B_2$ there are $e_{i_1}, \ldots, e_{i_{g-2}} \neq \infty$ such that
\begin{equation}
b_2 = \int_{\infty}^{(e_{i_1},0)} \mathrm{d}\boldsymbol{v}+\ldots+\int_{\infty}^{(e_{i_{g-2}},0)} \mathrm{d}\boldsymbol{v}+\boldsymbol{K}_{\infty} \in \widetilde{\Theta}_{g-2} \, .
\end{equation}

\noindent
Using the known values of the winding vectors (\cite{ehkkl11}, Proposition 4.3) one can find the correspondence between the sets $\{e_{i_1},\ldots, e_{i_{g-1}} \}$ and $\{e_{i_1},\ldots, e_{i_{g-2}} \}$ of branch points and the nonsingular odd characteristics $[ (2\omega)^{-1}\left(\boldsymbol{\mathfrak A}_{i_1,\ldots,i_{g-1}} \right) + \boldsymbol{K}_{\infty} ]$ and $[ (2\omega)^{-1}\left(\boldsymbol{\mathfrak A}_{i_1,\ldots,i_{g-2}} \right) + \boldsymbol{K}_{\infty}]$ . Then one can add these characteristics and find the one--to--one correspondence
\begin{equation}
\int_{\infty}^{(e_{i_{g-1}},0)}  \mathrm{d}\boldsymbol{v}  \leftrightarrows \;[\boldsymbol{\mathfrak A}_{i_{g-1}}] ,\quad i=1,\ldots, 2g+2 \,. \label{correspondence}
\end{equation}

\subsection{Second period matrix}
We present the formula that permits to express the second period matices $2\eta,2\eta'$ and $\varkappa$ in terms of the first period matrices $2\omega,2\omega'$, the branch points and the $\theta$-constants
\begin{proposition} \label{kappaprop}
Let $\boldsymbol{\mathfrak A}_{\mathcal{I}_0}+2\omega \boldsymbol{K}_{\infty}$ be an arbitrary even nonsingular half-period corresponding to the $g$ branch points of the set of indices $\mathcal{I}_0=\{i_1,\ldots,i_g\}$.
We define the symmetric $g\times g$ matrices
\begin{equation}
\mathfrak{P}(\boldsymbol{\mathfrak A}_{\mathcal{I}_0}) := \left(\wp_{ij}(\boldsymbol{\mathfrak A}_{\mathcal{I}_0})   \right)_{i,j=1,\ldots,g}
\label{matrixP}
\end{equation}
and the $g\times g$ matrix $H$ which is expressible in terms of the even non-singular $\theta$-constants
\footnote{In the previous work~\cite{ehkkl11} this formula numbered as (3.50) was written in an equivalent but more complicated form with misplaced sign ``-". }
\begin{equation}
H(\boldsymbol{\mathfrak A}_{\mathcal{I}_0}) =  \frac{1}{\theta[\varepsilon]} \left( \theta_{ij}[\varepsilon]
 \right)_{i,j=1,\ldots,g}, \quad\text{where}\quad \quad \varepsilon=[(2\omega)^{-1}\boldsymbol{\mathfrak A}_{\mathcal{I}_0}+\boldsymbol{K}_{\infty}] \ .
\end{equation}
Then the $\varkappa$-matrix is given by
\begin{equation}
\varkappa = - \frac12\mathfrak{P}(\boldsymbol{\mathfrak A}_{\mathcal{I}_0})-\frac12
((2\omega)^{-1})^T
H(\boldsymbol{\mathfrak A}_{\mathcal{I}_0}) (2\omega)^{-1} \ . \label{kappa}
\end{equation}
The half-periods $\eta$ and $\eta'$ of the meromorphic differentials can be represented as
\begin{equation}
\eta = 2 \varkappa \omega, \qquad \eta' = 2 \varkappa \omega' - \frac{\imath\pi}{2}(\omega^{-1})^T \,.
\end{equation}
\end{proposition}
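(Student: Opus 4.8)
The plan is to read off $\varkappa$ directly from the definition \eqref{sigma} of the fundamental $\sigma$-function by comparing second logarithmic derivatives of both sides at a suitable half-period, and then to deduce $\eta,\eta'$ from \eqref{varkappa} together with the Legendre relation \eqref{Legendre}. Taking logarithms in \eqref{sigma} gives $\ln\sigma(\boldsymbol{u})=\ln C+\ln\theta[\boldsymbol{K}_\infty]\big((2\omega)^{-1}\boldsymbol{u};\tau\big)+\boldsymbol{u}^T\varkappa\boldsymbol{u}$. First I would apply the translation formula for theta functions with characteristic to write $\theta[\boldsymbol{K}_\infty]\big((2\omega)^{-1}\boldsymbol{u}\big)=e^{\ell(\boldsymbol{u})}\,\theta\big((2\omega)^{-1}\boldsymbol{u}+\boldsymbol{K}_\infty\big)$ with $\ell$ affine in $\boldsymbol{u}$, so that this exponential prefactor is annihilated by $\partial_{u_i}\partial_{u_j}$. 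The chain rule in $\boldsymbol{v}=(2\omega)^{-1}\boldsymbol{u}$ then gives
$$
-\wp_{ij}(\boldsymbol{u})=2\varkappa_{ij}+\Big[\big((2\omega)^{-1}\big)^{T}\big(\partial_{v_k}\partial_{v_l}\ln\theta\big)\big|_{(2\omega)^{-1}\boldsymbol{u}+\boldsymbol{K}_\infty}(2\omega)^{-1}\Big]_{ij}\,.
$$

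Next I would specialise $\boldsymbol{u}=\boldsymbol{\mathfrak{A}}_{\mathcal{I}_0}$, so that the argument of $\theta$ becomes the half-period $\boldsymbol{w}_0:=(2\omega)^{-1}\boldsymbol{\mathfrak{A}}_{\mathcal{I}_0}+\boldsymbol{K}_\infty$, whose characteristic is $\varepsilon$. Translating once more around $\boldsymbol{w}_0$, $\theta(\boldsymbol{w}_0+\boldsymbol{z})=e^{\widetilde{\ell}(\boldsymbol{z})}\theta[\varepsilon](\boldsymbol{z})$ with $\widetilde{\ell}$ again affine, and using that $\varepsilon$ is even — so that $\theta[\varepsilon]$ is an even function by \eqref{-v} and has vanishing linear term at the origin — one obtains $\big(\partial_{v_k}\partial_{v_l}\ln\theta\big)\big|_{\boldsymbol{w}_0}=\theta_{kl}[\varepsilon]/\theta[\varepsilon]=H_{kl}(\boldsymbol{\mathfrak{A}}_{\mathcal{I}_0})$. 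The assumption that $\boldsymbol{w}_0$ is nonsingular, $\theta[\varepsilon]\neq0$, makes this finite and at the same time ensures $\sigma(\boldsymbol{\mathfrak{A}}_{\mathcal{I}_0})\neq0$, so that $\wp_{ij}(\boldsymbol{\mathfrak{A}}_{\mathcal{I}_0})=\mathfrak{P}_{ij}(\boldsymbol{\mathfrak{A}}_{\mathcal{I}_0})$ is well defined. Substituting back and solving for $\varkappa$ gives exactly \eqref{kappa}. (For a set $\mathcal{I}_0$ of $g$ branch points, $\varepsilon$ is a characteristic of $m=0$ type in \eqref{sumchar}, hence even and, by the Riemann vanishing theorem, automatically nonsingular, so such half-periods always exist.)

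Finally, $\eta$ and $\eta'$ follow by linear algebra. From $\varkappa=\eta(2\omega)^{-1}$ in \eqref{varkappa} one gets $\eta=2\varkappa\omega$ at once. For $\eta'$ I would expand the generalized Legendre relation \eqref{Legendre} blockwise: its $(1,1)$-block gives the symmetry $\omega(\omega')^{T}=\omega'\omega^{T}$, and its $(2,1)$-block gives $\eta'\omega^{T}-\eta(\omega')^{T}=-\tfrac{\imath\pi}{2}1_g$; inserting $\eta=2\varkappa\omega$ and using the symmetry yields $\eta'\omega^{T}=2\varkappa\omega'\omega^{T}-\tfrac{\imath\pi}{2}1_g$, hence $\eta'=2\varkappa\omega'-\tfrac{\imath\pi}{2}(\omega^{-1})^{T}$.

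The main obstacle is the bookkeeping in the second step: carrying out the two successive theta-translations carefully enough to be certain that every exponential prefactor is affine in the relevant variable (so that it disappears under the second derivative), leaving only the Hessian of $\ln\theta[\varepsilon]$ at the origin as the surviving term. This is exactly where the parity of $\varepsilon$ (killing the would-be linear contribution) and its nonsingularity (finiteness of both $H$ and $\wp(\boldsymbol{\mathfrak{A}}_{\mathcal{I}_0})$) are used; once those points are settled, the identity \eqref{kappa} and the formulae for $\eta,\eta'$ are short computations.
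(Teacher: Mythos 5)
Your proof is correct. The paper itself states Proposition~\ref{kappaprop} without giving a proof (the footnote only refers the reader to the earlier work \cite{ehkkl11}), and your argument is exactly the natural derivation within the paper's framework: taking $-\partial_{u_i}\partial_{u_j}\ln$ of the definition \eqref{sigma} gives $\wp_{ij}(\boldsymbol{u})=-2\varkappa_{ij}-\bigl[((2\omega)^{-1})^{T}\,\mathrm{Hess}_{\boldsymbol v}\ln\theta\,(2\omega)^{-1}\bigr]_{ij}$ since the characteristic-translation prefactors are exponentials of affine functions, and at $\boldsymbol{u}=\boldsymbol{\mathfrak A}_{\mathcal I_0}$ the Hessian of $\ln\theta$ reduces to $\theta_{kl}[\varepsilon]/\theta[\varepsilon]$ precisely because the even parity of $\varepsilon$ kills the first-derivative terms and nonsingularity makes everything finite (equivalently $\sigma(\boldsymbol{\mathfrak A}_{\mathcal I_0})\neq 0$, so $\mathfrak P$ is well defined); this is \eqref{kappa}. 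Your extraction of $\eta=2\varkappa\omega$ from \eqref{varkappa} and of $\eta'$ from the $(1,1)$- and $(2,1)$-blocks of the Legendre relation \eqref{Legendre} is also correct, and your side remark that characteristics coming from $g$ finite branch points are $m=0$, hence even and nonsingular by the Riemann vanishing statement quoted in the paper, is consistent with the text.
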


We remark that~\eqref{kappa} represents the natural generalization of the Weierstra{\ss} formulae\footnote{The correspondence between the numeration of branch points $e_i$ and $\theta$-constants $\vartheta_j(0)$ depends on the chosen homology basis}
\begin{equation}
2\eta\omega=-2e_1\omega^2-\frac12 \frac{\vartheta_2''(0)}{\vartheta_2(0)}\,,\quad
2\eta\omega=-2e_2\omega^2-\frac12 \frac{\vartheta_3''(0)}{\vartheta_3(0)}\,,\quad
2\eta\omega=-2e_3\omega^2-\frac12 \frac{\vartheta_4''(0)}{\vartheta_4(0)}\, , \label{weierstrassformulae}
\end{equation}
see e.g. the Weierstra{\ss}--Schwarz lectures, \cite{weierstrass893} p. 44. From~\eqref{weierstrassformulae} follows
\begin{equation}
\omega\eta=-\frac{1}{12}\left(\frac{\vartheta''_2(0)}{\vartheta_2(0)}
+\frac{\vartheta''_3(0)}{\vartheta_3(0)}+\frac{\vartheta''_4(0)}{\vartheta_4(0)}  \right) \ .
\label{omegaeta}
\end{equation} 
Therefore Proposition~\ref{kappaprop} allows the reduction of the variety of moduli necessary for the calculation of the $\sigma$- and $\wp$-functions to the first period matrix. The generalization of \eqref{omegaeta} to arbitrary higher genera algebraic curves has recently been  discussed in~\cite{ksh11} where its role in the construction of invariant generalizations of the Weierstra{\ss} $\sigma$--function to higher genera is elucidated.

For the following Lemma we introduce $g$ vectors
\[ \boldsymbol{E}_j=(1,e_j,\ldots,e_j^{g-1})^T,\quad j\in \mathcal{I}_0 \ .\]

\begin{lemma}~\label{lemma:p-matrix} Let $S_k^{(ij)}$ be elementary symmetric functions of order $k$ built on the $2g-1$ elements $\{e_1,\ldots, e_{2g+1}\} / \{e_i, e_j\}$. Then the following relations are valid
\begin{align}
\boldsymbol{E}_i^T \mathfrak{P}(\boldsymbol{\mathfrak A}_{\mathcal{I}_0}) \boldsymbol{E}_j\equiv\frac{F(e_i,e_j)}{4(e_i-e_j)^2}
\equiv\sum_{n=0}^{g-1} e_i^ne_j^nS^{(ij)}_{2g-2n-1} \ . \label{relation}
\end{align}
Here $F(x,z)$ is the Kleinian 2-polar~\eqref{2polar}.
\end{lemma}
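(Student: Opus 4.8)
The plan is to establish the two identities in \eqref{relation} in turn: the left one, $\boldsymbol E_i^T\mathfrak P(\boldsymbol{\mathfrak A}_{\mathcal I_0})\boldsymbol E_j=F(e_i,e_j)/(4(e_i-e_j)^2)$, by specialising the Klein-type formula \eqref{kleinformula} to branch points, and the right one, $F(e_i,e_j)/(4(e_i-e_j)^2)=\sum_{n=0}^{g-1}e_i^ne_j^nS^{(ij)}_{2g-2n-1}$, by an elementary manipulation of the curve polynomial. Throughout one takes $i\ne j$, since on the diagonal both outer members are singular.

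For the first identity I would argue as follows. Fix $s\in\{1,\dots,g\}$ and consider the degree-$g$ divisor $\mathcal D=\sum_{l\ne s}(e_{i_l},0)+\infty$, whose Abel image based at infinity is $\boldsymbol u_{\mathcal D}=\sum_{l\ne s}\int_{\infty}^{(e_{i_l},0)}\d\boldsymbol u$. This divisor is non-special — by Riemann--Roch this is just the statement that no polynomial in $x$ of degree $\le g-2$ can vanish at the $g-1$ distinct points $e_{i_l}$, $l\ne s$ — so that \eqref{main}, and hence \eqref{kleinformula}, are available with this $\mathcal D$; the point at infinity in $\mathcal D$ causes no trouble, because the differentiation producing \eqref{kleinformula} only touches the finite divisor points, and \eqref{kleinformula} extends to branch-point arguments by continuity. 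Now take $P=(e_{i_s},0)$ and $P_k=(e_{i_l},0)$ with $l\ne s$ in \eqref{kleinformula}. Since $y$ vanishes at every branch point, the right-hand side collapses to $F(e_{i_s},e_{i_l})/(4(e_{i_s}-e_{i_l})^2)$, while the argument of $\wp$ equals $\int_{\infty}^{(e_{i_s},0)}\d\boldsymbol u-\boldsymbol u_{\mathcal D}=2\int_{\infty}^{(e_{i_s},0)}\d\boldsymbol u-\boldsymbol{\mathfrak A}_{\mathcal I_0}$. Because $x-e_{i_s}$ has divisor $2(e_{i_s},0)-2\infty$, the vector $2\int_{\infty}^{(e_{i_s},0)}\d\boldsymbol u$ lies in the period lattice, so this argument is congruent to $-\boldsymbol{\mathfrak A}_{\mathcal I_0}$ modulo periods; as $\boldsymbol{\mathfrak A}_{\mathcal I_0}$ corresponds to a nonsingular even half-period the $\sigma$-function does not vanish there, hence the $\wp_{mn}$ are finite, and being even and lattice-periodic they take at $-\boldsymbol{\mathfrak A}_{\mathcal I_0}$ the same values as at $\boldsymbol{\mathfrak A}_{\mathcal I_0}$. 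This gives $\boldsymbol E_{i_l}^T\mathfrak P(\boldsymbol{\mathfrak A}_{\mathcal I_0})\boldsymbol E_{i_s}=F(e_{i_s},e_{i_l})/(4(e_{i_s}-e_{i_l})^2)$, and letting $s$ run over $\{1,\dots,g\}$ exhausts all pairs $i,j\in\mathcal I_0$.

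For the second identity I would use that \eqref{2polar} is precisely $F(x,z)=2f_e(xz)+(x+z)f_o(xz)$, where $f(x)=y^2=f_e(x^2)+xf_o(x^2)$ is the even/odd splitting of the curve polynomial. Writing $f(x)=4(x-e_i)(x-e_j)R(x)$ with $R(x)=\prod_{k\ne i,j}(x-e_k)=R_e(x^2)+xR_o(x^2)$ and expanding $4(x^2-sx+t)\bigl(R_e(x^2)+xR_o(x^2)\bigr)$ with $s=e_i+e_j$, $t=e_ie_j$, one reads off $f_e(u)=4\bigl[(u+t)R_e(u)-suR_o(u)\bigr]$ and $f_o(u)=4\bigl[(u+t)R_o(u)-sR_e(u)\bigr]$. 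Evaluating at $u=t$ and substituting into $F(e_i,e_j)=2f_e(t)+s f_o(t)$, the $R_o$-contributions cancel and one is left with $F(e_i,e_j)=4(4t-s^2)R_e(t)=-4(e_i-e_j)^2R_e(e_ie_j)$. Finally $[x^{2n}]R(x)=-S^{(ij)}_{2g-2n-1}$ by the standard expansion of $R$ in elementary symmetric functions, so $-R_e(e_ie_j)=\sum_{n=0}^{g-1}(e_ie_j)^nS^{(ij)}_{2g-2n-1}$, which is the asserted equality.

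I expect the only genuine difficulty to be the bookkeeping in the first step: one must check that the auxiliary divisor is really non-special, so that \eqref{main} and \eqref{kleinformula} apply, and that the argument of $\wp$ lands on exactly the prescribed nonsingular even half-period $\boldsymbol{\mathfrak A}_{\mathcal I_0}$ modulo periods, rather than on a singular half-period (where the $\wp_{mn}$ would be infinite) or on a translated representative. Once the half-period arithmetic built from $2(e,0)\sim 2\infty$ and the non-specialness are in place, both identities reduce to the routine computations indicated above.
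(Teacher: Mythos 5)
Your proposal is correct, and it is worth comparing with what the paper actually does, because you prove strictly more. For the left identity the underlying mechanism is the same as in the paper -- specialize Klein's formula to a half-period built from $g$ branch points, where the $y$-terms drop out -- but the routes differ: the paper simply quotes the on-divisor version of the formula, $\sum_{i,j}\wp_{ij}\bigl(\sum_k\int_{P_0}^{P_k}\d\boldsymbol{u}\bigr)x_r^{i-1}x_s^{j-1}=\bigl(F(x_r,x_s)-2y_ry_s\bigr)/\bigl(4(x_r-x_s)^2\bigr)$ with all $P_k$ taken at the branch points of $\mathcal{I}_0$, whereas you derive the same evaluation from \eqref{kleinformula} by inserting the auxiliary non-special divisor $\sum_{l\neq s}(e_{i_l},0)+\infty$ and then using $2\boldsymbol{\mathfrak A}_{i_s}\in\Gamma$ together with evenness and periodicity of $\wp_{ij}$ to land on $\boldsymbol{\mathfrak A}_{\mathcal{I}_0}$; your half-period bookkeeping (non-specialness of the auxiliary divisor, nonvanishing of $\sigma$ at the even nonsingular half-period, continuity in the branch-point limit) is exactly the care this detour requires, and it checks out. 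The more significant difference is the right identity: the paper does \emph{not} prove it -- it states that the equality $F(e_i,e_j)/\bigl(4(e_i-e_j)^2\bigr)=\sum_{n=0}^{g-1}e_i^ne_j^nS^{(ij)}_{2g-2n-1}$ ``can be checked directly for small genera and presents a conjecture for higher genera'' -- whereas your even/odd splitting $f(x)=f_e(x^2)+xf_o(x^2)$, $F(x,z)=2f_e(xz)+(x+z)f_o(xz)$, combined with $f(x)=4(x-e_i)(x-e_j)R(x)$, gives $F(e_i,e_j)=4(4t-s^2)R_e(t)=-4(e_i-e_j)^2R_e(e_ie_j)$ and hence the claimed symmetric-function expansion for every genus; I verified the cancellation of the $R_o$-terms and the sign $[x^{2n}]R=-S^{(ij)}_{2g-2n-1}$, and both are right. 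So your argument not only reproduces the lemma but closes the gap the paper leaves open, turning the conjectured part of \eqref{relation} into a theorem by an elementary polynomial identity.
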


\begin{proof}
The LHS of the relation~\eqref{relation} follows from~\eqref{kleinformula}
\begin{equation}
\sum_{i,j=1}^g\wp_{ij}\left(  \sum_{k=1}^g \int_{P_0}^{P_k} \mathrm{d}\boldsymbol{u}   \right)x_r^{i-1}x_s^{j-1} =\frac{F(x_r,x_s)-2y_ry_s }{4(x_r-x_s)^2}, \quad r\neq s, \quad P_k=(x_k,y_k) \ , \label{LHS_relE}
\end{equation}
where the argument of the $\wp$-function is a non-singular even half period given as an abelian image of $g$ branch points $P_k=(e_{i_k},0)$, $i_1,\ldots,i_g\in \{1,\ldots, 2g + 1\}$. The branch points $e_r$ and $e_s$ on the right hand side of~\eqref{LHS_relE} belong to the chosen subset of $g$ branch points. The RHS of the relation~\eqref{relation} can be checked directly for small genera and presents a conjecture for higher genera.

\end{proof}

Varying the integers $i,j$ along the set $\mathcal{I}_0$ one can obtain from (IV.10) $g(g-1)/2$ equations with respect $g(g-1)/2$ components, $\wp_{g-1,g-1},\ldots, \wp_{11} $ of the matrix 
$\mathfrak{P}(\boldsymbol{\mathfrak{A}}_{\mathcal{I}_0})$ and then solve them by the Kramer rule. Indeed, since the components $\wp_{gi}(\mathcal{I}_0)$ are already known from the solution of the Jacobi inversion problem (II.58) then Eq. (IV.10) can be simplified for every pair $i\neq j \in \mathcal{I}_{p}$ as follows

\begin{equation}
\widehat{\boldsymbol{E}}^T_i \widehat{\mathfrak{P}}(\boldsymbol{\mathfrak{A}}_{\mathcal{I}_0})\widehat{\boldsymbol{E}}_j=
e_i^{g-1}e_j^{g-1}\left(S_1^{(ij)}+ \sum_{k\in \mathcal{I}_0/\{i,j\}}e_k\right)
+ \sum_{n=0}^{g-2}e_i^{n}e_j^n S^{(ij)}_{2g-2n-1}
\end{equation}
where 

\begin{equation}
\widehat{\mathfrak{P}}(\boldsymbol{\mathfrak{A}}_{\mathcal{I}_0})=( \wp_{ij}(\boldsymbol{\mathfrak{A}}_{\mathcal{I}_0}) )_{i,j=1,\ldots,g-1}, \quad \widehat{\boldsymbol{E}}_i=(1,e_i,\ldots,e_i^{g-2})^T \ .
\end{equation}

Examples for genus two, three and four in Sections~\ref{sec:curvesgenus2}-\ref{sec:curvesgenus4} show that the entries in $\widehat{ \mathfrak{P}}$ are polynomials in the branch points $e_i$ (see Eqns.~\eqref{JIPE2},~\eqref{JIPE32} and~\eqref{JIPE4_wp}). We infer that this is true in general but do not present here the general expression of these polynomials.

\subsection{Characteristics} 

We explained above how to find the correspondence between the half-periods and the abelian images of the branch points on the basis of Bolza type formulae. This correspondence is necessary for the description of the real evolution of the system that corresponds to the motion of the divisor points, i.e. functions on the upper bound in the Abel map, over the Riemann surface between certain branch points or between the corresponding half-periods in the Jacobi variety. We intend to reduce a number of cases when complete hyperelliptic integrals are calculated. Finding the above correspondence allows to present initial/final points of the evolution in terms of a half period.

\subsection{Vector of Riemann constants} 

The vector of Riemann constants $\boldsymbol{K}_{\infty}$ enters the derived inversion formulae and can be computed as follows. It was proved that the vector of Riemann constants with the base point fixed in a branch point, e.g. infinity in our considerations, is a half-period. From Table~\ref{table1} for the stratum $\widetilde{\Theta}_0$ follows whether $\boldsymbol{K}_{\infty}$ is even or odd (parity of $m$) and whether it is singular ($m>1$). In accordance with the definition of the $\theta$-divisor it is sufficient to find the even or odd half-period $\boldsymbol{K}_{\infty}$ which satisfies the condition
\[   \theta\left(   \sum_{k=1}^{g-1} \int_{\infty}^{P_k}\mathrm{d} \boldsymbol{v} +\boldsymbol{K}_{\infty}  \right) =0,  \]
where $P_1,\ldots,P_{g-1}$ are $g-1$ arbitrary points on the curve $X_g$.

Alternatively one can use the correspondence between the branch points and the half periods. Among the $2g+2$ characteristics  (\ref{correspondence}) there should be precisely $g$ odd and $g+2$ even characteristics. The sum of all odd characteristics gives the vector of Riemann constants with the base point at infinity. This characteristic will then be singular of  order $ \left[\frac{g+1}{2}\right]$.

\section{Hyperelliptic curve of genus two}\label{sec:curvesgenus2}

We consider a hyperelliptic curve $X_2$ of genus two
\begin{align}
\begin{split}
w^2 & = 4(z-e_1)(z-e_2)(z-e_3)(z-e_4)(z-e_5)\\
& = 4 z^5 + \lambda_4 z^4 + \lambda_3 z^3 + \lambda_2 z^2 + \lambda_1 z + \lambda_0 \, .
\end{split} \label{curve2}
\end{align}
From~\eqref{holomorphicdiff} and~\eqref{meromorphicdiff} the basic holomorphic and meromorphic differentials are
\begin{align}
\mathrm{d}u_1 &=\frac{\mathrm{d}z}{w}\,, & \qquad \mathrm{d}r_1 & = \frac{12 z^3+2\lambda_4z^2+\lambda_3z}{4w}\mathrm{d}z \,, \\
\mathrm{d}u_2 &=\frac{z\mathrm{d}z}{w}\,, & \qquad \mathrm{d}r_2 & = \frac{z^2}{w}\mathrm{d}z \,.
\end{align}
Then the Jacobi inversion problem for the equations
\begin{align}\begin{split}
\int_{\infty}^{(z_1,w_1)}\frac{ \mathrm{d}z}{w}+\int_{\infty}^{(z_2,w_2)}\frac{ \mathrm{d}z}{w}=u_1 \,,\\
\int_{\infty}^{(z_1,w_1)}\frac{z \mathrm{d}z}{w}+\int_{\infty}^{(z_2,w_2)}\frac{z \mathrm{d}z}{w}=u_2\end{split} \label{JIP}
\end{align}
is solved according to~\eqref{JIP1} and~\eqref{JIP2} in the form
\begin{align}\begin{split}
z_1+z_2&=\wp_{22}(\boldsymbol{u}), \quad z_1z_2=-\wp_{12}(\boldsymbol{u}) \,,\\
w_k&= \wp_{222}(\boldsymbol{u})z_k + \wp_{122}(\boldsymbol{u}), \quad k=1,2 \,.
\end{split} \label{SOLJIP}
\end{align}

\subsection{Characteristics in genus two}
The homology basis of the curve is fixed by defining the set of
half-periods corresponding to the branch points. The characteristics of the abelian images of the branch points are defined as
\begin{equation}
[\boldsymbol{\mathfrak A}_i]=\left[\int_{\infty}^{(e_i,0)} \mathrm{d}\boldsymbol{u}\right] = \begin{pmatrix} \boldsymbol{\varepsilon}_i^{'^T} \\ \boldsymbol{\varepsilon}_i^{^T} \end{pmatrix} = \begin{pmatrix} \varepsilon_{i,1}' & \varepsilon_{i,2}' \\ \varepsilon_{i,1} & \varepsilon_{i,2} \end{pmatrix} \,,
\end{equation}
which can be also written as
\[\boldsymbol{\mathfrak A}_i=2\omega  \boldsymbol{\varepsilon}_i+ 2\omega' \boldsymbol{\varepsilon'}_i, \quad i=1,\ldots,6 \,. \]

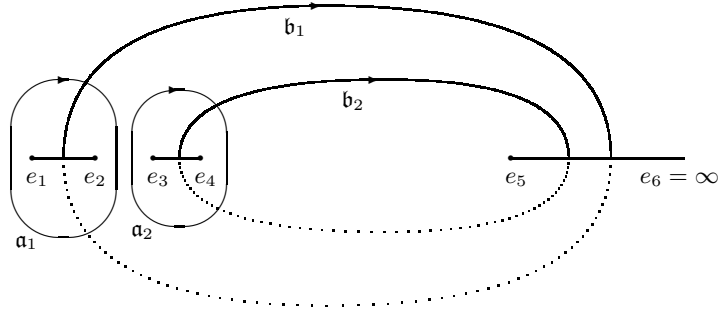
\begin{figure}
\begin{center}
\unitlength 0.7mm \linethickness{0.6pt}
\begin{picture}(150.00,80.00)
\put(9.,33.){\line(1,0){12.}} \put(9.,33.){\circle*{1}}
\put(21.,33.){\circle*{1}} \put(10.,29.){\makebox(0,0)[cc]{$e_1$}}
\put(21.,29.){\makebox(0,0)[cc]{$e_2$}}
\put(15.,33.){\oval(20,30.)}
\put(8.,17.){\makebox(0,0)[cc]{$\mathfrak{ a}_1$}}
\put(15.,48.){\vector(1,0){1.0}}
\put(32.,33.){\line(1,0){9.}} \put(32.,33.){\circle*{1}}
\put(41.,33.){\circle*{1}} \put(33.,29.){\makebox(0,0)[cc]{$e_3$}}
\put(42.,29.){\makebox(0,0)[cc]{$e_4$}}
\put(37.,33.){\oval(18.,26.)}
\put(30.,19.){\makebox(0,0)[cc]{$\mathfrak{a}_2$}}
\put(36.,46.){\vector(1,0){1.0}}
\put(100.,33.00) {\line(1,0){33.}} \put(100.,33.){\circle*{1}}
\put(101.,29.){\makebox(0,0)[cc]{$e_{5}$}}
\put(132.,29.){\makebox(0,0)[cc]{$e_{6}=\infty$}}
\put(59.,58.){\makebox(0,0)[cc]{$\mathfrak{b}_1$}}
\put(63.,62.){\vector(1,0){1.0}}
\bezier{484}(15.,33.00)(15.,62.)(65.,62.)
\bezier{816}(65.00,62.)(119.00,62.00)(119.00,33.00)
\bezier{35}(15.,33.00)(15.,5.)(65.,5.)
\bezier{35}(65.00,5.)(119.00,5.00)(119.00,33.00)
\put(70.,44.){\makebox(0,0)[cc]{$\mathfrak{b}_2$}}
\put(74.00,48.){\vector(1,0){1.0}}
\bezier{384}(37.,33.00)(37.,48.)(76.00,48.)
\bezier{516}(76.00,48.)(111.00,48.00)(111.00,33.00)
\bezier{30}(37.,33.00)(37.,19.)(76.00,19.)
\bezier{30}(76.00,19.)(111.00,19.00)(111.00,33.00)
\end{picture}
\end{center}
\caption{Homology basis on the Riemann surface of the curve $X_2$ with real branch points $e_1 < e_2 <\ldots < e_{6}=\infty$ (upper sheet). The cuts are drawn from $e_{2i-1}$ to $e_{2i}$, $i=1,2,3$. The $\mathfrak{b}$--cycles are completed on the lower sheet (dotted lines).} \label{figure-2}
\end{figure}

In the homology basis given in Figure~\ref{figure-2} the characteristics of the branch points are
\begin{align}
[\boldsymbol{\mathfrak A}_1] & = \frac{1}{2}
\begin{pmatrix} 1 & 0 \\ 0 & 0 \end{pmatrix} \, , & \quad
[\boldsymbol{\mathfrak A}_2] & = \frac{1}{2}
\begin{pmatrix} 1 & 0 \\ 1 & 0 \end{pmatrix} \, , & \quad
[\boldsymbol{\mathfrak A}_3] & = \frac{1}{2}
\begin{pmatrix} 0 & 1 \\ 1 & 0 \end{pmatrix} \\
[\boldsymbol{\mathfrak A}_4] & = \frac{1}{2}
\begin{pmatrix} 0 & 1 \\ 1 & 1 \end{pmatrix}\, , & \quad
[\boldsymbol{\mathfrak A}_5] & = \frac{1}{2}
\begin{pmatrix} 0 & 0 \\ 1 & 1 \end{pmatrix}\, , & \quad
[\boldsymbol{\mathfrak A}_6] & = \frac{1}{2}
\begin{pmatrix} 0 & 0 \\ 0 & 0 \end{pmatrix} \, .
\label{hombas2}
\end{align}
The characteristics of the vector of Riemann constants $\boldsymbol{K}_{\infty}$ yield
\begin{equation}
[\boldsymbol{K}_{\infty}] = [\boldsymbol{\mathfrak{A}}_2]+ [\boldsymbol{\mathfrak{A}}_4] = \frac{1}{2} \begin{pmatrix} 1 & 1 \\ 0 & 1 \end{pmatrix} \, .
\end{equation}

From the above characteristics we build 16 half-periods. Denote 10 half-periods for $i\neq j = 1,\ldots, 5$ that are images of two branch points as
\begin{align}
\boldsymbol{\Omega}_{ij}=2\omega ( \boldsymbol{\varepsilon}_i+\boldsymbol{\varepsilon}_j)+
2\omega' (\boldsymbol{\varepsilon}_i'+  \boldsymbol{\varepsilon}_j') ,\quad  i = 1, \ldots, 5 \,.
\label{even}
\end{align}
Then the characteristics of the $6$ half-periods
\begin{equation}
\left[ (2\omega)^{-1} \boldsymbol{\mathfrak A}_i+\boldsymbol{K}_{\infty} \right] =: \delta_i  ,\quad i= 1, \ldots, 6
\end{equation}
are nonsingular and odd, whereas the characteristics of the $10$ half-periods
\begin{equation}
\left[(2\omega)^{-1} \boldsymbol{\Omega}_{ij}+\boldsymbol{K}_{\infty}\right] =: \varepsilon_{ij}, \quad 1 \leq i < j \leq 5
\end{equation}
are nonsingular and even.

Odd characteristics correspond to partitions $\{6\}\cup \{ 1,\ldots,5 \}$ and
$\{k\}\cup\{i_1,\ldots,i_4,6\}$ for $i_1,\ldots,i_4\neq k$. The first partition from these two corresponds to $\Theta_0$ and the second to $\Theta_1$.

From the solution of the Jacobi inversion problem we obtain for any $i,j =1,\ldots,5$, $i\neq j$
\begin{equation}
e_i+e_j=\wp_{22}(\boldsymbol{\Omega}_{ij}),\quad -e_ie_j=\wp_{12}(\boldsymbol{\Omega}_{ij})\,.
\label{JIPE1}
\end{equation}
From the relation
\[ \wp_{11}(\boldsymbol{u}) = \frac{F(x_1,x_2)-2y_1y_2}{4(x_1-x_2)^2} \]
one can also find
\begin{equation}
e_ie_j(e_p+e_q+e_r)+e_pe_qe_r=\wp_{11}(\boldsymbol{\Omega}_{ij}) \,,
\label{JIPE2}
\end{equation}
where $i$,$j$,$p$,$q$, and $r$ are mutually different.

From~\eqref{JIPE1} and~\eqref{JIPE2} we obtain an expression for the matrix $\varkappa$~\eqref{kappa} that is useful for numeric calculations because it reduces the second period matrix to an expression in the first period matrix and $\theta$-constants, namely, in the case $e_i=e_1,e_j=e_2$,
\begin{equation}
\varkappa=-\frac12 \left(\begin{array}{cc} e_1e_2(e_3+e_4+e_5)+e_3e_4e_5&-e_1e_2\\
-e_1e_2&e_1+e_2 \end{array}\right)-\frac12 {(2\omega)^{-1}}^T \frac{1}{\theta[\varepsilon]} \left( \begin{array}{cc}
\theta_{11}[\varepsilon]&\theta_{12}[\varepsilon]\\
\theta_{12}[\varepsilon]&\theta_{22}[\varepsilon] \end{array} \right)(2\omega)^{-1} \,,
\label{kappa2}
\end{equation}
where the characteristic $\varepsilon$ in the fixed homology basis reads
\[  \varepsilon = [\boldsymbol{\mathfrak{A}}_1]+ [\boldsymbol{\mathfrak{A}}_2]+[\boldsymbol{K}_{\infty}]
=\left(\begin{array}{cc} \frac12& \frac12\\
\frac12&\frac12\end{array}\right).   \]

\subsection{Inversion of a holomorphic integral}

Taking the limit $z_2\rightarrow \infty$ in the Jacobi inversion problem (\ref{JIP}) we obtain
\begin{equation}
\int_{\infty}^{(z,w)} \frac{\mathrm{d}z}{w}=u_1,\quad \int_{\infty}^{(z,w)} \frac{z\mathrm{d}z}{w}=u_2 \,.
\end{equation}
The same limit in the ratio
\begin{equation}
\frac{\wp_{12}(\boldsymbol{u})}{\wp_{22}(\boldsymbol{u})} = - \frac{z_1z_2}{z_1+z_2}
\end{equation}
leads to the Grant-Jorgenson formula (\ref{gr}).
In terms of $\theta$-functions this can be given the form
\begin{equation}
z = \left.-\frac{\partial_{\boldsymbol{U}_1}  \theta[\boldsymbol{K}_{\infty}]((2\omega)^{-1}\boldsymbol{u};\tau)}{\partial_{\boldsymbol{U}_2}  \theta[\boldsymbol{K}_{\infty}]((2\omega)^{-1}\boldsymbol{u};\tau)}\right|_{\theta((2\omega)^{-1}\boldsymbol{u};\tau)=0},
\end{equation}
where here and below $\partial_{\boldsymbol{U}_1} = \sum_{j=1}^g {U_1}_j \frac{\partial}{\partial z_j}$ is the derivative along the direction $\boldsymbol{U}_1$. The ``winding vectors" $\boldsymbol{U}_1$, $\boldsymbol{U}_2$ are the column vectors of the inverse matrix $(2\omega)^{-1}$~\eqref{windvec}.

From~\eqref{gr} we obtain for all finite branch points
\begin{equation}
e_i=-\frac{\sigma_1(\boldsymbol{\mathfrak A}_i ) }{\sigma_2(\boldsymbol{\mathfrak A}_i) }, \quad \text{or equivalently}, \quad e_i=-\frac{\partial_{\boldsymbol{U}_1}\theta[\delta_i]}{\partial_{\boldsymbol{U}_2}\theta[\delta_i]}, \quad i=1,\ldots,5 \label{e2a} \,.
\end{equation}
This formula was mentioned by Bolza \cite{bolza86} (see his Eq.~(6)) for the case of a genus two curve with finite branch points.

The $\zeta$-formula reads

\begin{eqnarray}
-\zeta_1(\boldsymbol{u})+2\mathfrak{n}_1+\frac12 \frac{w_1-w_2}{z_1-z_2}&=&\int_{(e_2,0)}^{(z_1,w_1)}\mathrm{d}r_1(z,w)+
\int_{(e_4,0)}^{(z_2,w_2)}\mathrm{d}r_1(z,w) \ , \nonumber \\
-\zeta_2(\boldsymbol{u})+2\mathfrak{n}_2&=&\int_{(e_2,0)}^{(z_1,w_1)}\mathrm{d}r_2(z,w)+
\int_{(e_4,0)}^{(z_2,w_2)}\mathrm{d}r_2(z,w) \ ,  \label{zetaformula2}
\end{eqnarray}
where $\mathfrak{n}_j=\sum_{i=1}^{2}\eta^{\prime}_{ji}{\varepsilon}'_{i} + \eta_{ji}{\varepsilon}_{i}$. Here the characteristics ${\varepsilon}'_{i}$ and ${\varepsilon}_{i}$ of $\boldsymbol{K}_{\infty}$ are not reduced. Choosing $(z_1,w_1)=(Z,W)$, $(z_2,w_2)=(e_4,0) $ we get from~\eqref{zetaformula2}

\begin{eqnarray}
-\zeta_1\left(\int_{(e_2,0)}^{(Z,W)} \mathrm{d}\boldsymbol{u} + \boldsymbol{K}_{\infty} \right)+2\mathfrak{n}_1 +\frac12\frac{W}{Z-e_4}&=&\int_{(e_2,0)}^{(Z,W)}\mathrm{d}r_1(z,w) \ , \nonumber \\
-\zeta_2\left(\int_{(e_2,0)}^{(Z,W)} \mathrm{d}\boldsymbol{u} + \boldsymbol{K}_{\infty} \right)+2\mathfrak{n}_2&=&\int_{(e_2,0)}^{(Z,W)}\mathrm{d}r_2(z,w) \ .
 \label{zeta2}
\end{eqnarray}

The inversion formula for the integral of the third kind~\eqref{main1-2} is written as

 \begin{align}\begin{split}
& W\int_{P'}^P\frac{1}{x-Z}\frac{\mathrm{d}x}{y}
=-2\left(\boldsymbol{u}^T-{\boldsymbol{u}'}^T\right)
\int_{(e_2,0)}^{(Z,W)} \mathrm{d}\boldsymbol{r}
+\mathrm{ln} \frac{\sigma\left(\boldsymbol{u}-\boldsymbol{v} - \boldsymbol{K}_{\infty} \right)}{\sigma\left(\boldsymbol{u}+\boldsymbol{v} - \boldsymbol{K}_{\infty}  \right)}-\mathrm{ln} \frac{\sigma\left(\boldsymbol{u}'-\boldsymbol{v} - \boldsymbol{K}_{\infty} \right)}{\sigma\left(\boldsymbol{u}'+\boldsymbol{v} - \boldsymbol{K}_{\infty}  \right)}
\end{split} \label{thirdinv2}
\end{align}
with
\[\boldsymbol{v}=\int_{(e_2,0)}^{(Z,W)}\mathrm{d}\boldsymbol{u},\quad \boldsymbol{u}=\int_{\infty}^{P}\mathrm{d}\boldsymbol{u},\quad \boldsymbol{u}'=\int_{\infty}^{P'}\mathrm{d}\boldsymbol{u} \]
and $\boldsymbol{u}\in\Theta_1$, $\boldsymbol{u}'\in\Theta_1$. The integrals $\displaystyle{\int_{(e_2,0)}^{(Z,W)} \mathrm{d}\boldsymbol{r}}$ are given by the formula~\eqref{zeta2}.

In the case when the base point $P'$ is chosen to be a branch point, say $(e_2,0)$ then the final formula takes the form

\begin{eqnarray}
 W\int_{(e_2,0)}^{P}\frac{1}{x-Z}\frac{\mathrm{d}x}{y}
&=& 2\left(\boldsymbol{u}^T-\boldsymbol{\mathfrak{A}}_2^T\right) \left[\boldsymbol{\zeta}(\boldsymbol{v}+\boldsymbol{K}_{\infty})-2( \eta^{\prime}\boldsymbol{\varepsilon}'_{\boldsymbol{K}_{\infty}} + \eta\boldsymbol{\varepsilon}_{\boldsymbol{K}_{\infty}} )-\frac12 \boldsymbol{\mathfrak{Z}}(Z,W)\right]   \nonumber \\ & + &\mathrm{ln} \frac{\sigma\left(\boldsymbol{u}-\boldsymbol{v} - \boldsymbol{K}_{\infty} \right)}{\sigma\left(\boldsymbol{u}+\boldsymbol{v} - \boldsymbol{K}_{\infty}  \right)}-\mathrm{ln} \frac{\sigma\left(\boldsymbol{\mathfrak{A}}_2-\boldsymbol{v} - \boldsymbol{K}_{\infty} \right)}{\sigma\left(\boldsymbol{\mathfrak{A}}_2+\boldsymbol{v}  - \boldsymbol{K}_{\infty} \right)} \ .
 \label{thirdinv2final1}
\end{eqnarray}


\section{Hyperelliptic curve of genus three}\label{sec:curvesgenus3}

We consider also a hyperelliptic curve $X_3$ of genus three with seven real zeros given by
\begin{align}\begin{split}
w^2&=4(z-e_1)(z-e_2)(z-e_3)(z-e_4)(z-e_5)(z-e_6)(z-e_7)\\
&=4z^{7}+\lambda_6z^6+\ldots+\lambda_1z+\lambda_0 \,.
\end{split}
\label{curve3}
\end{align}

The complete set of holomorphic and meromorphic differentials with a unique pole at infinity is
\begin{align}
\mathrm{d}u_1 & = \frac{\mathrm{d}z}{w}\,, & \qquad \mathrm{d}r_1 & = z(20z^4+4\lambda_6z^3+3\lambda_5z^2+2\lambda_4z+\lambda_3)
 \frac{\mathrm{d}z}{4w} \,, \nonumber \\
\mathrm{d}u_2 &= \frac{z\mathrm{d}z}{w}\,, & \qquad \mathrm{d}r_2 & = z^2(12z^2+2\lambda_6z+\lambda_5)\frac{\mathrm{d}z}{4w}\,, \\
\mathrm{d}u_3 &= \frac{z^2\mathrm{d}z}{w}\,, & \qquad  \mathrm{d}r_3 & = \frac{z^3\mathrm{d}z}{w} \,. \nonumber
\end{align}

The Jacobi inversion problem for the equations
\begin{align}\begin{split}
\int_{\infty}^{(z_1,w_1)}\frac{ \mathrm{d}z}{w}
+\int_{\infty}^{(z_2,w_2)}\frac{ \mathrm{d}z}{w}
+\int_{\infty}^{(z_3,w_3)}\frac{ \mathrm{d}z}{w}
=u_1,\\
\int_{\infty}^{(z_1,w_1)}\frac{z \mathrm{d}z}{w}
+\int_{\infty}^{(z_2,w_2)}\frac{z \mathrm{d}z}{w}
+\int_{\infty}^{(z_3,w_3)}\frac{z \mathrm{d}z}{w}
=u_2,\\
\int_{\infty}^{(z_1,w_1)}\frac{z^2 \mathrm{d}z}{w}
+\int_{\infty}^{(z_2,w_2)}\frac{z^2 \mathrm{d}z}{w}
+\int_{\infty}^{(z_3,w_3)}\frac{z^2 \mathrm{d}z}{w}
=u_3\end{split} \label{JIP3}
\end{align}
is solved by
\begin{align}\begin{split}
z_1+z_2+z_3&=\wp_{33}(\boldsymbol{u}), \quad z_1z_2+z_1z_3+z_2z_3=-\wp_{23}(\boldsymbol{u}), \quad z_1z_2z_3= \wp_{13}(\boldsymbol{u})\\
w_k&=\wp_{333}(\boldsymbol{u})z_k^2+\wp_{233}(\boldsymbol{u})z_k + \wp_{133}(\boldsymbol{u}) , \quad k=1,2,3 \,.
\end{split} \label{SOLJIP3}
\end{align}

\begin{figure}
\begin{center}
\unitlength 0.7mm \linethickness{0.6pt}
\begin{picture}(150.00,80.00)
\put(9.,33.){\line(1,0){12.}} \put(9.,33.){\circle*{1}}
\put(21.,33.){\circle*{1}} \put(10.,29.){\makebox(0,0)[cc]{$e_1$}}
\put(21.,29.){\makebox(0,0)[cc]{$e_2$}}
\put(15.,33.){\oval(20,30.)}
\put(8.,17.){\makebox(0,0)[cc]{$\mathfrak{ a}_1$}}
\put(15.,48.){\vector(1,0){1.0}}
\put(32.,33.){\line(1,0){9.}} \put(32.,33.){\circle*{1}}
\put(41.,33.){\circle*{1}} \put(33.,29.){\makebox(0,0)[cc]{$e_3$}}
\put(42.,29.){\makebox(0,0)[cc]{$e_4$}}
\put(37.,33.){\oval(18.,26.)}
\put(30.,19.){\makebox(0,0)[cc]{$\mathfrak{a}_2$}}
\put(36.,46.){\vector(1,0){1.0}}
\put(57.,33.){\line(1,0){10.}} \put(57.,33.){\circle*{1}}
\put(67.,33.){\circle*{1}} \put(57.,29.){\makebox(0,0)[cc]{$e_5$}}
\put(67.,29.){\makebox(0,0)[cc]{$e_6$}}
\put(62.,33.){\oval(18.,21.)}
\put(54.,21.){\makebox(0,0)[cc]{$\mathfrak{a}_3$}}
\put(62.,43.5){\vector(1,0){1.0}}
\put(100.,33.00) {\line(1,0){33.}} \put(100.,33.){\circle*{1}}
\put(101.,29.){\makebox(0,0)[cc]{$e_{7}$}}
\put(132.,29.){\makebox(0,0)[cc]{$e_{8}=\infty$}}
\put(66.,63.){\makebox(0,0)[cc]{$\mathfrak{b}_1$}}
\put(70.,66.){\vector(1,0){1.0}}
\bezier{484}(15.,33.)(15.,66.)(70.,66.)
\bezier{816}(70.,66.)(120.,66.00)(120.,33.)
\bezier{35}(15.,33.)(15.,0.)(70.,0.)
\bezier{35}(70.,0.)(120.,0.)(120.,33.)
\put(70.,55.){\makebox(0,0)[cc]{$\mathfrak{b}_2$}}
\put(74.00,58.){\vector(1,0){1.0}}
\bezier{384}(37.,33.)(37.,58.)(76.,58.)
\bezier{516}(76.,58.)(115.,58.)(115.00,33.00)
\bezier{30}(37.,33.00)(37.,8.)(76.00,8.)
\bezier{30}(76.00,8.)(115.00,8.00)(115.00,33.00)
\put(82.,42.){\makebox(0,0)[cc]{$\mathfrak{b}_3$}}
\put(85,45.){\vector(1,0){1.0}}
\bezier{384}(62.,33.)(62.,45.)(85.,45.)
\bezier{516}(85.,45.)(110.,45.)(110.00,33.00)
\bezier{30}(62.,33.00)(62.,21.)(85.00,21.)
\bezier{30}(85.00,21.)(110.00,21.00)(110.00,33.00)
\end{picture}
\end{center}
\caption{Homology basis on the Riemann surface of the curve
$X_3$ with real branch points $e_1 < e_2 <\ldots <
e_{8}=\infty$ (upper sheet).  The cuts are drawn from $e_{2i-1}$
to $e_{2i}$, $i=1,2,4$.  The $\mathfrak{b}$--cycles are completed on the
lower sheet (dotted lines).} \label{figure-3}
\end{figure}

\subsection{Characteristics in genus three}

Let $\mathfrak{A}_k$ be the abelian image of the $k$-th branch point, namely
\begin{equation}
\boldsymbol{\mathfrak{A}}_k=\int_{\infty}^{(e_k,0)} \mathrm{d}\boldsymbol{u}= 2\omega \boldsymbol{\varepsilon}_k+2\omega' \boldsymbol{\varepsilon}_k', \quad k=1,\ldots,8 \,,
\end{equation}
where $\boldsymbol{\varepsilon}_k$ and $\boldsymbol{\varepsilon}_k'$ are column vectors whose entries $\varepsilon_{k,j}$, $\varepsilon'_{k,j}$ are $\frac{1}{2}$ or $0$ for all $k=1,\ldots,8$, $j=1,2,3$.

The correspondence between the branch points and the characteristics in the fixed homology basis is given as
\begin{align}\begin{split}
[\boldsymbol{{\mathfrak A}}_1]= \frac{1}{2}
\begin{pmatrix} 1 & 0 & 0 \\ 0 & 0 & 0 \end{pmatrix}\, ,\quad
[\boldsymbol{{\mathfrak A}}_2]= \frac{1}{2}
\begin{pmatrix} 1 & 0 & 0 \\ 1 & 0 & 0 \end{pmatrix}\, ,\quad
[\boldsymbol{{\mathfrak A}}_3] = \frac{1}{2}
\begin{pmatrix} 0 & 1 & 0 \\ 1 & 0 & 0 \end{pmatrix}\, , \\
[\boldsymbol{{\mathfrak A}}_4]= \frac{1}{2}
\begin{pmatrix} 0 & 1 & 0 \\ 1 & 1 & 0 \end{pmatrix}\, ,\quad
[\boldsymbol{{\mathfrak A}}_5] = \frac{1}{2}
\begin{pmatrix} 0 & 0 & 1 \\ 1 & 1 & 0 \end{pmatrix}\, ,\quad
[\boldsymbol{{\mathfrak A}}_6]= \frac{1}{2}
\begin{pmatrix} 0 & 0 & 1 \\ 1 & 1 & 1 \end{pmatrix}\, , \\
[\boldsymbol{{\mathfrak A}}_7]= \frac{1}{2}
\begin{pmatrix} 0 & 0 & 0 \\ 1 & 1 & 1 \end{pmatrix}\, ,\quad
[\boldsymbol{{\mathfrak A}}_8]= \frac{1}{2}
\begin{pmatrix} 0 & 0 & 0 \\ 0 & 0 & 0 \end{pmatrix}\, . \end{split}
\label{hombasis_gen3}
\end{align}

The vector of Riemann constants $\boldsymbol{K}_{\infty}$ with the base point at infinity is given in the above basis by the even singular characteristics,
\begin{equation}
[\boldsymbol{K}_{\infty}]=[\boldsymbol{\mathfrak A}_2]+[\boldsymbol{\mathfrak A}_4]+[\boldsymbol{\mathfrak A}_6] = \frac12 \begin{pmatrix} 1 & 1 & 1 \\ 1 & 0 & 1 \end{pmatrix}  \, .
\end{equation}

From the above characteristics the 64 half-periods can be built as follows. If we start with singular even characteristics, then there should be only one such characteristic that corresponds to the vector of Riemann constants $\boldsymbol{K}_{\infty}$. The corresponding partition reads $\mathcal{I}_2\cup \mathcal{J}_2 = \{ \} \cup \{1, 2, \ldots, 8\}$ and the $\theta$-function $\theta(\boldsymbol{K}_{\infty}+\boldsymbol{v})$ vanishes at the origin $\boldsymbol{v}=0$ to the order $m=2$.

The half-periods $\boldsymbol{\Delta}_1=(2\omega)^{-1}\boldsymbol{\mathfrak A}_k+\boldsymbol{K}_{\infty}\in \Theta_1$ correspond to partitions
\begin{equation}
\mathcal{I}_1\cup \mathcal{J}_1 =  \{ k,8 \} \cup \{ j_1,\ldots, j_6 \}, \quad j_1,\ldots,j_6 \notin \{8,k\}
\end{equation}
and the $\theta$-function $\theta(\boldsymbol{\Delta}_1+\boldsymbol{v})$ vanishes at the origin $\boldsymbol{v}=0$ to the order $m=1$.

We also denote the $21$ half-periods that are images of two branch points
\begin{align}
\boldsymbol{\Omega}_{ij}=2\omega ( \boldsymbol{\varepsilon}_i+\boldsymbol{\varepsilon}_j)+2\omega' (\boldsymbol{\varepsilon}_i'+  \boldsymbol{\varepsilon}_j') ,\quad  i,j = 1,\ldots, 7, i\neq j \ .
\label{odd2}
\end{align}
The half-periods $\boldsymbol{\Delta}_1=(2\omega)^{-1}\boldsymbol{\Omega}_{ij}
+\boldsymbol{K}_{\infty}\in \Theta_2$ correspond to the partitions
\begin{equation}
\mathcal{I}_1\cup \mathcal{J}_1 = \{ i,j \} \cup \{ j_1,\ldots, j_6 \}, \quad j_1,\ldots,j_6 \notin \{i,j\}
\end{equation}
and the $\theta$-function $\theta(\boldsymbol{\Delta}_1 + \boldsymbol{v})$ vanishes at the origin, $\boldsymbol{v}=0$, as before to the order $m=1$.
Therefore the characteristics of the $7$ half-periods
\begin{equation}
\left[(2\omega)^{-1} \boldsymbol{\mathfrak A}_i+\boldsymbol{K}_{\infty}\right] =: \delta_i \, ,\quad i=1,\ldots,7
\end{equation}
are nonsingular and odd as well as the characteristics of the $21$ half-periods
\begin{equation}
\left[(2\omega)^{-1} \boldsymbol{\Omega}_{ij}+\boldsymbol{K}_{\infty}\right] =: \varepsilon_{ij}   \, ,\quad 1\leq i<j\leq 7 \,.
\end{equation}
We finally introduce the $35$ half-periods that are images of three branch points
\begin{align}
\boldsymbol{\Omega}_{ijk}=2\omega ( \boldsymbol{\varepsilon}_i+\boldsymbol{\varepsilon}_j+\boldsymbol{\varepsilon}_k)+ 2\omega' (\boldsymbol{\varepsilon}_i'+  \boldsymbol{\varepsilon}_j'+  \boldsymbol{\varepsilon}_k')\in \mathrm{Jac}(X_g) ,\quad  1\leq i<j<k\leq 7 \,.
\end{align}
The half-periods $\boldsymbol{\Delta}_2=(2\omega)^{-1}\boldsymbol{\Omega}_{ijk}
+\boldsymbol{K}_{\infty}$ correspond to the partitions
\begin{equation}
\mathcal{I}_0\cup \mathcal{J}_0 =  \{ i,j,k,8 \} \cup \{ j_1,\ldots, j_4 \}, \quad j_1,\ldots,j_4 \notin \{i,j,k,8\}\,.
\end{equation}
The $\theta$-function $\theta(\boldsymbol{\Delta}_2+\boldsymbol{v})$
does not vanish at the origin $\boldsymbol{v}=0$.

Furthermore, the $35$ characteristics
\begin{equation}
\varepsilon_{ijk}=  \left[ (2\omega)^{-1} \boldsymbol{\Omega}_{ijk}+\boldsymbol{K}_{\infty} \right],\quad 1\leq i<j<k\leq 7
\end{equation}
are even and nonsingular while the characteristic $[\boldsymbol{K}_{\infty}]$ is even and singular. Altogether we got all $64=4^3$ characteristics classified by the partitions of the branch points.

\subsection{Inversion of a holomorphic integral}

All three holomorphic integrals,
\begin{align}
\int_{\infty}^{(x,w)} \frac{\mathrm{d}z}{w}=u_1,\quad
\int_{\infty}^{(x,w)} \frac{z\mathrm{d}z}{w}=u_2,\quad
\int_{\infty}^{(x,w)} \frac{z^2\mathrm{d}z}{w}=u_3
\end{align}
are inverted by the same formula (\ref{onishi}). Nevertheless, there are three different cases for which one of the variables $u_1,u_2,u_3$ is considered as independent while the remaining two result from solving the divisor conditions $\sigma(\boldsymbol{u})=\sigma_3(\boldsymbol{u})=0$.

Formula (\ref{onishi}) can be rewritten in terms of $\theta$-functions as
\begin{align}
x=-\frac{\partial^2_{\boldsymbol{U}_1,\boldsymbol{U}_3}
\theta[\boldsymbol{K}_{\infty}]((2\omega)^{-1}\boldsymbol{u})
+2(\partial_{\boldsymbol{U}_1}
\theta[\boldsymbol{K}_{\infty}]((2\omega)^{-1}\boldsymbol{u}))
\boldsymbol{e}_3^T\varkappa\boldsymbol{u}}{\partial^2_{\boldsymbol{U}_2,\boldsymbol{U}_3}
\theta[\boldsymbol{K}_{\infty}]((2\omega)^{-1}\boldsymbol{u})
+2(\partial_{\boldsymbol{U}_2}
\theta[\boldsymbol{K}_{\infty}]((2\omega)^{-1}\boldsymbol{u}))
\boldsymbol{e}_3^T\varkappa\boldsymbol{u}}, \label{xtheta}
\end{align}
where $\boldsymbol{e}_3=(0,0,1)^T$. This represents the solution of the inversion problem.

From the solution of the Jacobi inversion problem follows for any $1\leq i<j<k \leq 7$,
\begin{equation}
e_i+e_j+e_k=\wp_{33}(\boldsymbol{\Omega}_{ijk}),\quad
-e_ie_j-e_ie_k-e_je_k=\wp_{23}(\boldsymbol{\Omega}_{ijk}),\quad
e_ie_je_k=\wp_{13}(\boldsymbol{\Omega}_{ijk}) \,.
\label{JIPE31}
\end{equation}
Solving equations~\eqref{relation} we find
\begin{equation}
\wp_{12}(\boldsymbol{\Omega}_{ijk})=-s_3S_1-S_4 \, , \quad \wp_{11}(\boldsymbol{\Omega}_{ijk})=s_3S_2+s_1S_4 \, , \quad \wp_{22}(\boldsymbol{\Omega}_{ijk})=S_3+2s_3+s_2S_1 \,,
\label{JIPE32}
\end{equation}
where the $s_l$ are the elementary symmetric functions of order $l$ of the branch points $e_i,e_j,e_k$ and $S_l$ are the elementary symmetric functions of order $l$ of the remaining branch points $\{1,\ldots,7\} \setminus \{i,j,k\}$.

From~\eqref{kappa} using~\eqref{JIPE31} and~\eqref{JIPE32} one can find the expression for the matrix $\varkappa$. To do that we take the half-period $\boldsymbol{\Omega}_{123}$
\begin{align}
\varkappa = -\frac12 \mathfrak{P}(\boldsymbol{\Omega}_{123}) -\frac12 {(2\omega)^{-1}}^T H(\boldsymbol{\Omega}_{123}) (2\omega)^{-1}
\end{align}
with
\begin{equation}
H(\boldsymbol{\Omega}_{123})=\frac{1}{\theta[\varepsilon]}\left(\begin{array}{ccc}
\theta_{11}[\varepsilon]&\theta_{12}[\varepsilon]&\theta_{13}[\varepsilon]\\
\theta_{12}[\varepsilon]&\theta_{22}[\varepsilon]&\theta_{23}[\varepsilon]\\
\theta_{13}[\varepsilon]&\theta_{23}[\varepsilon]&\theta_{33}[\varepsilon]
 \end{array}\right),\quad \varepsilon=\left( \begin{array}{ccc} \frac12&0& \frac12\\  \frac12&0& \frac12  \end{array}   \right) \ .
\end{equation}

For the branch points $e_1,\ldots, e_8$ the expression
\begin{equation}
e_i=-\frac{\partial_{\boldsymbol{U}_1}\left[
\partial_{\boldsymbol{U}_3}+2\boldsymbol{\mathfrak A}^T_i\varkappa \boldsymbol{e}_3
\right]\theta[\boldsymbol{K}_{\infty}]((2\omega)^{-1} \boldsymbol{\mathfrak A}_i;\tau)}
{\partial_{\boldsymbol{U}_2}\left[
\partial_{\boldsymbol{U}_3}+2\boldsymbol{\mathfrak A}^T_i\varkappa \boldsymbol{e}_3
\right]\theta[\boldsymbol{K}_{\infty}]((2\omega)^{-1} \boldsymbol{\mathfrak A}_i;\tau)}
\label{thomae1}
\end{equation}
is valid. Furthermore we have for $i,j=1,\ldots,8$, $i\neq j$
\begin{align}\begin{split}
e_i + e_j & = - \frac{\sigma_2(\boldsymbol{\Omega}_{ij})}
{\sigma_3(\boldsymbol{\Omega}_{ij})} \equiv
\frac{\partial_{\boldsymbol{U}_2} \theta[\varepsilon_{ij}]}
{\partial_{\boldsymbol{U}_3} \theta[\varepsilon_{ij}]}, \\
e_i e_j & = \frac{\sigma_1(\boldsymbol{\Omega}_{ij})}
{\sigma_3(\boldsymbol{\Omega}_{ij})} \equiv
\frac{\partial_{\boldsymbol{U}_2} \theta[\varepsilon_{ij}]}
{\partial_{\boldsymbol{U}_3} \theta[\varepsilon_{ij}]} \, ,
\end{split}\label{thomae2}
\end{align}
and for $i=1,\ldots,7$
\begin{equation}
e_i = -\frac{\sigma_1(\boldsymbol{\mathfrak A}_i)} {\sigma_2(\boldsymbol{\mathfrak A}_i)}
= - \frac{\partial_{\boldsymbol{U}_1} \theta[\delta_{i}]}
{\partial_{\boldsymbol{U}_2} \theta[\delta_{i}]} \, .
\end{equation}

The $\zeta$-formula reads

\begin{align}\begin{split}
&-\zeta_1(\boldsymbol{u})+2\mathfrak{n}_1+\frac12 \frac{w_1(z_1-z_2-z_3)}{(z_1-z_2)(z_1-z_3)}+\text{permutations}\\&\hskip 2cm=\int_{(e_2,0)}^{(z_1,w_1)}\mathrm{d}r_1(z,w)+
\int_{(e_4,0)}^{(z_2,w_2)}\mathrm{d}r_1(z,w)+\int_{(e_6,0)}^{(z_3,w_3)}\mathrm{d}r_1(z,w)\\
&-\zeta_2(\boldsymbol{u})+2\mathfrak{n}_2+\frac12\frac{w_1}{(z_1-z_2)(z_1-z_3)}+\text{permutations} \\&\hskip 2cm =\int_{(e_2,0)}^{(z_1,w_1)}\mathrm{d}r_2(z,w)+
\int_{(e_4,0)}^{(z_2,w_2)}\mathrm{d}r_2(z,w)+\int_{(e_6,0)}^{(z_3,w_3)}\mathrm{d}r_2(z,w)\\
&-\zeta_3(\boldsymbol{u})+2\mathfrak{n}_3=\int_{(e_2,0)}^{(z_1,w_1)}\mathrm{d}r_3(z,w)+
\int_{(e_4,0)}^{(z_2,w_2)}\mathrm{d}r_3(z,w)+\int_{(e_6,0)}^{(z_3,w_3)}\mathrm{d}r_3(z,w) \ ,
\end{split} \label{zetaformula3}
\end{align}
where $\mathfrak{n}_j=\sum_{i=1}^{3}\eta^{\prime}_{ji}{\varepsilon}'_{i} + \eta_{ji}{\varepsilon}_{i}$. Here the characteristics ${\varepsilon}'_{i}$ and ${\varepsilon}_{i}$ of $\boldsymbol{K}_{\infty}$ are not reduced. Choosing $(z_1,w_1)=(Z,W),(z_2,w_2)=(e_4,0)$, $(z_3,w_3)=(e_6,0) $ we get from~\eqref{zetaformula3}
\begin{eqnarray}
-\zeta_1\left(\int_{(e_2,0)}^{(Z,W)} \mathrm{d}\boldsymbol{u} + \boldsymbol{K}_{\infty}\right)+2\mathfrak{n}_1+\frac12\frac{W(Z-e_4+e_6)}{(Z-e_4)(Z-e_6)}&=&\int_{(e_2,0)}^{(Z,W)}\mathrm{d}r_1(z,w) \ , \nonumber \\
-\zeta_2\left(\int_{(e_2,0)}^{(Z,W)} \mathrm{d}\boldsymbol{u} + \boldsymbol{K}_{\infty}\right)+2\mathfrak{n}_2+\frac12\frac{W}{(Z-e_4)(Z-e_6)}&=&\int_{(e_2,0)}^{(Z,W)}\mathrm{d}r_2(z,w) \ , \\
-\zeta_3\left(\int_{(e_2,0)}^{(Z,W)} \mathrm{d}\boldsymbol{u} + \boldsymbol{K}_{\infty} \right)+2\mathfrak{n}_3&=&\int_{(e_2,0)}^{(Z,W)}\mathrm{d}r_3(z,w) \ . \nonumber
 \label{zetaformula3_2}
\end{eqnarray}

The inversion formula for the integral of the third kind~\eqref{main1-2} is written as

\begin{align}\begin{split}
& W\int_{P'}^P\frac{1}{x-Z}\frac{\mathrm{d}x}{y}
=-2\left(\boldsymbol{u}^T-{\boldsymbol{u}'}^T\right)
\int_{(e_2,0)}^{(Z,W)} \mathrm{d}\boldsymbol{r}
+\mathrm{ln} \frac{\sigma\left(\boldsymbol{u}-\boldsymbol{v} - \boldsymbol{K}_{\infty} \right)}{\sigma\left(\boldsymbol{u}+\boldsymbol{v} - \boldsymbol{K}_{\infty}  \right)}-\mathrm{ln} \frac{\sigma\left(\boldsymbol{u}'-\boldsymbol{v} - \boldsymbol{K}_{\infty} \right)}{\sigma\left(\boldsymbol{u}'+\boldsymbol{v} - \boldsymbol{K}_{\infty}  \right)}
\end{split} \label{thirdinv3}
\end{align}
with
\[\boldsymbol{v}=\int_{(e_2,0)}^{(Z,W)}\mathrm{d}\boldsymbol{u},\quad \boldsymbol{u}=\int_{\infty}^{P}\mathrm{d}\boldsymbol{u},\quad \boldsymbol{u}'=\int_{\infty}^{P'}\mathrm{d}\boldsymbol{u} \]
and $\boldsymbol{u}\in\Theta_1$, $\boldsymbol{u}'\in\Theta_1$. The integrals $\displaystyle{\int_{(e_2,0)}^{(Z,W)} \mathrm{d}\boldsymbol{r}}$ are given by the formula~\eqref{zetaformula3_2}.

\section{Hyperelliptic curve of genus four}\label{sec:curvesgenus4}

As the next example we consider the hyperelliptic curve $X_4$ of genus four with nine real zeros given by
\begin{align}\begin{split}
w^2=4\prod_{k=1}^9(z-e_k)=4z^{9}+\lambda_8z^8+\ldots+\lambda_1z+\lambda_0 \,.
\end{split}
\label{curve4}
\end{align}
All calculations in this section will be done without explicit plotting of the homology basis.

The complete set of holomorphic and meromorphic differentials with a unique pole at infinity is
\begin{align}
\mathrm{d}u_1 & = \frac{\mathrm{d}z}{w}\,, & \qquad \mathrm{d}r_1 & = z(\lambda_3+2\lambda_4 z+3\lambda_5 z^2 + 4 \lambda_6 z^3 + 5 \lambda_7z^4+6\lambda_8 z^5 + 28 z^6)
 \frac{\mathrm{d}z}{4w} \,, \nonumber \\
\mathrm{d}u_2 &= \frac{z\mathrm{d}z}{w}\,, & \qquad \mathrm{d}r_2 & = z^2(\lambda_5+2\lambda_6z+3\lambda_7z^2+4\lambda_8z^3+20z^4)\frac{\mathrm{d}z}{4w} \,, \nonumber \\
\mathrm{d}u_3 &= \frac{z^2\mathrm{d}z}{w}\,, & \qquad \mathrm{d}r_2 & = z^3(\lambda_7+2\lambda_8z+12z^2)\frac{\mathrm{d}z}{4w}\,, \\
\mathrm{d}u_4 &= \frac{z^3\mathrm{d}z}{w}\,, & \qquad  \mathrm{d}r_3 & = \frac{z^4\mathrm{d}z}{w} \,. \nonumber
\end{align}


The Jacobi inversion problem for the equations
\begin{align}\begin{split}
\int_{\infty}^{(z_1,w_1)}\frac{ \mathrm{d}z}{w}
+\int_{\infty}^{(z_2,w_2)}\frac{ \mathrm{d}z}{w}
+\int_{\infty}^{(z_3,w_3)}\frac{ \mathrm{d}z}{w}
+\int_{\infty}^{(z_4,w_4)}\frac{ \mathrm{d}z}{w}
=u_1,\\
\int_{\infty}^{(z_1,w_1)}\frac{z \mathrm{d}z}{w}
+\int_{\infty}^{(z_2,w_2)}\frac{z \mathrm{d}z}{w}
+\int_{\infty}^{(z_3,w_3)}\frac{z \mathrm{d}z}{w}
+\int_{\infty}^{(z_4,w_4)}\frac{ z\mathrm{d}z}{w}
=u_2,\\
\int_{\infty}^{(z_1,w_1)}\frac{z^2 \mathrm{d}z}{w}
+\int_{\infty}^{(z_2,w_2)}\frac{z^2 \mathrm{d}z}{w}
+\int_{\infty}^{(z_3,w_3)}\frac{z^2 \mathrm{d}z}{w}
+\int_{\infty}^{(z_4,w_4)}\frac{z^2 \mathrm{d}z}{w}
=u_3,\\
\int_{\infty}^{(z_1,w_1)}\frac{z^3 \mathrm{d}z}{w}
+\int_{\infty}^{(z_2,w_2)}\frac{z^3 \mathrm{d}z}{w}
+\int_{\infty}^{(z_3,w_3)}\frac{z^3 \mathrm{d}z}{w}
+\int_{\infty}^{(z_4,w_4)}\frac{z^3 \mathrm{d}z}{w}
=u_4
\end{split} \label{JIP4}
\end{align}
is solved by
\begin{align}
\begin{split}
\sum^4_{i=1}z_i&=\wp_{44}(\boldsymbol{u}), \quad z_1z_2+z_1z_3+z_2z_3+z_1z_4+z_2z_4+z_3z_4=-\wp_{34}(\boldsymbol{u}), \nonumber \\
& z_1z_2z_3+z_4z_1z_2+z_4z_3z_1+z_4z_3z_2= \wp_{24}(\boldsymbol{u}), \quad z_1z_2z_3z_4=-\wp_{14}(\boldsymbol{u})\\
w_k&=\wp_{444}(\boldsymbol{u})z_k^3+\wp_{344}(\boldsymbol{u})z_k^2+\wp_{244}(\boldsymbol{u})z_k + \wp_{144}(\boldsymbol{u}) , \quad k=1,\ldots,4 \,.
\end{split} \label{SOLJIP4}
\end{align}

\subsection{Characteristics in genus four}

Let $\mathfrak{A}_k$ be the abelian image of the $k$-th branch point, namely
\begin{equation}
\boldsymbol{\mathfrak{A}}_k=\int_{\infty}^{(e_k,0)} \mathrm{d}\boldsymbol{u}= 2\omega \boldsymbol{\varepsilon}_k+2\omega' \boldsymbol{\varepsilon}_k', \quad k=1,\ldots,10 \,,
\end{equation}
where $\boldsymbol{\varepsilon}_k$ and $\boldsymbol{\varepsilon}_k'$ are column vectors whose entries $\varepsilon_{k,j}$, $\varepsilon'_{k,j}$ are $\frac{1}{2}$ or $0$ for all $k=1,\ldots,8$, $j=1,2,3$.

The characteristics of the branch points in the fixed homology basis yield
\begin{align}\begin{split}
[\boldsymbol{{\mathfrak A}}_1]= \frac{1}{2}
\begin{pmatrix} 1 & 0 & 0 & 0 \\ 0 & 0 & 0 & 0 \end{pmatrix}\, ,\quad
[\boldsymbol{{\mathfrak A}}_2]= \frac{1}{2}
\begin{pmatrix} 1 & 0 & 0 & 0 \\ 1 & 0 & 0 & 0 \end{pmatrix}\, ,\quad
[\boldsymbol{{\mathfrak A}}_3] = \frac{1}{2}
\begin{pmatrix} 0 & 1 & 0 & 0 \\ 1 & 0 & 0 & 0 \end{pmatrix}\, , \\
[\boldsymbol{{\mathfrak A}}_4]= \frac{1}{2}
\begin{pmatrix} 0 & 1 & 0 & 0 \\ 1 & 1 & 0& 0 \end{pmatrix}\, ,\quad
[\boldsymbol{{\mathfrak A}}_5] = \frac{1}{2}
\begin{pmatrix} 0 & 0 & 1& 0 \\ 1 & 1 & 0& 0 \end{pmatrix}\, ,\quad
[\boldsymbol{{\mathfrak A}}_6]= \frac{1}{2}
\begin{pmatrix} 0 & 0 & 1& 0 \\ 1 & 1 & 1& 0 \end{pmatrix}\, , \\
[\boldsymbol{{\mathfrak A}}_7]= \frac{1}{2}
\begin{pmatrix} 0 & 0 & 0 & 1 \\ 1 & 1 & 1 & 0 \end{pmatrix}\, ,\quad
[\boldsymbol{{\mathfrak A}}_8]= \frac{1}{2}
\begin{pmatrix} 0 & 0 & 0 & 1 \\ 1 & 1 & 1 & 1 \end{pmatrix}\, ,\quad
[\boldsymbol{{\mathfrak A}}_9]= \frac{1}{2}
\begin{pmatrix} 0 & 0 & 0 & 0 \\ 1 & 1 & 1 & 1 \end{pmatrix}\, , \\
[\boldsymbol{{\mathfrak A}}_{10}]= \frac{1}{2}
\begin{pmatrix} 0 & 0 & 0 & 0  \\ 0 & 0 & 0 & 0 \end{pmatrix}\, .
\end{split}
\label{hombasis_gen4}
\end{align}

The characteristics of the vector of Riemann constants $\boldsymbol{K}_{\infty}$ with the base point at infinity are even and singular as in the genus 3 example
\begin{equation}
[\boldsymbol{K}_{\infty}]=[\boldsymbol{\mathfrak A}_2]+[\boldsymbol{\mathfrak A}_4]+[\boldsymbol{\mathfrak A}_6]+[\boldsymbol{\mathfrak A}_8] = \frac12 \begin{pmatrix} 1 & 1 & 1 & 1 \\  0  & 1 & 0 & 1 \end{pmatrix}  \, .
\end{equation}

From the above characteristics 256 half-periods can be built as follows. If we start with singular even characteristics, then there should be only one such characteristic that corresponds to the vector of Riemann constants $\boldsymbol{K}_{\infty}$. The corresponding partition reads $\mathcal{I}_2\cup \mathcal{J}_2 = \{ \} \cup \{1, 2, \ldots, 10\}$ and the $\theta$-function $\theta(\boldsymbol{K}_{\infty}+\boldsymbol{v})$ vanishes at the origin $\boldsymbol{v}=0$ to the order $m=2$.

The half-periods $\boldsymbol{\Delta}_1=(2\omega)^{-1}\boldsymbol{\mathfrak A}_k+\boldsymbol{K}_{\infty}\in \widetilde{\Theta}_1$ correspond to partitions

\begin{equation}
\mathcal{I}_1\cup \mathcal{J}_1 =  \{ k,10 \} \cup \{ j_1,\ldots, j_8 \}, \quad j_1,\ldots,j_8 \notin \{10,k\}
\end{equation}
and the $\theta$-function $\theta(\boldsymbol{\Delta}_1+\boldsymbol{v})$ vanishes at the origin $\boldsymbol{v}=0$ to the order $m=2$ as follows from the Table~\ref{table1}. The characteristics of the 9 half-periods
\begin{equation}
\left[(2\omega)^{-1} \boldsymbol{\mathfrak A}_i+\boldsymbol{K}_{\infty}\right] =: \delta_i \, ,\quad i=1,\ldots,9
\end{equation}
are singular and even.

Also denote the $36$ half-periods that are images of two branch points
\begin{align}
\boldsymbol{\Omega}_{ij}=2\omega ( \boldsymbol{\varepsilon}_i+\boldsymbol{\varepsilon}_j)+2\omega' (\boldsymbol{\varepsilon}_i'+  \boldsymbol{\varepsilon}_j') ,\qquad  i,j = 1,\ldots, 9, i\neq j \ .
\end{align}
The half-periods $\boldsymbol{\Delta}_2=(2\omega)^{-1}\boldsymbol{\Omega}_{ij}
+\boldsymbol{K}_{\infty}\in \widetilde{\Theta}_2$ correspond to the partitions
\begin{equation}
\mathcal{I}_1\cup \mathcal{J}_1 = \{ i,j \} \cup \{ j_1,\ldots, j_8 \}, \quad j_1,\ldots,j_8 \notin \{i,j\}
\end{equation}
and the $\theta$-function $\theta(\boldsymbol{\Delta}_2 + \boldsymbol{v})$ vanishes at the origin, $\boldsymbol{v}=0$, to the order $m=1$.
Therefore the characteristics of the $36$ half-periods are nonsingular and odd
\begin{equation}
\left[(2\omega)^{-1} \boldsymbol{\Omega}_{ij}+\boldsymbol{K}_{\infty}\right] =: \varepsilon_{ij}   \, ,\quad 1\leq i<j\leq 9 \,.
\end{equation}
We introduce $84$ half-periods that are images of three branch points
\begin{align}
\boldsymbol{\Omega}_{ijk}=2\omega ( \boldsymbol{\varepsilon}_i+\boldsymbol{\varepsilon}_j+\boldsymbol{\varepsilon}_k)+ 2\omega' (\boldsymbol{\varepsilon}_i'+  \boldsymbol{\varepsilon}_j'+  \boldsymbol{\varepsilon}_k')\in \mathrm{Jac}(X_g) ,\quad  1\leq i<j<k\leq 9 \ ,
\end{align}
and $126$ half-periods that are images of four branch points
\begin{align}
\boldsymbol{\Omega}_{ijkl}=2\omega ( \boldsymbol{\varepsilon}_i+\boldsymbol{\varepsilon}_j+\boldsymbol{\varepsilon}_k+\boldsymbol{\varepsilon}_l)+ 2\omega' (\boldsymbol{\varepsilon}_i'+  \boldsymbol{\varepsilon}_j'+  \boldsymbol{\varepsilon}_k' \boldsymbol{\varepsilon}_l')\in \mathrm{Jac}(X_g) ,\quad  1\leq i<j<k\leq 9 \ .
\end{align}
The half-periods $\boldsymbol{\Delta}_3=(2\omega)^{-1}\boldsymbol{\Omega}_{ijk}+\boldsymbol{K}_{\infty}$ correspond to the partitions
\begin{equation}
\mathcal{I}_0\cup \mathcal{J}_0 =  \{ i,j,k,10 \} \cup \{ j_1,\ldots, j_6 \}, \quad j_1,\ldots,j_6 \notin \{i,j,k,10\}\,.
\end{equation}
The $\theta$-function $\theta(\boldsymbol{\Delta}_3+\boldsymbol{v})$ vanishes at the origin $\boldsymbol{v}=0$ to the order $m=1$. The $84$ characteristics
\begin{equation}
\varepsilon_{ijk}=  \left[ (2\omega)^{-1} \boldsymbol{\Omega}_{ijk}+\boldsymbol{K}_{\infty} \right],\quad 1\leq i<j<k\leq 9
\end{equation}
are odd and nonsingular as follows from the Table~\ref{table1}.
The half-periods $\boldsymbol{\Delta}_4=(2\omega)^{-1}\boldsymbol{\Omega}_{ijkl}+\boldsymbol{K}_{\infty}$ correspond to the partitions
\begin{equation}
\mathcal{I}_0\cup \mathcal{J}_0 =  \{ i,j,k,l,10 \} \cup \{ j_1,\ldots, j_5 \}, \quad j_1,\ldots,j_5 \notin \{i,j,k,l,10\}\,.
\end{equation}
The $\theta$-function $\theta(\boldsymbol{\Delta}_4+\boldsymbol{v})$ does not vanish at the origin $\boldsymbol{v}=0$. And the $126$ characteristics
\begin{equation}
\varepsilon_{ijkl}=  \left[ (2\omega)^{-1} \boldsymbol{\Omega}_{ijkl}+\boldsymbol{K}_{\infty} \right],\quad 1\leq i<j<k\leq 9
\end{equation}
are even and nonsingular.

Altogether there are $256=4^3$ characteristics classified by the partitions of the branch points.

\subsection{Inversion of a holomorphic integral}

For the case of genus four the formula~\eqref{matprev} reduces to
\begin{equation}
x=-\left.\frac{\sigma_{144}(\boldsymbol{u})}{\sigma_{244}(\boldsymbol{u})}\right|_{\sigma(\boldsymbol{u})=0, \sigma_4(\boldsymbol{u})=0, \sigma_{44}(\boldsymbol{u})=0,}, \qquad \boldsymbol{u}=(u_1,u_2,u_3,u_4)^T \, . \label{inversiongenus4}
\end{equation}

The four holomorphic integrals,
\begin{align}
\int_{\infty}^{(x,w)} \frac{\mathrm{d}z}{w}=u_1,\quad
\int_{\infty}^{(x,w)} \frac{z\mathrm{d}z}{w}=u_2,\quad
\int_{\infty}^{(x,w)} \frac{z^2\mathrm{d}z}{w}=u_3 \quad
\int_{\infty}^{(x,w)} \frac{z^3\mathrm{d}z}{w}=u_4
\end{align}
are inverted by the formula (\ref{inversiongenus4}).

From the solution of the Jacobi inversion problem~\eqref{JIP1} for any 4 roots $1\leq i<j<k \leq 9$ follows
\begin{eqnarray}
& e_i+e_j+e_k+e_l=\wp_{44}(\boldsymbol{\Omega}_{ijkl}), \quad e_ie_j+e_ie_k+e_je_k+e_ie_l+e_je_l+e_ke_l=-\wp_{34}(\boldsymbol{\Omega}_{ijkl}), \nonumber \\
& e_ie_je_k+e_le_ie_j+e_le_ke_i+e_le_ke_j= \wp_{24}(\boldsymbol{\Omega}_{ijkl}), \quad e_i e_j e_k e_l=-\wp_{14}(\boldsymbol{\Omega}_{ijkl}) \ .
 \label{SOLJIP4_Omegaijkl}
\end{eqnarray}
From equation~\eqref{relation} one finds the remaining components of the function $\wp_{ij}(\boldsymbol{u})$
\begin{eqnarray}
\label{JIPE4_wp}
&& \wp_{11}(\boldsymbol{\Omega}_{ijkl})=s_2S_5+s_4S_3 \, , \quad \wp_{12}(\boldsymbol{\Omega}_{ijkl})=-s_4S_2-s_1S_5 \, , \quad \wp_{13}(\boldsymbol{\Omega}_{ijkl})=S_5+s_4S_1 \,, \nonumber \\
&& \wp_{22}(\boldsymbol{\Omega}_{ijkl})=2S_5 +s_1S_4 + s_3 S_2 + 2 S_1 s_4 \, , \nonumber \\
&& \wp_{23}(\boldsymbol{\Omega}_{ijkl})=-s_3S_1 -S_4 - 2 s_4 \, ,
\wp_{33}(\boldsymbol{\Omega}_{ijkl})=S_3 + s_2 S_1 + 2 s_3 \, ,
\end{eqnarray}
where $s_l$ are the elementary symmetric functions of order $l$ of the branch points $e_i,e_j,e_k,e_l$ and $S_l$ are the elementary symmetric functions of order $l$ of the remaining branch points $\{1,\ldots,9\} \setminus \{i,j,k,l\}$.

From~\eqref{kappa} using~\eqref{SOLJIP4_Omegaijkl} and~\eqref{JIPE4_wp} one can find the expression for the matrix $\varkappa$. To do that consider the half-period $\boldsymbol{\Omega}_{1234}$
\begin{align}
\varkappa = -\frac12 \mathfrak{P}(\boldsymbol{\Omega}_{1234}) -\frac12 {(2\omega)^{-1}}^T H(\boldsymbol{\Omega}_{1234}) (2\omega)^{-1}
\end{align}
with
\begin{equation}
H(\boldsymbol{\Omega}_{1234})=\frac{1}{\theta[\varepsilon]}\left(\theta_{ij}[\varepsilon]\right)_{i,j=1,\ldots,4},\quad \varepsilon=\left( \begin{array}{cccc} \frac12&\frac12& \frac12&\frac12\\  \frac12&0&0& \frac12  \end{array}   \right) \ .
\end{equation}

For the branch points $e_1,\ldots, e_{10}$ for $g>3$ one can also use the formula~\cite{ehkkl11}
\begin{equation}
e_i = -\frac{\sigma_{3}}{\sigma_4}\left( \boldsymbol{\mathfrak A}_{\mathcal{I}'} \right)+\frac{\sigma_{2}}{\sigma_{3}}\left( \boldsymbol{\mathfrak A}_{\mathcal{I}''} \right) \ , \label{ei1}
\end{equation}
where $\mathcal{I}'=\mathcal{I}''\cup\{i\}$ with $\mathcal{I}''=\{i_1,\ldots, i_{2}\}$ and $i\neq 10$, $i\not \in \mathcal{I}''$.

The $\zeta$-formula~\eqref{JIPmeromorphic} for genus four reads

\begin{align}\begin{split}
&-\zeta_1(\boldsymbol{u})+2\mathfrak{n}_1+\frac12 \frac{w_1(z_1^2 - z_1( z_2+z_3+z_4) + z_2z_4 + z_3z_4 + z_2z_3 )}{(z_1-z_2)(z_1-z_3)(z_1-z_4)}+\text{permutations}\\&\hskip 2cm=\int_{(e_2,0)}^{(z_1,w_1)}\mathrm{d}r_1(z,w)+
\int_{(e_4,0)}^{(z_2,w_2)}\mathrm{d}r_1(z,w)+\int_{(e_6,0)}^{(z_3,w_3)}\mathrm{d}r_1(z,w)+\int_{(e_8,0)}^{(z_4,w_4)}\mathrm{d}r_1(z,w)\\
&-\zeta_2(\boldsymbol{u})+2\mathfrak{n}_2+\frac12 \frac{w_1(z_1-(z_2+z_3+z_4))}{(z_1-z_2)(z_1-z_3)(z_1-z_4)}+\text{permutations}\\&\hskip 2cm=\int_{(e_2,0)}^{(z_1,w_1)}\mathrm{d}r_2(z,w)+
\int_{(e_4,0)}^{(z_2,w_2)}\mathrm{d}r_2(z,w)+\int_{(e_6,0)}^{(z_3,w_3)}\mathrm{d}r_2(z,w)+\int_{(e_8,0)}^{(z_4,w_4)}\mathrm{d}r_2(z,w)\\
&-\zeta_3(\boldsymbol{u})+2\mathfrak{n}_3+\frac12\frac{w_1}{(z_1-z_2)(z_1-z_3)(z_1-z_4)}+\text{permutations} \\&\hskip 2cm =\int_{(e_2,0)}^{(z_1,w_1)}\mathrm{d}r_3(z,w)+
\int_{(e_4,0)}^{(z_2,w_2)}\mathrm{d}r_3(z,w)+\int_{(e_6,0)}^{(z_3,w_3)}\mathrm{d}r_3(z,w)+\int_{(e_8,0)}^{(z_4,w_4)}\mathrm{d}r_3(z,w)\\
&-\zeta_4(\boldsymbol{u})+2\mathfrak{n}_4=\int_{(e_2,0)}^{(z_1,w_1)}\mathrm{d}r_4(z,w)+
\int_{(e_4,0)}^{(z_2,w_2)}\mathrm{d}r_4(z,w)+\int_{(e_6,0)}^{(z_3,w_3)}\mathrm{d}r_4(z,w)+\int_{(e_8,0)}^{(z_4,w_4)}\mathrm{d}r_4(z,w) \ ,
\end{split}  \label{zetaformula_g4}
\end{align}
where $\mathfrak{n}_j=\sum_{i=1}^{4}\eta^{\prime}_{ji}{\varepsilon}'_{i} + \eta_{ji}{\varepsilon}_{i}$. Here the characteristics ${\varepsilon}'_{i}$ and ${\varepsilon}_{i}$ of $\boldsymbol{K}_{\infty}$ are not reduced.

\noindent
Choosing $(z_1,w_1)=(Z,W),(z_2,w_2)=(e_4,0)$, $(z_3,w_3)=(e_6,0)$, $(z_4,w_4)=(e_8,0) $ we get from (\ref{zetaformula_g4})

\begin{eqnarray}
-\zeta_1\left(\int_{(e_2,0)}^{(Z,W)} \mathrm{d}\boldsymbol{u} + \boldsymbol{K}_{\infty}\right)+2\mathfrak{n}_1+\frac12\frac{W(Z^2-Z(e_4+e_6+e_8) +e_4e_6 + e_4e_8+e_6e_8) }{(Z-e_4)(Z-e_6)(Z-e_8)}&=&\int_{(e_2,0)}^{(Z,W)}\mathrm{d}r_1(z,w) \ , \nonumber \\
-\zeta_2\left(\int_{(e_2,0)}^{(Z,W)} \mathrm{d}\boldsymbol{u} + \boldsymbol{K}_{\infty}\right)+2\mathfrak{n}_2+\frac12\frac{W(Z-(e_4+e_6+e_8))}{(Z-e_4)(Z-e_6)(Z-e_8)}&=&\int_{(e_2,0)}^{(Z,W)}\mathrm{d}r_2(z,w) \ , \nonumber \\
-\zeta_3\left(\int_{(e_2,0)}^{(Z,W)} \mathrm{d}\boldsymbol{u} + \boldsymbol{K}_{\infty}\right)+2\mathfrak{n}_3+\frac12\frac{W}{(Z-e_4)(Z-e_6)(Z-e_8)}&=&\int_{(e_2,0)}^{(Z,W)}\mathrm{d}r_3(z,w) \ , \\
-\zeta_4\left(\int_{(e_2,0)}^{(Z,W)} \mathrm{d}\boldsymbol{u} + \boldsymbol{K}_{\infty} \right)+2\mathfrak{n}_4&=&\int_{(e_2,0)}^{(Z,W)}\mathrm{d}r_4(z,w) \ . \nonumber
 \label{zetaformula_g4_2}
\end{eqnarray}

With~\eqref{zetaformula_g4_2} the inversion formula for the integral of the third kind~\eqref{main1-2} yields

\begin{align}\begin{split}
& W\int_{P'}^P\frac{1}{x-Z}\frac{\mathrm{d}x}{y}
=-2\left(\boldsymbol{u}^T-{\boldsymbol{u}'}^T\right)
\int_{(e_2,0)}^{(Z,W)} \mathrm{d}\boldsymbol{r}
+\mathrm{ln} \frac{\sigma\left(\boldsymbol{u}-\boldsymbol{v} - \boldsymbol{K}_{\infty} \right)}{\sigma\left(\boldsymbol{u}+\boldsymbol{v} - \boldsymbol{K}_{\infty}  \right)}-\mathrm{ln} \frac{\sigma\left(\boldsymbol{u}'-\boldsymbol{v} - \boldsymbol{K}_{\infty} \right)}{\sigma\left(\boldsymbol{u}'+\boldsymbol{v} - \boldsymbol{K}_{\infty}  \right)}
\end{split} \label{thirdinv4}
\end{align}
with
\[\boldsymbol{v}=\int_{(e_2,0)}^{(Z,W)}\mathrm{d}\boldsymbol{u},\quad \boldsymbol{u}=\int_{\infty}^{P}\mathrm{d}\boldsymbol{u},\quad \boldsymbol{u}'=\int_{\infty}^{P'}\mathrm{d}\boldsymbol{u} \]
and $\boldsymbol{u}\in\Theta_1$, $\boldsymbol{u}'\in\Theta_1$.

\section{Application: Solutions to the geodesic equation in Ho\v{r}ava-Lifshitz black hole space-times}~\label{sec:applHL}


Now we are applying our developed methods of integration of differentials of the first and third kind to the integration of the equations of motion of pointike test particles in Ho\v{r}ava--Lifshitz space--times. This class of space--times provides a quantum gravity space--time model which is power--counting renormalizeble and reduces to General Relativity in the infrared limit, i.e. at large distances. However, it faces the problem to violate Lorentz--symmetry at short distances. The main reason for that is that the model contains only higher order spatial derivatives in the action, while higher order temporal derivatives (which would lead to ghost degrees of freedom) do not appear \cite{horava1,horava2}.


\subsection{Equations of motion}

The metric of a spherically symmetric black hole in Ho\v{r}ava--Lifshitz gravity is given by 
\begin{equation}
ds^2 = N^2(r) dt^2 - f^{-1}(r) dr^2 - r^2 \left(d\theta^2 + \sin^2\theta d\varphi^2\right) \, .
\end{equation}
The Lagrangian for a point particle moving in this space--time reads
\begin{eqnarray}
\label{lagrangian_geo}
\mathcal{L}=g_{\mu\nu}\frac{dx^{\mu}}{ds}\frac{dx^{\nu}}{ds}=\varepsilon
=N^2\left(\frac{dt}{d\tau}\right)^{2}-\frac{1}{f}\left(\frac{dr}{d\tau}\right)^{2}-r^2
\left(\frac{d\theta}{d\tau}\right)^{2}-r^2 \sin^2\theta\left(\frac{d\varphi}{d\tau}\right)^{2}
 \ ,  \end{eqnarray}
where $\varepsilon=0$ for massless particles and $\varepsilon=1$ for massive particles, respectively.

The constants of motion are the energy $E$ and the angular momentum (direction and absolute value) of the particle. We choose $\theta=\pi/2$ to fix the direction of the angular momentum and have
\begin{eqnarray}
E := N^2 \frac{dt}{d\tau} \, , \qquad L_z := r^2 \frac{d\varphi}{d\tau}  \, .
\end{eqnarray}
Using these constants of motion we get
\begin{eqnarray}
\label{eq1}
&& \left(\frac{dr}{d\tau}\right)^2 = \frac{f}{N^2}\left(E^2 - V_{\rm eff}(r) \right)  \ , \\
&& \left(\frac{dr}{d\varphi}\right)^2 = \frac{r^4}{L_z^2} \frac{f}{N^2}\left(E^2 - V_{\rm eff}(r) \right)   \ ,
\end{eqnarray}
with the effective potential
\begin{equation}
\label{potential}
V_{\rm eff}(r) = N^2 \left(\varepsilon + \frac{L_z^2}{r^2}\right)  \, .
\end{equation}

Static spherically symmetric black hole solutions of this theory have been discussed in~\cite{ks,lu_mei_pope,park}. In all cases, the metric functions are of the form
\begin{equation}
N^2(r)=f(r)=1+c_1 r^2 - \sqrt{c_2 r^4 + c_3 r} \ , \label{HLcoefficients}
\end{equation}
where $c_1$, $c_2$ and $c_3$ are constants. Here, we will be interested in the case
\begin{equation}
c_1=-\Lambda_{\rm W} \ , \quad c_2=0 \ , \quad c_3=\alpha^2 \sqrt{-\Lambda_{\rm W}} \ , \label{constantsc1c2c3}
\end{equation}
where $\Lambda_{\rm W}$ is proportional to the negative cosmological constant and $\alpha \ge 4/3^{3/4}$ is an arbitrary parameter \cite{lu_mei_pope}.
The geodesic equations of point particles in the fields given by the solutions in~\cite{ks,park} cannot be treated analytically within the proposed scheme and are discussed elsewhere~\cite{ehkkls}.

The space-time metric with the choice of parameters~\eqref{constantsc1c2c3} considered here is possibly not astrophysically or cosmologically relevant due to the negative sign of the cosmological constant. But it could be interesting in the framework of the AdS/CFT correspondence~\cite{Maldacena, GuKlePo98}. Our motivation to study the motion of test particles in this space-time is more of mathematical character. Furthermore, our results concerning the particle motion in Ho\v{r}ava-Lifshitz space-times exhibit the same mathematical structure as for a number of space-times mentioned in the introduction. 

Using the substitution $q=\sqrt{r}$ we find that radial part of the geodesic equation is of the form
\begin{equation}
\label{geodesic}
 \left(\frac{1}{q}\frac{dq}{d\varphi}\right)^2 = P_k(q)  \ ,
\end{equation}
where $k=8$ with
\begin{equation}
 P_8(q)=\frac{1}{4L_z^2}\left(\varepsilon \Lambda_W q^8 +\varepsilon \alpha (-\Lambda_w)^{1/4} q^5 +(E^2 -\varepsilon+\Lambda_W L_z^2) q^4
+L_z^2 \alpha  (-\Lambda_W)^{1/4} q - L_z^2\right)
\end{equation}
for massive particles, while
$k=4$ for massless particles with
\begin{equation}
\label{4th}
P_4(q)=\frac{1}{4L_z^2}\left((E^2 +\Lambda_W L_z^2)q^4
+L_z^2 \alpha  (-\Lambda_W)^{1/4} q - L_z^2\right)  \ .
\end{equation}
We then find that
\begin{equation}
\label{phir_genk}
 \varphi-\varphi_0=\int_{q_0}^{q} \frac{dq}{q\sqrt{P_k(q)} }  \ .
\end{equation}

\subsection{Light rays}

For light we have $\varepsilon=0$. We write the 4th order polynomial~\eqref{4th} as $P_4(q)=b_4 q^4 + b_1 q + b_0$. Introducing a new variable $x$ with
\begin{equation}
q=\frac{1}{x} + q_4
\end{equation}
where $q_4$ is any root of $P_4(y)$ we find that \eqref{phir_genk} reduces to
\begin{equation}
\varphi-\varphi_0 = - \frac1q_4\int_{x_0}^{x}\frac{dx}{\sqrt{P_3(x)}} + \frac1q_4\int_{x_0}^{x} \frac{dx}{(1+q_4 x)\sqrt{P_3(x)}}
\end{equation}
with 
\begin{equation}
P_3(x)=(b_1+4 b_4 q_4^3)x^3 + 6b_4 q_4^2 x^2 + 4 b_4 q_4 x + b_4 =: a_3 x^3 + a_2x^2 + a_1 x + a_0 \, . 
\end{equation}
Using the substitutions
\begin{equation}
x=\gamma z + \beta \, , \qquad \gamma= \sqrt[3]{\frac{4}{a_3}} \, , \qquad \beta=-\frac{a_2}{3a_3}
\end{equation}
this can be brought to the Weierstra{\ss} form
\begin{equation}
\varphi-\varphi_0 = - \frac{\gamma}{q_4}\left[\int_{z_0}^{z}\frac{dz}{\sqrt{4z^3 -g_2 z-g_3}}  + \int_{z_0}^{z} \frac{dz}{(1+q_4 (\gamma z +\beta))\sqrt{4z^3-g_2 z + g_3}}  \right] \label{eqP3}
\end{equation}
with
\begin{equation}
 g_2=-\sqrt[3]{\frac{4}{a_3}}\left(\frac{3a_1a_3-a_2^2}{3a_3}\right) \ \ , \ \ g_3=-a_0 + \frac{a_1 a_2}{3a_3} - \frac{2a_2^3}{27a_3^2} \, .
\end{equation}

In order to invert the elliptic integrals we introduce $\nu$ such that
$\nu-\nu_0=\int^z_{z_0}\frac{dz^\prime}{\sqrt{4{z^\prime}^3 - g_2 z^\prime - g_3}}$. Then $z=\wp(v)$, where $v=\nu-\nu_0-\int^\infty_{z_0}\frac{dz}{\sqrt{4{z}^3 - g_2 z - g_3}}$. Using $\wp^\prime(v)=\sqrt{4\wp^3(v)-g_2\wp(v)-g_3}$ equation~\eqref{eqP3} can be rewritten in the form
\begin{eqnarray}
\varphi-\varphi_0 = -\frac{\gamma}{q_4}\left[
\int_{v_0}^{v} dv^\prime + \frac{1}{q_4\gamma} \int_{v_0}^{v} \frac{dv^\prime}{ \wp(v^\prime) - \wp(v_\wp) }  \right] \label{eqP3_2}  \ ,
\end{eqnarray}
which becomes
\begin{equation}
\varphi-\varphi_0  =  -\frac{\gamma}{q_4}\left[
v - v_0 + \frac{1}{q_4\gamma} \frac{1}{\wp^\prime(v_\wp)}
\Biggl( 2\zeta(v_\wp)(v-v_{0}) + \ln\frac{\sigma(v-v_\wp)}{\sigma(v+v_\wp)}
- \ln\frac{\sigma(v_{0}-v_\wp)}{\sigma(v_{0}+v_\wp)} \Biggr)  \right] \label{solk=4}   \ ,
\end{equation}
where $v_\wp$ is defined by $\wp(v_\wp)=-\frac{1+ q_4\beta}{q_4\gamma}$.
The  solution~\eqref{solk=4} gives $\varphi=\varphi(v)$ and the inversion yields $v=v(\varphi)$. We can then find $z(\varphi)$ and, thus, $r(\varphi)$, by substituting $v$ into $z=\wp(v)$.

\subsection{Motion of massive particles}

For massive particles we have $\varepsilon = 1$ and, thus, $k=8$. Introducing the new coordinate $z$ through
\begin{equation}
q = \frac{1}{z} + q_8 \ ,
\end{equation}
where $q_8$ is any root of $P_8(q)$ we find from~\eqref{phir_genk}
\begin{equation}
 \varphi-\varphi_0=-\frac1q_8\int_{z_0}^{z}\frac{z^2 dz}{\sqrt{P_7(z)}}  + \frac{1}{q_8^2}\int_{z_0}^{z} \frac{z dz}{\sqrt{P_7(z)}}
- \frac{1}{q_8^3}\int_{z_0}^{z}\frac{dz}{\sqrt{P_7(z)}} + \frac{1}{q_8^4}\int_{z_0}^{z} \frac{dz}{(q_8^{-1} + z)\sqrt{P_7(z)}}   \ . \label{phiz_P7}
\end{equation}

The curve $w^2 = P_7(z)$ is a hyperelliptic curve of genus $g=3$. We then introduce $v$ such that
$v-v_0=\int^z_{z_0}\frac{dz^\prime}{\sqrt{P_7(z^\prime)}}$. The solution of this integral is $z(v)=-\frac{\sigma_{13}(\boldsymbol{u})}{\sigma_{23}(\boldsymbol{u})}$ (see Eq.~\eqref{onishi}), where
\begin{equation}
\boldsymbol{u}=\boldsymbol{\mathfrak A}_i+\left( \begin{array}{c} v - v_0 \\  f_1(v - v_0) \\ f_{2}(v - v_0) \end{array}   \right) ,\quad f_1(0)=f_2(0)=0 \ , \label{u1u2u3}
\end{equation}
and where the functions $f_1(v - v_0)$ and $f_2(v - v_0)$ can be found from the conditions $\sigma(\boldsymbol{u})=0$ and $\sigma_3(\boldsymbol{u})=0$. Also $z_{0}$ is chosen as a branch point of the polynomial $\mathcal{P}_7(z)$ which defines the half--integer characteristic $\boldsymbol{\mathfrak A}_i$~\cite{ehkkl11}.

Through a comparison with the $u_i = \int du_i$ from \eqref{holomorphicdiff} we obtain from \eqref{phiz_P7}
\begin{equation}
\varphi - \varphi_0 = - \frac{1}{q_8} f_{2}(v - v_0) + \frac{1}{q_8^2}f_{1}(v - v_0) - \frac{1}{q_8^3}(v - v_0) + \frac{1}{q_8^4} \int_{z_0}^{z} \frac{dz}{(q_8^{-1} + z)\sqrt{P_7(z)}}   \, . \label{phiz_P7_2}
\end{equation}
Here the last differential in the equation above is of the third kind and was discussed in section~\ref{subsec:main1} for arbitrary genus of the underlying polynomial curve (see equation~\eqref{main1-2}). Here $Z=-q_8^{-1}$, $W=\sqrt{P_7(Z)}$.
Then the solution of~\eqref{phiz_P7} is
\begin{eqnarray}
\varphi - \varphi_0 & = & - \frac{1}{q_8}f_{2}(v - v_0) + \frac{1}{q_8^2}f_{1}(v - v_0)
- \frac{1}{q_8^3}(v - v_0) \nonumber \\
 & & + \frac{1}{q_8^4 W} \left[ 2 \left(\int_{z_{0}}^{z} d\boldsymbol{u}\right)^T \left( \boldsymbol{\zeta} \left( \int_{(e_2,0)}^{(Z,W)} \mathrm{d} \boldsymbol{u} + \boldsymbol{K}_\infty  \right) - 2( \boldsymbol{\eta}^{\prime}\boldsymbol{\varepsilon}^\prime + \boldsymbol{\eta}\boldsymbol{\varepsilon} )  - \frac12 \boldsymbol{\mathfrak{Z}}(Z,W)  \right) \right. \nonumber \\ 
& & \left. + \ln\frac{\sigma\left(  W_-(z)  \right)}{\sigma\left( W_+(z) \right)}
-  \ln \frac{\sigma\left(  W_-(z_0)  \right)}{\sigma\left( W_+(z_0) \right)}    \right]   \ . \label{phiz_P7_4}
\end{eqnarray}
where $W_{\pm}(z) = \int^{z}_{\infty}{d\boldsymbol{u}} \pm  \int_{(e_2,0)}^{(Z,W)} \mathrm{d} \boldsymbol{u} - \boldsymbol{K}_\infty $.

The solution~\eqref{phiz_P7_4} represents a generalization of the case of genus 1 given in~\eqref{solk=4}.


\section{Conclusion and outlook}\label{sec:conclusions}


In this paper we developed the inversion of general hyperelliptic integrals of the first, second and third kind. Besides that, in \eqref{JIPmeromorphic} we explicitly solved the integration of meromorphic differentials in terms of the $\boldsymbol{\zeta}$--function, the $\boldsymbol{\mathfrak{Z}}$--vector and the half--periods $\boldsymbol{\eta}$ and $\boldsymbol{\eta^\prime}$. Moreover, we provided a method which reduces the number of periods which need to be calculated explicitly. We pointed out that computer algebra should be used for the calculation of the period matrices in any arbitrary basis which provides a quick and convenient method for the calculation of the matrix $\varkappa$ and the meromorphic half--periods. For this one needs to know the components of the matrix $\wp_{ij}$ which can be easily calculated with the help of Lemma~\ref{lemma:p-matrix}.

As a first example we applied this method for solving the geodesic equation in particular cases of Ho\v{r}ava--Lifshitz black hole space--times. We considered special cases related to underlying algebraic curves of genus one and three and presented the associated analytical solutions for the geodesic equations of massless and massive test particles. Other examples where this method will be applied are geodesics in Myers--Perry space--times \cite{MyersPerry86} and in black ring space--times \cite{EmparanReall02,PomeranskySenkov06}. 

When trying to solve the geodesic equation for general Ho\v{r}ava--Lifshitz black hole space--times, that is, for $c_2 \neq 0$ in \eqref{HLcoefficients}, there is no way to get rid of the square root. In this case one has to square the whole equation, thus arriving at a differential equation which is based on a {\it quartic} algebraic curve. Similarly, quartic problems also appear, for instance, for the geodesic motion in string theory inspired space--times such as Gau{\ss}--Bonnet space--times \cite{BoulwareDeser85}, as well as for the motion of charged particles in the space--time of the regular black hole given by Ayon-Beato and Garcia \cite{AyonBeatoGarcia98,Garciaetal11}. 

\section*{Acknowledgements} The authors would like to thank Yu. Fedorov, P. Richter and E. Hackmann for fruitful discussions. V.K. and P.S. acknowledge the financial support of the German Research Foundation DFG, and V.E. acknowledges gratefully financial support from the  Hanse--Wissenschaftskolleg (Institute for Advanced Study) in Delmenhorst as well as its hospitality. C.L. thanks the center of excellence QUEST for support.

\providecommand{\bysame}{\leavevmode\hbox to3em{\hrulefill}\thinspace}

\end{document}